\def\psfancypar#1#2{\begingroup\def\par{\endgraf\endgroup\lineskiplimit=0pt}
               \setbox2=\hbox{\large\sc #2}
%              \showthe\lht2\showthe\baselineskip
               \newdimen\tmpht \tmpht \ht2 \advance\tmpht by \baselineskip
%              \showthe\tmpht
% Changed 10/2/89 rhr - t-bol -> Times-Bold for dvi2ps to dvips conversion
%              \font\hhuge=t-bol at \tmpht
               \font\hhuge=Times-Bold at \tmpht
               \setbox1=\hbox{{\hhuge #1}}
%              \showthe\ht1 
               \count7=\tmpht \count8=\ht1
%  note that all this BS is necessary since TeX only does integer
%  divides and rounds all results
               \divide\count8 by 1000 \divide\count7 by \count8 
%               \showthe\count7
               \tmpht=.001\tmpht\multiply\tmpht by \count7 
%               \showthe\tmpht
%              \font\hhuge=t-bol at \tmpht
               \font\hhuge=Times-Bold at \tmpht
               \setbox1=\hbox{{\hhuge #1}}
               \noindent
                \hangindent1.05\wd1
               \hangafter=-2 {\hskip-\hangindent
               \lower1\ht1\hbox{\raise1.0\ht2\copy1}%
                \kern-0\wd1}\copy2\lineskiplimit=-1000pt}
\newcommand{\etabf}{\mbox{${\bf\eta}$}}
\newcommand{\E}{\mbox{{\rm E}}}
\newcommand{\abf}{\mbox{${\bf a}$}}
\def\boxit#1{\vbox{\hrule\hbox{\vrule\kern3pt
        \vbox{\kern3pt#1\kern3pt}\kern3pt\vrule}\hrule}}
\def\reals{ { {\rm  I \kern-0.15em R }  } }
\def\complex{ {\,{{\rm C} \kern-0.50em \raise0.20ex {  |}}\, }}
\def\etabf{\hbox{\boldmath$\eta$\unboldmath}}
\def\mubf{\hbox{\boldmath$\mu$\unboldmath}}
\def\phibf{\hbox{\boldmath$\phi$\unboldmath}}
\def\Sigmabf{\hbox{$\bf \Sigma$}}
\def\Lambdabf{\mbox{$ \bf \Lambda $}}
\def\abf{{\bf a}}
\def\dbf{{\bf d}}
\def\ebf{{\bf e}}
\def\gbf{{\bf g}}
\def\hbf{{\bf h}}
\def\nbf{{\bf n}}
\def\ubf{{\bf u}}
\def\wbf{{\bf w}}
\def\xbf{{\bf x}}
\def\ybf{{\bf y}}
\def\xbf{{\bf x}}
\def\ybf{{\bf y}}
\def\Abf{{\bf A}}
\def\Dbf{{\bf D}}
\def\Fbf{{\bf F}}
\def\Gbf{{\bf G}}
\def\Hbf{{\bf H}}
\def\Ibf{{\bf I}}
\def\Pbf{{\bf P}}
\def\Rbf{{\bf R}}
\def\Ubf{{\bf U}}
\def\Vbf{{\bf V}}
\def\Wbf{{\bf W}}
\def\Xbf{{\bf X}}
\def\Cc{{\cal C}}
\def\Hc{{\cal H}}
\def\Ic{{\cal I}}
\def\Kc{{\cal K}}
\def\Nc{{\cal N}}
\def\Rc{{\cal R}}
\def\Sc{{\cal S}}
\def\Vc{{\cal V}}
\def\Wc{{\cal W}}
\def\be{\vskip .3cm \begin{equation}}
\def\ee{\end{equation} \vskip .4cm \noindent}
\def\defeq{{\stackrel{\Delta}{=}}}
\newcommand{\R}{\mbox{$\hat {\bf R}_{N}$}}
\def\Rxx{\Rbf_{\ssstyle X\kern-.1em X}}
\let\ssstyle=\scriptscriptstyle
\def\Kout{\setbox1=\hbox{\Huge\bf K}\hbox to
1.05\wd1{\hspace{.05\wd1}% [arxiv_v2: inline-PS \special stripped, 292 chars]}}
\def\Sout{\setbox1=\hbox{\Huge\bf S}\hbox to 1.05\wd1{\hspace{.05\wd1}% [arxiv_v2: inline-PS \special stripped, 292 chars]}}

  \ifx\LabelFigloaded\MYundefined\relax
  \else
    \message{ !!! labelfig.tex ALREADY loaded !!!}
   \fi

  \def\LabelFigloaded{\relax}% now loaded

  %%% Beyond this point we use some control sequences
  %%% that are 'padlocked'  with @ of category 11

  \chardef\LabelFigCatAt\the\catcode`\@
  \catcode`\@=11

 %%temporarily suppress Plain's logging of allocations
 \let\LabelFigwlog@ld\wlog
 \def\wlog#1{\relax}

 \ifx\\\MYundefined@
    \let\\\relax
 \fi

 %%%%%%% end of preamble

  \def\ms@g{\immediate\write16}

 \def\N@wif{\csname newif\endcsname }
 \def\Temp@ {\N@wif\ifIN@}
 \ifx\INN@\MYundefined@
    \else \let\Temp@\relax
 \fi
 \Temp@

  %%% \IN@0#1@#2@ : Is 1st exp of #1 in 1st exp of #2 ??
   %% Answer in \ifIN@
  %\newif\ifIN@ %% conserve if's
  \def\IN@{\expandafter\INN@\expandafter}
  \long\def\INN@0#1@#2@{\long\def\NI@##1#1##2##3\ENDNI@
    {\ifx\m@rker##2\IN@false\else\IN@true\fi}%
     \expandafter\NI@#2@@#1\m@rker\ENDNI@}
  \def\m@rker{\m@@rker}
 
  %%%  \SPLIT@0#1@#2@  :  Split 1st exp of #2 at 1st exp of #1
   %%  \Initialtoks@ , \Terminaltoks@ will contain pieces
  \newtoks\Initialtoks@  \newtoks\Terminaltoks@
  \def\SPLIT@{\expandafter\SPLITT@\expandafter}
  \def\SPLITT@0#1@#2@{\def\TTILPS@##1#1##2@{%
     \Initialtoks@{##1}\Terminaltoks@{##2}}\expandafter\TTILPS@#2@}

   %%% \Shifted@@#1#2#3 puts #3 in \hbox
   %% leaves basepoint as is
   %% then translates ink only by dims #1,#2
   %% with Postscript convention
   %% For simplicity use only on scrunched boxes
 \def\Shifted@@#1#2#3{\setbox0=\hbox{#3}%
   \raise -\dp0\vbox {\kern-#2%
       \hbox {\kern#1\unhbox0\kern-#1}%
           \kern#2}}

 \newcount\gridcount
 \newbox\auxGridbox@ \newbox\hGridbox@ \newbox\vGridbox@
 \newbox\Labelbox@ \newbox\auxLabelbox@
 \newbox\Coordinatebox@
 \newtoks\Labeltoks@
 \newdimen\Wdd@ \newdimen\Htt@
 \newdimen\Wddd@ \newdimen\Httt@
 
 \def\Wr@{\immediate\write16}

 \newdimen\GL@wd%% grid-line width
 \GL@wd=.02pt
 \def\GridLineWidth#1{\GL@wd=#1}

 \def\gobble#1{}
 \def\EdgeErr@{\Wr@{}%
      \Wr@{\string\Edges\space argument
      1, 10, 100 or 1000 please\string!}%
      }

 \newcount\Edgect@

 \def\Sweepup#1\endSweepup{}

 \def\SetEdges@{%
    \edef\Zr@@s{\expandafter\gobble\number\Edgect@\empty}%
        %% Now check \Zr@@s is zeros
        \count255=0\Zr@@s\relax
        \ifnum\count255=\z@\else\EdgeErr@\show\tailtest\fi
        %% Now check first digit
        \count255=1\Zr@@s\relax%\showthe\count255
        \ifnum\count255=\Edgect@\relax\else\EdgeErr@\show\leadtest\fi
    \EdgGl@b\edef\Zr@s{\expandafter\gobble\Zr@@s\empty}%\show\Zr@s
    \ifnum\Edgect@>\@ne\relax\EdgGl@b\let\L@Dc\empty
        \else\EdgGl@b\edef\L@Dc{\string.}\fi
    \ifnum\Edgect@>\@ne\relax
        \EdgGl@b\edef\Edgescale@##1{\divide##1 by \Edgect@}%
        \else\EdgGl@b\edef\Edgescale@##1{}\fi
    }

 \def\Edges#1{\Edgect@=#1\relax
     \let\EdgGl@b\global \SetEdges@}

 \Edges{1}%% default

 \def\hhrule{\hrule height \GL@wd\vskip-.\GL@wd}

 \def\hRule@{%
   \advance\gridcount -2%
   \vfil\hhrule\vfil
   \llap{\smash{\raise -2.5pt
     \hbox{\L@Dc\number\gridcount\Zr@s\kern2pt}}}%
   \hhrule
   }

\def\vvrule{\vrule width \GL@wd \kern-\GL@wd}

 \def\vRule@{\advance\gridcount 2%
   \hfil\vvrule\hfil
   \setbox\auxGridbox@=\vbox to 0pt
      {\vskip \Htt@\vskip 2pt
        \hbox to 0pt{\hss\L@Dc\number\gridcount\Zr@s\hss}\vss}%
      \wd\auxGridbox@=0pt \box\auxGridbox@
   \vvrule
   }

 \def\PlaceGrid@@{\gridcount=10 
  \setbox\hGridbox@=\hbox{%
        \hbox{%
             \hskip-.4pt\vrule
             \vbox to \Htt@{%
               \offinterlineskip\parindent=\z@\relax
               \hbox to \Wdd@{\hfil}
               \hRule@\hRule@\hRule@\hRule@
               \vfil\hhrule\vfil}%
             \vrule\hskip-.4pt}
    }%
  \gridcount=0%
  \setbox\vGridbox@=\hbox{%
      \vbox{\offinterlineskip\parindent=0pt\hsize=0pt
         \vskip-.4pt\hrule%
         \hbox to \Wdd@{%
                 \vtop to \Htt@{\vfil}%
                 \vRule@\vRule@\vRule@\vRule@
                 \hfil\vvrule\hfil}%
         \hrule\vskip-.4pt}}%
  \wd\hGridbox@=0pt\ht\hGridbox@=0pt
  \wd\vGridbox@=0pt\ht\vGridbox@=0pt
  \hbox{\box\hGridbox@\box\vGridbox@}%
  }

 \def\LabelsGlobal{\def\LabGl@b{\global}}
 \def\LabelsLocal{\def\LabGl@b{}}
 \LabelsGlobal %% default

 \def\SetLabels#1\endSetLabels{%
   \LabGl@b\Labeltoks@={#1()\\}%
   }

 \LabGl@b\Labeltoks@={()\\}

 \def\ShowGrid{\LabGl@b\let\PlaceGrid@\PlaceGrid@@}
 \def\HideGrid{\LabGl@b\let\PlaceGrid@\relax}
 \def\Grids{\ShowGrid\LabGl@b\let\GridSwitch@\ShowGrid}
 \def\noGrids{\HideGrid\LabGl@b\let\GridSwitch@\HideGrid}

 \noGrids

 \def\bAdjust@@{%
     \setbox\auxLabelbox@=\hbox{\raise \dp\auxLabelbox@
            \box\auxLabelbox@}}
 \def\bAdjust@{\let\vAdjust@\bAdjust@@}

 \def\eAdjust@@{\dimen0=-.5\ht\auxLabelbox@
     \advance\dimen0 by .5\dp\auxLabelbox@
     \setbox\auxLabelbox@=
            \hbox{\raise\dimen0\box\auxLabelbox@}}
 \def\eAdjust@{\let\vAdjust@\eAdjust@@}

 \def\tAdjust@@{%
     \setbox\auxLabelbox@=\hbox{\raise-\ht\auxLabelbox@
            \box\auxLabelbox@}}
 \def\tAdjust@{\let\vAdjust@\tAdjust@@}

 \let\vAdjust@\relax

 \def\lAdjust@{\let\hAdjust@\rlap}
 \def\rAdjust@{\let\hAdjust@\llap}

 \let\hAdjust@\relax\let\vAdjust@\relax

 \def\FetchLabel@#1(#2)#3\\{%
     \IN@0#2@@\ifIN@
        \setbox0=\hbox{\ignorespaces#1#3\unskip}%
        \ifdim\wd0>0pt
           \ms@g{}%
           \ms@g{ !!! Bad label(s)? !!!}%
           \message{ #1(#2)#3}%
        \fi
        \def\LabelMole@##1\endFetchLabel@{%
            \IN@0()\\@##1@%
            \ifIN@\def\Temp@{\FetchLabel@##1\endFetchLabel@}%
            \else\def\Temp@{}%
            \fi
            \Temp@
           }%
     \else
       \ignorespaces#1\unskip
       \setbox\auxLabelbox@=%
         \hbox to 0pt{\hss\ignorespaces\hAdjust@
          {\ignorespaces#3\unskip}\hss}%
       \vAdjust@
       \let\hAdjust@\relax\let\vAdjust@\relax
       \AugmentLabelBox@@{#2}%
       \ht\Labelbox@=0pt\dp\Labelbox@=0pt
       \let\LabelMole@\FetchLabel@%
     \fi\LabelMole@}

 \newtoks\XYSep@ %\XYSep@{*}
 \def\SetXYSeparator#1{%
     \IN@0#1@@\ifIN@\XYSep@{*}%
     \else
     \XYSep@{#1}%
     \fi
     }

 \SetXYSeparator*

 \def\AugmentLabelBox@@#1{%
     \IN@0\the\XYSep@ @#1@\ifIN@
       \SPLIT@0\the\XYSep@ @#1@%
       \setbox\Labelbox@=\hbox to 0pt{%
         \unhbox\Labelbox@
         \Shifted@@{\the\Initialtoks@\Wddd@}%
         {\the\Terminaltoks@\Httt@}%
         {\box\auxLabelbox@}}%
     \else
         \ms@g{}%
         \ms@g{ !!! Bad insertion point. !!!}%
         \message{ (#1\ this point was rejected.)}%
     \fi
    }

 \def\FetchOption@#1[#2]#3\endFetchOption@{%
    \def\temp{#1}%\show\temp
    \ifx\temp\empty
       \Edgect@=#2\relax%\showthe\Edgect@
       \let\EdgGl@b\relax
       \SetEdges@%\def\Edgescale@##1{\divide##1 by \Edgect@\relax}%
       %\show\Edgescale@
       \Cleaner@#3%
       %\showthe\Labeltoks@
    \fi}

 \def\Cleaner@#1[@]{\Labeltoks@{#1}}
     
 \def\PlaceLabels@@{\mathsurround=0pt%\bgroup
     \def\Cr@{\\}%
     \let\L\lAdjust@\let\R\rAdjust@
     \let\B\bAdjust@\let\E\eAdjust@\let\T\tAdjust@
     \expandafter\FetchOption@\the\Labeltoks@[@]\endFetchOption@
     \Wddd@=\Wdd@ \Edgescale@\Wddd@ %\showthe\Edgect@
     \Httt@=\Htt@ \Edgescale@\Httt@
     \expandafter\FetchLabel@\the\Labeltoks@\endFetchLabel@
     \box\Labelbox@%\egroup
     }%

 \let \PlaceLabels@\PlaceLabels@@

 \def\AffixLabels#1{\setbox\Coordinatebox@=\hbox{#1}%
      \Wdd@=\wd\Coordinatebox@ \Htt@=\ht\Coordinatebox@
      \advance\Htt@ \dp\Coordinatebox@
      \hbox{\copy\Coordinatebox@\kern-\Wdd@ 
           \Shifted@@{0pt}{-\dp\Coordinatebox@}%
           {\PlaceLabels@\PlaceGrid@}%
           \kern\Wdd@}%
      \GridSwitch@ %% next grid hidden
      \LabGl@b\Labeltoks@{()\\}%
      }
 
 %%% Restoring
   \let\wlog\LabelFigwlog@ld   %%restore logging
   \catcode`\@=\LabelFigCatAt  %%12 or 13

 %% end of  labelfig.tex

 %%%%%%%%%%%%%%%%%%%%%%%%%%%%%%%%%%%%%%%%%%%%%%%%%%%%%%%%%%%%%
 %%%%%%%%%%%%%%%%%%%%%%%%%%%%%%%%%%%%%%%%%%%%%%%%%%%%%%%%%%%%%
 %%%%%%%%%%%%%%%%%%%%%%%%%%%%%%%%%%%%%%%%%%%%%%%%%%%%%%%%%%%%%
  %% labelfig.doc
 %%%%%%%%%%%%%%%%%%%%%%%%%%%%%%%%%%%%%%%%%%%%%%%%%%%%%%%%%%%%%
  %%
 %%%%%   LABELFIG.TEX FOR LABELLING TEX FIGURE INSERTS %%%%%
  %%
  %%                        DOCUMENTATION
  %%
 %%%%%%%%%%%%%%%%%%%%%%%%%%%%%%%%%%%%%%%%%%%%%%%%%%%%%%%%%%%%%
 
                                By

              Raymond S\'eroul <A18645@FRCCSC21.BITNET>
                                and 
              Laurent Siebenmann <lcs@topo.math.u-psud.fr>
    
              VERSIONS: July 1991, Oct 1991, Jan 1992, July 1992

INTRODUCTION

      This labelling package is intended for TeX users who
rely on non-TeX sources for for their graphics inserts.  It
provides means for adding TeX labels to such inserts with a
minimum of fuss. 

       For most labels, TeX users have in the past found it
reasonably convenient to rely on non-TeX sources. Typical
occasions when an inescapable need for TeX labels seemed to
arise are

 (a) when the graphics program lacks certain exotic or complex
mathematical symbols

 (b) when the very highest typographical quality is wanted for the
labels

 (c) when labels included with the graphics fail to print, 
 and you cannot figure out why (cf. boxedeps.doc).  The labels
 provided by labelfig.tex are 100% portable.

       Since this package first appeared, many users, who in the
past scarcely dreamed of using TeX labels, have come to use
nothing but.  So it is now appropriate to add

Intoxication Warning:  TeX labels may be addictive and expensive. 

     If you have a fast preview you may disagree, and even find
that this package provides an agreeable paste-up environment; see
extra applications at end.

     Note to publishers: It is possible and convenient to ultimately
export the TeX labels produced by labelfig.tex to become an integral
part of the EPS file. This is often desired by a publisher who typically
uses an "upmarket" graphics or page layout program, with which the
staff is skilled in perfecting figures.  See Appendix I for
a recipe.

     The authors are grateful to Patrick Ion of Math Reviews for
helpful comments and encouragement.

BASIC INSTRUCTIONS

    After reading in the macro file using

preview or proof your figure with a coordinate grid printed on
top, by typing the following:

    \ShowGrid  % shows grid  for next figure only
    \AffixLabels{<the graphics insertion>}

Here <the graphics insertion> is what you would type to insert
the graphics object alone without the grid.  This must provide
for the space around it. For example <the graphics insertion>
might well be \BoxedEPSF{MyFigure scaled 700} using the
boxedeps.tex macro package (from same source); this provides a
TeX box containing the encapsulated PostScript insert specified by
the file MyFigure. \AffixLabels{...} provides the grid (supposing
\ShowGrid is present) and later, once you have specified labels
using the grid, it will "tack on" the labels.

     The grid is a sort of (usually elongated) checkerboard of
ten rows and ten columns and its (internal) partitions are by
default numbered  .1, ... ,.9  both horizontally (X-coordinate
running left to right) and vertically (Y-coordinate running bottom
to top).  Thus the points enclosed by the grid correspond to the
points of the unit square in the cartesian "X-Y" plane, the lower
left corner corresponding to the origin (0,0).  By extrapolation,
the full page corresponds to a larger rectangle in the plane.

     These coordinates serve to position labels as follows.
Before the \AffixLabels{...} command type label specifications:

  \SetLabels
   (<X-coordinate>*<Y-coordinate>) <first label> \\
   .
   .
   .
   (<X-coordinate>*<Y-coordinate>)  <last label> \\
  \endSetLabels

Each row specifies one label and is terminated by \\.  In each
row, the position indicator comes first; it is written as a
standard cartesian point except that the X- and Y- coordinates
are separated by * rather than a comma because TeX allows a
comma as decimal point. There are no dimension units to specify
as the unit is the grid itself.

     By default, this cartesian point specifies where the middle
of the baseline of the label will be located.  However if you precede
the point by \L [or \R] the left [or right] edge of the baseline will
be located there. Similarly you may also precede the point by \T, \E,
or \B to vertically align the top equator or bottom of the label box
at the specified point.  This gives nine standard positions of
the label with respect to the insertion point --- corresponding to
the eight principle points of the compas and the center

                     \L\T     \T      \R\T

                     \L\E     \E      \R\E

                     \L\B     \B      \R\B

But this neglects the default "baseline" level of TeX,
giving potentially three more positions

                     \L    <no tag>   \R

For text, the baseline level is often the preferred. Its relation to
the others is variable. It will often coincide with the bottom level,
as happens for "X".  But it is often distinct, as for "g", in which
case you have in all 12 distinct positions rather than 9.

     It is convenient to think of this specification of label
position as attaching the label by a thumb-tack to the coordinate
grid. There are up to twelve positions of the thumb-tack on the
label, while the position of the thumb-tack on the coordinate grid is
arbitrary.  Normally, one choses the position of the thumb-tack on
the label to be the one that is the closest to the item being
labeled.  There are good reasons for this "rule of thumb":

   (a)  It facilitates correct positioning at first try.

   (b)  If the scale of the figure must be altered after labels
have been affixed, the labels have a good chance of remaining well
positioned.

   (c)  The visible grid need not extend beyond the "bounding box"
for the figure, because the best preferred position is always
(at least almost) within the bounding box .

The second reason is particularly important. Indeed it often
happens that scale has to be altered after labelling begins, in
order to either provide space for the labels, or to adjust
proportions between the labels and the figure.  (The size of labels
is unaffected by scaling.)

     Here is an artificial but self-contained test which uses
TeX rules to make a graphics object.

TEST

    Do not skip this!

 %%%%%%%%%%%%%%%%%%  --- cut

 %%%%%% useful ad hoc macros

 %%% \FrameIt#1 frames what it finds in #1
 \def\FrameIt#1{\hbox{\vrule$\vcenter {\hrule\kern3pt%
             \hbox {\kern3pt #1\kern3pt}%
               \kern3pt\hrule}$\relax\vrule}}

 %%% \Caption#1#2 gives framed caption width #1
  %%  containing #2
 \def\Caption#1#2{\FrameIt{%
       \vtop {\hsize=#1\relax \parindent=0pt
         \leftskip=0pt \rightskip=0pt plus15pt
         \parfillskip=0pt
         \lineskip=1pt\baselineskip=0pt
         #2}}}

 \def\FirstQuadrant{\hbox to 100pt{\vrule\vbox to 100pt{%
        \hbox to 100pt{\hfil}\vfil\hrule}\hss}}

 %%% the test proper begins

  \SetLabels
    \R(.5*.2) $\zeta\,\cdot$\\
    (.9*-.10) $\xi$\\
    \R(-.03*.9) $\eta$\\
    \T(.5*.9) \Caption{70pt}{%
          \it The norm of
          $g(\xi+i\eta)$ is indicated on
          contours of this invisible surface.}\\
  \endSetLabels

  %\ShowGrid  % this makes the grid visible (once)
  \AffixLabels{\FirstQuadrant}

  \end

  Note that the coordinates to use for labels are indicated on the
edges of the grid (when visible) corresponding to the conventional
x- and y- axes of the Cartesian plane. By default the grid is
1-by-1. However, by the command \Edges{100}, you can change this
to 100-by-100 and many users find this alternative most
convenient. Place the command \Edges{...} in your style file (or
header) since its effect is is global. Other possible edge values
are 10 and 1000.

  If you use the command \Edges{...} at all, do so with care.  For
if you accidentally delete an \Edges{...} command your labels will
abruptly be badly misplaced and may logically but mysteriously
generate "dimension too big" errors under TeX and "off page" errors
under your driver.  

  You can dictate the edgescale for an individual figure by giving
the scale in brackets immediately after \SetLabels.  Thus, to
import into an article using say \Edge{100} a figure labelled using
another edgescale, say the original 1-by-1 default, you can use
\SetLabels[1]...\endSetLabels.

 %%%%%%%%%%%%%%%%%% --- cut

GETTING IT DOWN PAT

     Complicated labeling deserves the same respect as
complicated mathematics.  Do not expect it to come out perfect the
first time!  What is needed in either case is a mechanism to
repeatedly typeset troublesome pieces.

     One mechanism is always available.  One does complicated
labelling in a separate "test" file involving just the figure being
labelled;  a texpert will know how to \dump TeX's current state as
a temporary format that restarts rapidly at each retry.  Usually,
one then pastes the completed labelled figure back into the main
TeX file, but, of course, one can also \input it as an auxiliary
file.

     If you do not have a TeXpert at handy, here is a first
approximation to an efficient setup. By deletions reduce a copy
of your article to just a few lines before and after the figure.
Now label the figure, and finally, copy and paste the labelled
figure to the original article. Then copy the next figure to label
into this testbed and repeat. The TeXpert can improve the  speed
at which TeX starts up, by compiling a format specifically for
your article; just one caution: best NOT include in the format
ephemeral details of setup like \Set<mydriver>ArtSpecials (from
boxedeps.tex because this reads  figure dimensions which you may
change during your work session.

     An improved mechanism to repeatedly typeset troublesome
pieces is now available on the Macintosh; it is called LinoTeX;
see the same ftp sources.  It could be set up on many types
of computer.

     Before using labelfig.tex to attach labels to a graphics
object inserted using boxedeps.tex or BoxedArt.tex, make it a
firm rule to carefully adjust the bounding box using the trimming
commands of these packages, and also at least tentatively scale
and position the object. Beware of changing the grid inadvertently
after the labels have been positioned.  For example, correcting
the bounding box of a PostScript graphics object can foul up the
labels by changing the coordinate grid to which the labels are
attached. This is particularly true for the trimming  commands of
boxedeps.tex and BoxedArt.tex. However, as noted already, change
of scale is much less disruptive, and modest adjustments should be
well tolerated.

     Sometimes the labels protrude so far from the bounding box
of a figure that the figure has to be repositioned.  Best do this
by ad hoc spacing, say using \hglue and \vglue; altering the
bounding box would create a vicious circle.

     Remember that you are responsible for preventing labels
from overlapping. You are responsible for all label typography
including size and style. A label is really just about anything
that can be put in a TeX box. Note that spaces at the beginning
and end of labels will normally be suppressed; if you really want
them you must protect them with TeX braces.

     This package temporarily sets the \mathsurround parameter
of TeX to zero  while the labels are being affixed. This is done
because nonzero \mathsurround space would influence the position
of left and right aligned labels; then, when a texpert or printer
modifies mathsurround, diagram labeling might be disastrously
altered. There is a small price to pay involving labels that are
formatted as caption boxes including mathematics: you  may want or
need to specify an explicit mathsurround space within the caption
box; it will not influence anything outside.

     Those hostile to the use of * as separator between
the X and Y coordinates of label insertion points, are free to
impose another using \SetXYSeparator{<the new separator>}.  
Americans may prefer "," to "*" since they never use a 
comma as a decimal point; on the other hand, * may be more visible.

APPENDIX (I)  MERGING labelfig.tex LABELS INTO AN EPSF GRAPHICS OBJECT.

     As promised in the introduction, here is a recipe useful for
publishers. It works at least on Macintosh and at least for vectorized
graphics and Adobe type1 fonts.  (There is surely a similar recipe for
PCs under MSWindows.)

 (a)  Use boxedeps.tex utility to integrate the figure given by the eps
file, "x.eps" say, with a visible frame around it.  See
\ShowDisplacementBoxes command in boxedeps.tex.  To get precise results
automatically it is important to use the \Trim... commands of
boxedeps.tex making the "DisplacementBox" neatly fit the figure.

 (b)  Use the TeX printer driver and LaserWriter (versions >= 8.1.1) to
export to an EPSF the DVI page containing the integrated, labelled
figure. You now have an EPS file  "xx.eps"  that contains too much, and at
the wrong scale, and at wrong position.

 (c)  Convert the EPSF to an Adode Illustrator format EPSF using
the shareware utility called epsConvert by Sam Weiss
1993-- (currently $25).

 (d)  In Illustrator (or a compatible program), group the labels and the
"DisplacementBox"; copy them to the clipboard and paste them into "x.ps".
This step requires that all the label fonts be "visible to the Macintosh.

 (e)  Translate and scale the pasted group consisting of the labels plus
the "DisplacementBox" so as to make the "DisplacementBox" the bounding
box of (labelless) figure represented by "x.eps".  At this point the
labels will be correctly placed on the figure "x.eps".

 (f)  Ungroup and delete the "DisplacementBox".  The result is the
desired single EPS file, "x+.eps" say, It contains the original figure
plus its labels.  

     Using grouping and ungrouping appropriately in "x+.eps", a
publisher's staff can very efficiently improve label positions etc.

APPENDIX II)  SOME EXOTIC APPLICATIONS

     The grid of labelfig.tex is analogous to a light-table in
classical page makeup with wax or latex glue.  In principle, you
can use it to compose any page from its indivisible parts.  This
even has some of the artisanal charm of classical paste-up
provided you have a fast screen preview to make the process
"interactive".

     In practice labelfig.tex is a tool for nonstandard jobs.
Here are a few going beyond the labelling already discussed.

(I)  GRAPHICS INTEGRATION.

     This is accomplished by treating the imported graphics
objects as labels.  The underlying graphics object is then
typically an empty  \vbox to <dimension>{\vfill} in a TeX
\midinsert...\endinsert construction.  A label line
might be of the form

   (.1*.1) \special{... MyFigure ...}\\

The exact form of the special command varies from driver to
driver.  However, in the case of encapsulated PostScript graphics
(EPSF norm), by relying on boxedeps.tex, one can have the
following standard syntax (independant of driver  (see
boxedeps.doc for details.
  
  (.1*.1) \BoxedEPSF{MyFigure scaled <scale in mils>}\\

This may be slow since it requires TeX to read the PostScript
file to read bounding box using many complex macros.  So you
may want to try

  (.1*.1) \EPSFSpecial{MyFigure}{<scale in mils>}\\

which is fast and driver independant, but it squashes the
bounding box, normally to its lower left corner.

     Similarly for graphics of the Macintosh PICT norm ---
using BoxedArt.tex (same sources) in place of boxedeps.tex.

     This approach to integration is to be recommended when
one is assembling a composite graphics object.

 (II)  COMMUTATIVE DIAGRAM ENHANCEMENT

     Commutative diagrams or arrays of mathematical objects
connected by arrows of various sorts are common in mathematics.
The mathematical objects require the use of TeX.  Recently TeX
acquired a good collection of arrows of all slopes --- that of
LamSTeX --- plus pwerful macros to build the diagrams.

     However, even the LamSTeX collection is often
inadequate; it lacks for example double shafted arrows, dotted
arrows and curved arrows. Fortunately it is possible to produce
such arrows on an individual basis using sophisticated graphics
programs such as Illustrator and AldusFreehand (both serving
the EPSF norm) or using Metafont (with its public domain norm).
Since the creation of each new arrow is a work of love, you
probably want to limit the number of arrows by using LamSTeX
for most arrows. The 40K commutative diagram module of LamSTeX
has been adapted to work with AmSTeX and a copy may be posted
with LabelFig and related files. Unfortunately no one has yet
offered a version that works with Plain TeX or LaTeX.

       Suffice it here to say that when the exotic arrow has
been somehow imported into TeX, labelfig.tex treats it as a
label that one affixes to the commutative diagram.  Two other
steps will be treated in separate notes, namely the matter of
extracting the dimension specifications for the arrow and the
construction of the arrow --- for these steps are far from
unique and often depend intimately on your computer environment. 
Notes for the Macintosh-Textures-Illustrator combination are
found in the file ExoticArrows.doc.

 (III) NESTING 

Ingenuity pays off in exploiting labelfig.tex. One can
mix graphics and typography quite freely.  labelfig.tex is good
for freeform or overlapping arrangements, while boxedeps.tex (or
BoxedArt.tex) is best for regimented non-overlapping
arrangements --- and the two can be combined.

     The default behavior of labelfig.tex is not ideal 
for nesting objects, because to prevent trouble for beginners
the register for labels is globally cleared when \AffixLabels
concludes.  But there are switches available

      \LabelsGlobal      \LabelsLocal

which change this.  To understand this, extend the above test 
by something like:

 %%%%%%%%%%%%%%%%%% --- cut
 %%% Test extension

 \LabelsLocal
 %\LabelsGlobal

 \SetLabels
    (.5*.5) AAA\\
 \endSetLabels

 {%%% Watch for influence of braces!!
 \SetLabels
    (.5*.5) ZZZ\\
 \endSetLabels
   \AffixLabels{\FirstQuadrant}
 }

   \AffixLabels{\FirstQuadrant}

 %%%%%%%%%%%%%%%%%% --- cut

     There are however potential pitfalls.  Neither
labelfig.tex nor boxedeps.tex has been tested under extreme
conditions. Problems may occur if their procedures are
indiscriminately nested. For boxedeps.tex (not labelfig.tex)
there is a precise cause for worry, namely many of its
variables are "global", which means that TeX braces will not
provide the protection one might expect.

COMMAND SUMMARY FOR labelfig.tex

  Here [...] means optional (one or zero)
       [...]* means any number of such constructs

  \SetLabels
    [[<P>](<X><Sep><Y>) <label> \\]*
  \endSetLabels
  \ShowGrid  % this makes the grid visible (once)
  \AffixLabels{<the figure>}

   --- <P> is tack position, one of eleven or empty
              order irrelevant

                   \L\T      \T      \R\T

                   \L\E      \E      \R\E

                     \L               \R

                   \L\B      \B      \R\B

   --- (<X><Sep><Y>) insertion point;
  <Sep> is separator, = * by default;
  \SetXYSeparator{<Sep>} changes it.
   <X> and <Y> are real numbers

  --- <label> a label to attach 

  --- <the figure> the figure to label 

  \GlobalLabels (default)     
  \LocalLabels  setting for nested constructs.

 \Grids makes ALL grids appear; \HideGrid then makes just next disappear.
 \noGrids returns to default.  The commands are always global.

 \GridLineWidth{<dimension>} adjusts width of grid lines. Default is very
small, to give "hairline" effect. If your grid lines are missing try
setting \GridLineWidth{1pt}.

 \Edges#1 globally changes the edge size of all grids to the numerical 
value #1, which must be 1, 10, 100, or 1000.  The default is 1.

VERSION HISTORY.
 --- Jan 1993: \Edges#1 and [??] option after \SetLabels
 --- July 1992: \Grids, \noGrids, \HideGrid;
       Gridlines become hairlines; \GridLineWidth{<dimension>}.
 --- Oct 1991, Jan 1992: \SetXYSeparator{<Sep>},  \LabelsGlobal,
       \LabelsLocal.
 --- July 1991: first release

Address for bugs and other feedback:

        Raymond S\'eroul
        IREM and Lab. de Typographie Informatise
        Univ. Rene Descartes
        Strasbourg

    Tel 33-88-41-63-45
    Email:  A18645@FRCCSC21.BITNET

        Laurent Siebenmann
        Mathematique, Bat. 425,
        Univ de Paris-Sud,
        91405-Orsay,
        France

    Tel 33-1-6941-7949; 
    Email: lcs@topo.math.u-psud.fr

\def\scalefig#1{\epsfxsize #1\textwidth}
\def\defeq{\stackrel{\Delta}{=}}

\newcommand {\Ebb}{{\mathbb{E}}}

\newtheorem{theorem}{Theorem}
\newtheorem{lemma}{Lemma}
\newtheorem{definition}{Definition}

\newtheorem{remark}{Remark}

\newtheorem{condition}{Condition}
\newtheorem{algorithm}{Algorithm}

\setcounter{footnote}{1}

\IEEEoverridecommandlockouts \flushbottom

\title{{\huge A New Approach to User Scheduling in Massive Multi-User MIMO Broadcast Channels
}}

% author names and affiliations
% use a multiple column layout for up to three
% affiliations
\author{Gilwon Lee, {\em Student~Member, IEEE} and Youngchul Sung$^\dagger$\thanks{$^\dagger$ Corresponding author}, {\em Senior~Member, IEEE} \thanks{Gilwon Lee and Youngchul Sung
are with Dept. of Electrical Engineering, KAIST, Daejeon, 305-701, South Korea.
E-mail: \{gwlee@ and ysung@ee\}.kaist.ac.kr.
This research was supported by Basic Science Research Program through the National Research Foundation of Korea (NRF) funded by the Ministry of Education (2013R1A1A2A10060852). A preliminary version of this work was submitted to 2014 SPAWC \cite{Lee&Sung:14SPAWC}.}}

\markboth{\protect\footnotesize Submitted to {\it IEEE
Transactions on Information Theory}, March 27, 2014}{Lee and Sung}

\begin{document}

% make the title area
\maketitle

\begin{abstract}
In this paper,  a new  user-scheduling-and-beamforming method is proposed for
multi-user massive multiple-input multiple-output (massive MIMO)
broadcast channels in the context of two-stage beamforming.  The key ideas of  the
proposed scheduling method are 1) to use a set of {\em orthogonal reference} beams and construct a {\em double cone} around each reference beam to select `nearly-optimal' semi-orthogonal users based only on channel quality indicator (CQI) feedback and 2) to apply {\em post-user-selection beam refinement} with zero-forcing beamforming (ZFBF) based  on channel state information (CSI) feedback only from the selected users.
It is proved that the proposed scheduling-and-beamforming method is
asymptotically optimal as the number of users increases.
Furthermore,  the proposed scheduling-and-beamforming method almost achieves the performance of the existing semi-orthogonal user selection with ZFBF (SUS-ZFBF) that requires full CSI feedback from all users,  with significantly reduced feedback overhead which is even less than that required by random beamforming.
\end{abstract}

\begin{keywords}
User scheduling, multi-user MIMO, massive MIMO, two-stage beamforming, multi-user diversity, zero-forcing beamforming
\end{keywords}

%%%%%%%%%%%%%%%%%%%%%%%%%%%%%%%%%%%%%%%%%%%%%%%
\section{Introduction}
\label{sec:intro}
%%%%%%%%%%%%%%%%%%%%%%%%%%%%%%%%%%%%%%%%%%%%%%%

The multi-user multiple-input and multiple-output (MU-MIMO)
technology has served as one of the core technologies of the
fourth generation (4G) wireless systems. With the current
 interest in large-scale antenna arrays at base
stations (BSs), the importance of the MU-MIMO technology further increases
for future wireless systems. The MU-MIMO technology supports users
in the same frequency band and time simultaneously based on
spatial-division multiplexing, exploiting the degrees-of-freedom
(DoF) in the spatial domain. There has been extensive research on
MU-MIMO ranging from transmit signal or beamformer design to user
scheduling in the past decade
\cite{Caire&Shamai:03IT,Weingarten&Steinberg&Shamai:04ISIT,Sharif&Hassibi:05IT,Yoo&Goldsmith:06JSAC}.
It is known that the capacity of a Gaussian MIMO broadcast channel
is achieved by dirty paper coding (DPC)
\cite{Costa:83IT,Caire&Shamai:03IT,Weingarten&Steinberg&Shamai:04ISIT}.
However, due to the difficulty of practical implementation of DPC,  linear beamforming schemes for transmit signal design for MU-MIMO
have become dominant in current cellular standards
\cite{Liuetal:12COMMAG}. In general,  linear beamforming schemes such as
zero-forcing beamforming (ZFBF) and minimum mean-square-error
(MMSE) beamforming perform worse than DPC. However, an astonishing
remedy was brought to these linear beamforming schemes for MU-MIMO
downlink, based on multi-user diversity
\cite{Knopp&Humblet:95ICC,Viswanath&Tse&Laroia:02IT,Sharif&Hassibi:05IT,Yoo&Goldsmith:06JSAC}.
That is, with proper user selection or scheduling, the performance
degradation of the linear beamforming schemes compared to DPC is
negligible as the number of users in the served cell becomes large
\cite{Sharif&Hassibi:05IT,Yoo&Goldsmith:06JSAC}, and the seminal
results in \cite{Sharif&Hassibi:05IT,Yoo&Goldsmith:06JSAC} have
provided guidance on how to select simultaneous users in practical
MU-MIMO downlink systems.

In this paper, we revisit the user scheduling and beamforming problem for MU-MIMO
downlink  in the context of up-to-date massive MU-MIMO downlink
with two-stage beamforming \cite{Adhikary&Nam&Ahn&Caire:13IT},
although the proposed user scheduling method can readily be applied to
conventional single-stage MU-MIMO downlink. User scheduling for MU-MIMO cellular downlink has been
investigated extensively in the past decade
\cite{Sharif&Hassibi:05IT,Yoo&Goldsmith:06JSAC,
Alnaffouri&Sharif&Hassibi:09COM,Adhikary&Caire:13arXiv}. Among
many proposed user scheduling methods for beamforming-based
MU-MIMO downlink are two well-known
  user selection schemes already mentioned in the above sitting on opposite sides on the scale of feedback overhead:  random (orthogonal)
beamforming (RBF) \cite{Sharif&Hassibi:05IT} and semi-orthogonal
user selection with  ZFBF (SUS-ZFBF) \cite{Yoo&Goldsmith:06JSAC}.
Both schemes are asymptotically optimal, i.e., the sum-rate
scaling law with respect to (w.r.t.) the number of users in the
cell is the same as that of DPC-based MU-MIMO, but with difference in the amount of feedback
required for user selection, the two schemes yield
significantly different sum-rate performance in the practical
finite-user case. The SUS-ZFBF method in
\cite{Yoo&Goldsmith:06JSAC} requires  channel state information (CSI) from every user terminal
(UT) at the BS, exploits full CSI from all users, and
provides a smart selection of users whose channel directions are
almost orthogonal ('semi-orthogonal') to yield good performance
with ZFBF. On the other hand, the RBF scheme in
\cite{Sharif&Hassibi:05IT} selects a group of users that are  matched
to predetermined random orthogonal beam directions, and  only requires
the feedback of the best beam direction index and the
corresponding signal-to-interference-plus-noise ratio (SINR) value
from each user. Hence, the feedback overhead is
 reduced significantly in the RBF scheme. (Note that SUS-ZFBF requires CSI from users, whereas RBF requires channel quality indicator (CQI) from users.) Due to this feedback advantage, the RBF scheme was extended to the correlated channel case
\cite{Alnaffouri&Sharif&Hassibi:09COM}. Recently, the RBF scheme
was applied to user scheduling in massive MU-MIMO downlink with
two-stage beamforming with multiple correlated channel groups
\cite{Adhikary&Caire:13arXiv}. However, it is known that the RBF
scheme yields  poor performance compared to SUS-ZFBF
utilizing full CSI in the practical case of finite users
\cite{Yoo&Goldsmith:06JSAC}.

In this paper, we propose a new user-selection-and-beamforming method for (massive)
MU-MIMO downlink that maintains the advantage of  SUS-ZFBF, overcomes the disadvantage of RBF, and
 requires  a less amount of feedback than RBF.
Our approach starts with identifying the factors that make RBF
yield poor sum-rate performance and the factors that make
 SUS-ZFBF better than  RBF in sum-rate performance, and ends
with correcting the loss factors associated with RBF and
implementing the gain factors associated with SUS-ZFBF without
full CSI at the BS.
 RBF selects a set of users with roughly orthogonal channels. However, as we shall see later in Section \ref{subsec:AUS_background}, the main loss factors associated with RBF are 1) the criterion of SINR \cite{Sharif&Hassibi:05IT} associated with orthogonal beams and 2) the absence of any beam refinement after user selection.
 {\em The use of SINR as the selection criterion cannot properly take the channel magnitude into consideration but only considers the angle between each predetermined beam direction and  the user channel vector for user selection}. On the other hand,
SUS-ZFBF also provides a set of users with semi-orthogonal channel
vectors, based on full CSI at the BS. However, SUS-RBF selects
semi-orthogonal users  with  large channel magnitude. Furthermore,
SUS-ZFBF performs post-user-selection beam design based on the
selected users' CSI, i.e., SUS-ZFBF uses ZFBF for selected users
to eliminate the inter-user interference, and the effective channel
gain loss associated with ZFBF is managed by controlling the
thickness of the user-selection hyperslab\footnote{A hyperslab in
${\mathbb{C}}^M$ is defined as a set $\{ \wbf \in {\mathbb{C}}^n:
\frac{|\nbf^H \wbf|}{||\nbf||||\wbf||} \le \gamma \}$ for a given vector
$\nbf$ and a constant $\gamma
>0$.} \cite{Yoo&Goldsmith:06JSAC}. Consequently, SUS-ZFBF selects
a set of (semi-)orthogonal users with large channel
magnitude and the loss associated with ZFBF is made small by
increasing the orthogonality of the selected user channel vectors
by thinning the user-selection hyperslab when the number of users
in the cell increases.  As in RBF, we use a set of {\em orthogonal
reference} beam directions, but correct the first loss factor of
RBF by defining a new criterion names as {\em quasi-SINR} that
can incorporate the channel magnitude. The user selection is done
based on quasi-SINR feedback. Then, we apply post-user-selection
ZFBF for selected users. Here, we do control the semi-orthogonality of selected users and  the effective channel
gain loss associated with ZFBF by defining a user-selection {\em
double cone} around each reference beam direction and by
controlling the angle of the user-selection double cone.  Under
the proposed user selection method, user selection is done based
on quasi-SINR, which is a CQI, and the post-user-selection beam
refinement is done based on the feedback of CSI from the selected
users only.  In this way, the main advantage of SUS-ZFBF is
implemented in the proposed scheme without full CSI at the BS. The
proposed scheduling-and-beamforming method is asymptotically optimal and the proposed  method almost achieves the performance of SUS-ZFBF, with significantly reduced feedback overhead which is even less than that required by RBF.

%\vspace{0.5em}

\textit{Notation and Organization:} We will make use of standard notational
conventions. Vectors and matrices are written in boldface with
matrices in capitals. All vectors are column vectors. For a matrix
$\Abf$, $\Abf^T$, $\Abf^H$, $\mbox{tr}(\Abf)$, and $[\Abf]_{i,j}$ indicate the transpose, conjugate
transpose, trace, and  entry at the $i$-th row and $j$-th column of
$\Abf$, respectively. $\Abf(:,c_1:c_2)$ is the submatrix of $\Abf$ consisting of the columns from $c_1$ to $c_2$. $\mbox{diag}(\Abf_1,\cdots,\Abf_n)$ denotes
a diagonal matrix composed of diagonal elements $\Abf_1$, ...
,$\Abf_n$. For vector $\abf$, $\|\abf\|$  represents the 2-norm of
$\abf$. $\Ibf_K$ is the $K \times K$ identity matrix.
$\xbf\sim\Cc\Nc(\mubf,\Sigmabf)$ means that random vector $\xbf$
is complex Gaussian distributed with mean $\mubf$ and covariance
matrix $\Sigmabf$, and $\theta \sim \text{Unif}(a,b)$ means that
$\theta$ is uniformly distributed for $\theta \in [a,b]$.
$\Ebb[\cdot]$ denotes statistical expectation.  ${\mathbb{R}}$, ${\mathbb{R}}^+$, and ${\mathbb{C}}$ are the sets of real, non-negative real, and complex numbers, respectively. $\iota = \sqrt{-1}$.

The remainder of this paper is organized as follows.
The system model and
preliminaries  are described in Section \ref{sec:sysmodel}. The proposed user-scheduling-and-beamforming
method is presented in Section
\ref{sec:AUS} and its asymptotic optimality is proved in Section \ref{sec:AUS_opt}. Fairness issues are discussed in Section \ref{sec:fairness}. Numerical results are provided in Section
\ref{sec:num_res}, followed by conclusions in Section
\ref{sec:conclusion}.

%%%%%%%%%%%%%%%%%%%%%%%%%%%%%%%%%%%%%%%%%%%%%%%%%%%%%%%%%%%%%%%%%%%%
\section{System Model and Preliminaries}
\label{sec:sysmodel}
%%%%%%%%%%%%%%%%%%%%%%%%%%%%%%%%%%%%%%%%%%%%%%%%%%%%%%%%%%%%%%%%%%%%

We consider massive MU-MIMO downlink. One of the major challenges to implement
 massive MIMO systems in the real world is the  design of a practical precoding architecture for multi-user massive MIMO downlink together with the estimation of channels with high dimensions.
  Designing precoding vectors or matrices with
 very high dimensions with the scale of hundred to support a large number of simultaneous UTs without introducing an
 efficient structure requires heavy
 computational burden and a huge amount of CSI feedback.
 One practical precoding solution to multi-user massive
 MIMO downlink is two-stage beamforming, which is based on a
 `divide-and-conquer' approach. Recently, Adhikary {\it et al.}
 proposed an efficient two-stage beamforming architecture named `Joint Spatial Division and Multiplexing
 (JSDM)' for multi-user massive MIMO downlink
\cite{Adhikary&Nam&Ahn&Caire:13IT}. The key ideas of JSDM are 1) to
partition users in a sector (or cell) into multiple groups so that each group has a
distinguishable linear subspace spanned by the dominant eigenvectors of the group's  channel
covariance matrix and 2) to divide transmit beamforming into two
stages based on this grouping, as shown in Fig. \ref{fig:two_stage_bf}: The first stage is pre-beamforming that separates multiple groups by using a pre-beamforming matrix designed for each group to filter the dominant
eigenvectors of each group's channel covariance matrix,  and the second stage is conventional MU-MIMO precoding  that separates and thus simultaneously supports the users within a group based on
 the effective channel which is given by the product of the pre-beamforming matrix and the actual
channel matrix between the antenna array and UTs.
One key advantage of such two-stage beamforming is that
the
pre-beamforming matrices can be designed not based on CSI but based on the channel statistics information, i.e., the channel covariance matrix. The channel statistics information varies slowly compared to CSI and thus can be estimated more easily than  CSI, based on some subspace tracking algorithm without knowing
instantaneous CSI \cite{Herdin&Bonek:04ISTMWC,Hoydis&Hoek&Wild&Brink:12ISWCS,Ispas&Dorpinghaus&Ascheid&Zemem:13SP,Ispas&Schneider&Ascheid&Thoma:10VTC}.
Practically,  the channel covariance matrix of
a UT can be determined {\it a priori} based on some side
information \cite{Adhikary&Nam&Ahn&Caire:13IT,Noh&Zoltowski&Sung&Love:14JSTSP}.
Furthermore, the channel covariance matrix associated with a UT or a group in a realistic
environment has a much smaller rank than the actual size of the antenna array and therefore, the dimension of the effective channel whose state information is necessary at the BS is significantly reduced since the effective channel is formed as a precoding matrix and the actual channel matrix with two-stage beamforming.

\begin{figure}[t]
\begin{psfrags}
        \psfrag{d1}[c]{\small $\dbf_1$} %
        \psfrag{dG}[c]{\small $\dbf_G$} %
        \psfrag{W1}[c]{\small $\Wbf_1$} %
        \psfrag{WG}[c]{\small $\Wbf_G$} %
        \psfrag{b1}[c]{\small $b_1$} %
        \psfrag{bG}[c]{\small $b_G$} %
        \psfrag{vd}[c]{\small $\vdots$} %
        \psfrag{V1}[c]{\small $\Vbf_1$} %
        \psfrag{VG}[c]{\small $\Vbf_G$} %
        \psfrag{1}[c]{\small $1$} %
        \psfrag{2}[c]{\small $2$} %
        \psfrag{M}[c]{\small $M$} %
        \psfrag{S1}[c]{\small $S_1$} %
        \psfrag{SG}[c]{\small $S_G$} %
        \psfrag{group1}[c]{\small Group $1$} %
        \psfrag{group2}[c]{\small Group $2$} %
        \psfrag{group3}[c]{\small Group $3$} %
        \psfrag{groupG}[c]{\small Group $G$} %
        \psfrag{base}[c]{\small Base station} %
    \centerline{ \scalefig{0.75} \epsfbox{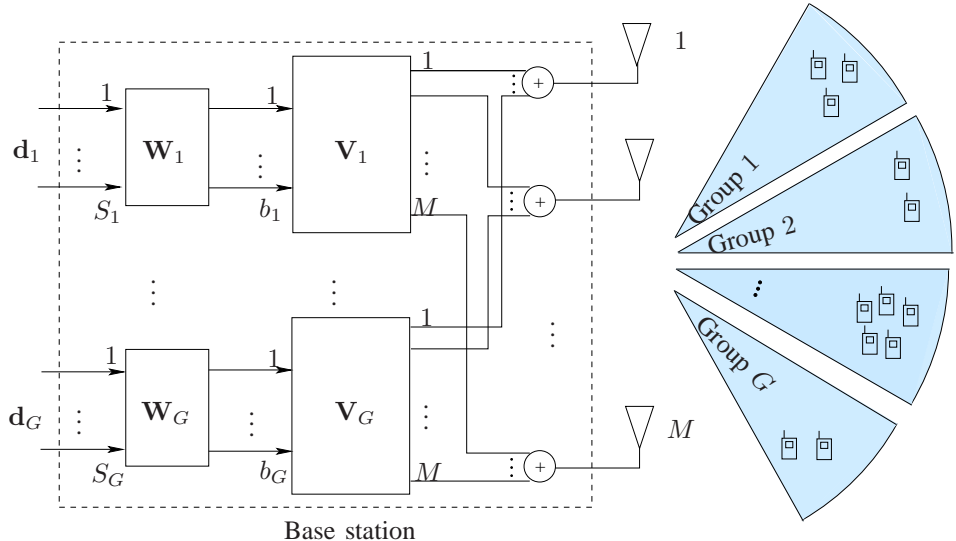} }
    \caption{Multi-group MU-MIMO downlink with
    two-stage beamforming}
    \label{fig:two_stage_bf}
\end{psfrags}
\end{figure}

With the above-mentioned advantages in mind, specifically we consider a single-cell massive MIMO downlink system adopting JSDM composed of a single BS employing $M$ transmit antennas and $K$ single-antenna UTs, as shown in Fig.  \ref{fig:two_stage_bf}.
We consider the large network regime, i.e.,
 $K \gg M$,
  and assume that the BS chooses $S~(\le M)$ users among
the $K$ users within the cell and broadcasts independent data streams to the
selected users.
We assume that
the users in the cell are partitioned into $G$ groups such that (s.t.) $\sum_{g=1}^G K_g = K$ and
$\sum_{g=1}^G S_g =S$,  where $K_g$ and $S_g$ are the number of users and the number of
independent data streams in group $g$, respectively. We also assume that each
 group has a different channel covariance matrix and every user in a group has the same channel covariance matrix, as in \cite{Adhikary&Nam&Ahn&Caire:13IT}.
Then, the $M \times 1$ channel vector $\hbf_{g_k}$ of user $k$ in group $g$
is given by
\begin{equation}  \label{eq:channel_linear_comb}
\hbf_{g_k} = \Ubf_g \Lambdabf_g^{1/2} \etabf_{g_k},
\end{equation}
where  $\Rbf_g = \Ubf_g \Lambdabf_g \Ubf_g^H$ is the eigendecomposition of
the channel covariance matrix $\Rbf_g$ of group $g$;
 $\Ubf_g$ is the $M \times r_g$ matrix composed of
the orthonormal eigenvectors corresponding to the $r_g$ non-zero
eigenvalues of $\Rbf_g$;
 $\Lambdabf_g$ is the $r_g \times r_g$ diagonal matrix
composed of the non-zero eigenvalues of $\Rbf_g$; and
$\etabf_{g_k} \sim \Cc\Nc({\bf 0},\Ibf_{r_g})$. Let the elements of $\Ubf_g$, $\Lambdabf_g$ and $\etabf_{g_k}$ be
\begin{align}
\Ubf_g &=[\ubf_{g,1},\ubf_{g,2},\cdots, \ubf_{g,r_g}] \\
\Lambdabf_g  &= \mbox{diag}(\lambda_{g,1},\cdots,\lambda_{g,r_g}), ~~~ \lambda_{g,1} > \lambda_{g,2} > \cdots > \lambda_{g,r_g}  \label{eq:model_channel_eigenvalue} \\
\etabf_{g_k} &= [\eta_{g_k,1},\eta_{g_k,2},\cdots,\eta_{g_k,r_g}]^T \sim \Cc\Nc({\bf 0},\Ibf_{r_g}). \label{eq:model_channel_component}
\end{align}
One widely-used practical channel model is the {\em one-ring} model
 introduced by Jakes \cite{Jakes:book}, which captures the situation in which the BS antenna is positioned in high elevation and UTs are surrounded by radio scatters as in a typical urban cell.  In the one-ring model, the channel covariance matrix is determined by
the angle
spread (AS), angle of arrival (AoA), and antenna
geometry \cite{Shiu&Foschini&Gans&Kahn:00COM}, and  in the case
of a ULA at the BS with the antenna spacing $\lambda_c
D$, the channel covariance matrix is expressed as \cite{Shiu&Foschini&Gans&Kahn:00COM}
\begin{align}
[\Rbf_t]_{i,j} &= \frac{1}{2\Delta}
\int^{\theta+\Delta}_{\theta-\Delta} e^{-\iota 2\pi (i-j) D
\sin \omega} d\omega, \label{eq:channelCovOneRing}
\end{align}
where $\lambda_c$ is the carrier wavelength, $\theta$ is the AoA,  and $\Delta$ is the AS.
One useful thing to note is that with a large uniform linear or planar antenna array
(ULA)  at the BS, the eigenvectors of the channel covariance
matrix  reduce to certain columns of the discrete fourier transform
(DFT) matrix depending on the AoA and AS of UTs
\cite{Adhikary&Nam&Ahn&Caire:13IT,Noh&Zoltowski&Sung&Love:14JSTSP}.

Denoting  the
$K_g \times M$ channel matrix for the users in group $g$ by $\Hbf_g = [\hbf_{g_1},\cdots,\hbf_{g_{K_g}}]^H$ and stacking $\{\Hbf_g, g=1,\cdots,G\}$, we have
the overall $K \times M$ channel matrix $\Hbf = [\Hbf_1^H, \cdots,
\Hbf_G^H]^H$. Then, the received signal vector containing all user signals in the
cell is given by
\begin{equation}\label{eq:receiv_sig}
\ybf = \Hbf\xbf+ \nbf,
\end{equation}
where $\xbf$ is the $M \times 1$ transmitted signal vector at the BS,
$\nbf \sim \Cc \Nc ({\bf 0}, \Ibf_K)$ is the noise vector, and the
BS has an average power constraint $\mathbb{E}[\|\xbf\|^2]\le P$.
In the considered two-stage beamforming, precoding of the data vector $\dbf$ is done in two steps: first, by a $b \times S$ MU-MIMO precoder $\Wbf$
and then by a $M \times b$ pre-beamformer $\Vbf$,  i.e.,
\[
\xbf =
\Vbf\Wbf\dbf,
\]
 where  $\dbf \sim \Cc\Nc({\mathbf{0}}, \Ibf_S)$.  As mentioned earlier, the pre-beamforming matrix $\Vbf = [\Vbf_1,\cdots,\Vbf_G]$ is
designed based not on the instantaneous CSI but on the channel {\it statistics} information
$\{\Ubf_g,\Lambdabf_g\}$, where the $M \times b_g$
submatrix $\Vbf_g$  is  the
pre-beamforming matrix for group $g$. Then,  the received signal in \eqref{eq:receiv_sig} can be
rewritten  as \cite{Adhikary&Nam&Ahn&Caire:13IT}
\begin{equation}
\ybf = \Gbf\Wbf\dbf + \nbf,
\end{equation}
where
\begin{equation}   \label{eq:equivalentCh}
\Gbf :=\Hbf\Vbf = \left[
\begin{matrix}
{\Hbf_1\Vbf_1 } & {\Hbf_1\Vbf_2} & {\cdots} & {\Hbf_1\Vbf_G} \\
{\Hbf_2\Vbf_1 } & {\Hbf_2\Vbf_2} & {\cdots} & {\Hbf_2\Vbf_G} \\
{\vdots } & {\vdots} & {\ddots} & {\vdots} \\
{\Hbf_G\Vbf_1 } & {\Hbf_G\Vbf_2} & {\cdots} & {\Hbf_G\Vbf_G}
\end{matrix}
\right].
\end{equation}
Although the MU-MIMO precoder $\Wbf$ can be designed with full freedom, for simplicity, $\Wbf$ is designed in a block-diagonal
form as $\Wbf = \text{diag}(\Wbf_1,\cdots,\Wbf_G)$,
where $\Wbf_g$ is the $b_g \times S_g$ MU-MIMO precoder
 and depends  on
the {\em effective channel} $\Gbf_g:=\Hbf_g\Vbf_g$ for group $g$ only.
Hence, in JSDM, the received signal vector for
the users in group $g$ is given by
\begin{equation}  \label{eq:RXsigybfg}
\ybf_g = \Gbf_g\Wbf_g\dbf_g
+ \sum_{g'\neq g} \Hbf_g \Vbf_{g'} \Wbf_{g'} \dbf_{g'}
+ \nbf_g,
\end{equation}
where $\dbf_g$ and $\nbf_g$ are the data and noise vectors for
group $g$, respectively. Decomposing $\Gbf_g$ and $\Wbf_g$ as  $\Gbf_g =
[\gbf_{g_1},\cdots,\gbf_{g_{K_g}}]^H$ and
$\Wbf_g=[\wbf_{g_1},\cdots,\wbf_{g_{K_g}}]$, respectively, we have
the received signal of user $k$ in group $g$ (from here on, we will simply say user $g_k$ for user $k$ in group $g$),  given by
\begin{align}\label{eq:rec_gk}
y_{g_k} &= \gbf_{g_k}^H \wbf_{g_k} d_{g_k}
+ \sum_{k' \neq k} \gbf_{g_k}^H \wbf_{g_{k'}}d_{g_{k'}}
+
\sum_{g' \neq g} \hbf_{g_k}^H \Vbf_{g'} \Wbf_{g'}\dbf_{g'} + n_{g_k}
\end{align}
where $\gbf_{g_k}$, $\wbf_{g_k}$, $d_{g_k}$ and $n_{g_k}$ are the $b_g \times 1$
effective channel, $b_g \times 1$ MU-MIMO precoding vector, data and
noise symbols of user $g_k$, respectively. Note that the dimension of the effective channel $\gbf_{g_k}$ for user $g_k$ is reduced to $b_g$, and $b_g \ll M$ in typical cellular environments with sufficiently high carrier frequency \cite{Adhikary&Nam&Ahn&Caire:13IT}. The second
and third terms in the right-hand side (RHS) of \eqref{eq:rec_gk}
are the intra-group and inter-group interference, respectively.
Concerning the inter-group interference, we assume that at least the approximate block diagonalization (BD) condition in the below holds \cite{Adhikary&Nam&Ahn&Caire:13IT}:

\begin{condition}[Inter-group interference condition \cite{Adhikary&Nam&Ahn&Caire:13IT}]  \label{cond:approxBD}

~

\begin{itemize}
\item {\em Exact BD}: Each group has a sufficient signal space to transmit
$S_g$  data streams, that does not interfere with the signal spaces of other groups,
i.e., \cite{Adhikary&Nam&Ahn&Caire:13IT}
\begin{equation}
\dim\left(\text{span}(\Ubf_g) \cap \text{span}^\perp(
\{\Ubf_{g'}:g' \neq g\}) \right) \ge S_g.
\end{equation}

\item {\em Approximate BD}: When exact BD is impossible,
approximate BD can be attained  by selecting
a matrix $\Ubf_g^*$ consisting of the $r_g^* ~(\le r_g)$
dominant eigenvectors of the channel covariance matrix for each group $g$ such that
 \cite{Adhikary&Nam&Ahn&Caire:13IT}
\begin{equation}
\dim\left(\text{span}(\Ubf_g^*) \cap \text{span}^\perp(
\{\Ubf_{g'}^*:g' \neq g\}) \right) \ge S_g.
\end{equation}
\end{itemize}
\end{condition}

Note that in the case of approximate BD, $r_g^*$ is a control parameter and the inter-group
interference still remains in \eqref{eq:rec_gk} because of  the weakest
$r_g-r_g^*$ eigenvectors of the channel covariance matrix not included in $\Ubf_g^*$. Note that
both $\Ubf_g$ and $\Ubf_g^*$ have orthonormal columns since they are column-wise submatrices of a
unitary matrix. Hence, the average transmit power for user $g_k$ is given by
\begin{align}  \label{eq:power_formula}
P_{g_k}^{actual} &= \mbox{tr}(\Vbf_g \wbf_{g_k} \wbf_{g_k}^H\Vbf_g^H)=\|\wbf_{g_k}\|^2
\end{align}
when we set $\Vbf_g=\Ubf_g^*$ for the pre-beamforming matrix, since the variance of $d_{g_k}$ is set to one.

%%%%%%%%%%%%%%%%%%%%%%%%%%%%%%%%%%%%%%%%%%%%%%%%%%%%%%%%%%%%%%%%%%%%
\section{The Proposed User Scheduling Method}
\label{sec:AUS}
%%%%%%%%%%%%%%%%%%%%%%%%%%%%%%%%%%%%%%%%%%%%%%%%%%%%%%%%%%%%%%%%%%%%

In this section, we propose a user-scheduling-and-beamforming algorithm for a
given pre-beamformer $\Vbf=[\Vbf_1,\cdots,\Vbf_G]$, adopting ZFBF
for the second-stage MU-MIMO precoder $\Wbf_g$. For the sake of
simplicity, we assume $b_g=S_g=r_g^*$ and $\Vbf_g = \Ubf_g^*$ for
all $g$. We also assume that each receiving user $g_k$ (not the
BS) knows its {\em effective} CSI $\gbf_{g_k}$.

\subsection{Background}

\label{subsec:AUS_background}

Before presenting the proposed user-scheduling-and-beamforming method, we briefly
examine the two disparate user-scheduling-and-beamforming methods in
\cite{Sharif&Hassibi:05IT} and \cite{Yoo&Goldsmith:06JSAC} devised under the linear beamforming framework.  For simplicity,
let us just consider one group. First, consider the random
(orthogonal) beamforming (RBF) in \cite{Sharif&Hassibi:05IT}. In
this method, the BS just randomly determines a set of orthonormal
beam vectors $\{ \phibf_1, \phibf_2, \cdots, \phibf_{r_g^*} \}$,
and then transmits each beam  sequentially in time during the
training period. In the setting of JSDM, this beam selection
corresponds to \cite{Adhikary&Caire:13arXiv}
\begin{equation}
\phibf_i = \ubf_{g,i}, ~~i=1,2,\cdots, r_g^* ~~~\mbox{and}~~~\Wbf_g = \Ibf_{r_g^*}.
\end{equation}
During the training period, user $g_k$ computes the SINR of each beam direction $i$ as \cite{Sharif&Hassibi:05IT}
\begin{equation}  \label{eq:RBFsinr}
{\mathrm{SINR}}_{g_k,i} = \frac{|\hbf_{g_k}^H \phibf_i|^2}{1+ \sum_{i^\prime \ne i}|\hbf_{g_k}^H \phibf_{i^\prime}|^2},~~~i=1,\cdots,r_g^*,
\end{equation}
assuming that $\sum_{i=1}^{r_g^*} d_{\kappa_i}\phibf_i$ will be
transmitted during the data transmission period, where $\kappa_i$
is the selected user index for beam direction $i$. (The
inter-group interference is neglected for simplicity.) Then, each
user $g_k$ feeds back its maximum SINR value and the corresponding
beam index $i$  \cite{Sharif&Hassibi:05IT}.  After the feedback
period is finished,  the BS selects a user for each beam direction
$i$ such that the selected user for beam direction $i$ has the
maximum SINR for the considered beam direction $i$. (For
simplicity, let us neglect the case that one user can be selected
for more than one beam direction.) After the selection is done,
the BS transmits $\sum_{i=1}^{r_g^*} d_{\kappa_i}\phibf_i$ to
serve the selected $r_g^*$ users.

\begin{figure}[ht]
\begin{psfrags}
    \psfrag{o}[l]{\small origin} %
    \psfrag{cth}[c]{\small $\theta$} %
    \psfrag{h}[l]{\small $\hbf_{g_{\hat{k}}}$} %
    \centerline{ \scalefig{0.5} \epsfbox{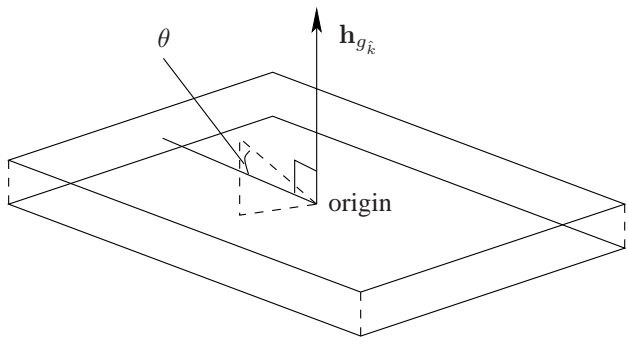} }
    \caption{A hyperslab constructed by an already-included user channel vector}
    \label{fig:sus_hyperslab}
\end{psfrags}
\end{figure}

On the other hand, under SUS-ZFBF in \cite{Yoo&Goldsmith:06JSAC},
the BS collects CSI $\hbf_{g_k}$ from every UT $g_k$ in the
beginning, and sequentially selects $r_g^*$ users by enforcing
semi-orthogonality among the selected users. That is, the BS first
selects the user that has the largest channel magnitude. Let the
firstly selected user's index be $g_{\hat{k}}$. Then, based on the CSI
$\hbf_{g_{\hat{k}}}$, SUS-ZFBF constructs a user-selection
hyperslab defined as \cite{Yoo&Goldsmith:06JSAC}
\begin{equation}
\Hc_{g,1} = \left\{ \wbf \in {\mathbb{C}}^M:
\frac{|\hbf_{g_{\hat{k}}}^H
\wbf|}{||\hbf_{g_{\hat{k}}}||\cdot||\wbf||} \le \gamma \right\}
\end{equation}
as shown in Fig. \ref{fig:sus_hyperslab}.  This means that in Fig.
\ref{fig:sus_hyperslab}, the angle $\theta$ is determined to
satisfy  $\cos \theta \le \gamma$. Note that if a vector $\wbf$ is
contained in $\Hc_{g,1}$, $\wbf$ is semi-orthogonal to
$\hbf_{g_{\hat{k}}}$. Then, SUS-ZFBF selects the user whose
channel vector is contained in the hyperslab $\Hc_{g,1}$ and that
has maximum channel vector magnitude within $\Hc_{g,1}$. After the
second user is selected, another hyperslab contained in the first
hyperslab is constructed based on the secondly selected user's
channel vector. Thus, the newly constructed hyperslab is
semi-orthogonal to both the firstly and secondly selected users'
channel vectors. In this way, at each step the user with the
largest channel magnitude is selected while semi-orthogonality is
maintained among the selected users. Furthermore, the effective
channel gain loss associated with later ZFBF can be made small by making
the thickness of the hyperslab small for a large number of users
in the served cell. (For detail, refer to
\cite{Yoo&Goldsmith:06JSAC}.)

Now, consider RBF explained previously again. First, we examine
the SINR defined in \eqref{eq:RBFsinr}. Consider a very practical
scenario of signal-to-noise ratio (SNR) of 3 dB and  four beam
directions. Assume that $\hbf_{g_k}$ has equal size components in
$\{\phibf_i\}$. Then, $\frac{|\hbf_{g_k}^H \phibf_i|^2}{1}=2$ and
$\sum_{i^\prime \ne i}|\hbf_{g_k}^H \phibf_{i^\prime}|^2=6$. (Note
that 3GPP LTE-Advanced supports 4 $\times$ 4 or 8 $\times$ 8
MU-MIMO.) Then, it is easy to see that the term '1' in the
denominator of the RHS of \eqref{eq:RBFsinr} is negligible, and
the SINR becomes
\begin{equation}  \label{eq:RBFapproxSINR}
{\mathrm{SINR}}_{g_k,i} \approx \frac{|\hbf_{g_k}^H \phibf_i|^2}{\sum_{i^\prime \ne i}|\hbf_{g_k}^H \phibf_{i^\prime}|^2} =\frac{|\hbf_{g_k}^H \phibf_i|^2/||\hbf_{g_k}||^2}{\sum_{i^\prime \ne i}|\hbf_{g_k}^H \phibf_{i^\prime}|^2/||\hbf_{g_k}||^2}.
\end{equation}
A key point to observe here in  \eqref{eq:RBFapproxSINR} is that
the SINR is almost {\em independent of the user channel vector
magnitude!} Thus, the user whose channel vector has the {\em
smallest angle} with the given beam direction regardless of its
channel magnitude is selected for the given beam direction. (As
SNR and the number of streams increase, this effect becomes more
evident. Operating SNR of real cellular systems for data (or
packet) transmission requiring user scheduling is higher than 3
dB. Note also that when we have only one beam direction, the problem does not occur but in this case, there is no spatial multiplexing.) Of course, this user selection is optimal, when $\Wbf_g=\Ibf$
and thus $\sum_{i=1}^{r_g^*} d_{\kappa_i}\phibf_i$ is indeed the
transmitted signal. Now, one can see that RBF does not take the
channel vector magnitude into account for user selection and
furthermore this is because there is no {\em post-user-selection
beam refinement or adjustment}. In RBF, only orthogonality among
the selected users is pursued with neglecting the channel
magnitude. Compare this with SUS-ZFBF. In SUS-ZFBF, the user with
the maximum channel vector magnitude is selected at each inclusion
step while semi-orthogonality among the selected users' channels is
maintained. In the next subsection, we propose a user-selection-and-beamforming
method that corrects the disadvantages of RBF and maintains the
advantages of SUS-ZFBF {\em without} full CSI at the BS.

\subsection{The proposed user selection method}

Here, we consider the original two-stage beamforming setting
again. (The proposed method can readily be applied to conventional
single-stage MU-MIMO downlink too.) As in RBF, we use a set of
{\em orthogonal  reference} beam directions, and use
$\ubf_{g,1},\ubf_{g,2},\cdots,\ubf_{g,r_g^*}$ as the orthogonal
reference beam directions here. Then, as in SUS-ZFBF, we enforce
semi-orthogonality among the selected users by constructing a {\em
double cone} $\Cc_{g,i}$ around each reference beam direction $i$,
as shown in Fig. \ref{fig:aus_region}, defined as
\begin{equation}
\Cc_{g,i} = \left\{ \hbf_{g_k} :  \frac{|\hbf_{g_k}^H\ubf_{g,i}|}{||\hbf_{g_k}||} \ge \alpha^\prime \right\}, ~~i=1,2,\cdots, r_g^*,
\end{equation}
and by checking if the user channel vector $\hbf_{g_k}$ is
contained in $\Cc_{g,i}$ for each $i$. (From here on, we will
refer to double cone simply as cone.) Note that this checking is
done at UTs not at the BS.  To construct $\Cc_{g,i}$, we need the
original channel vector $\hbf_{g_k}$ for user $g_k$. However, we
are assuming that only the equivalent channel state information
$\gbf_{g_k}$ is available at user $g_k$ for the two-stage
beamforming.  Note that from \eqref{eq:equivalentCh}, we have
\begin{equation}
\gbf_{g_k}^H=\hbf_{g_k}^H \Vbf_g= \hbf_{g_k}^H \Ubf_g^*=[\hbf_{g_k}^H
\ubf_{g,1}, \cdots, \hbf_{g_k}^H \ubf_{g,r_g^*}].
\end{equation}
Hence, the cone-containment checking can be done simply by
computing $\frac{\gbf_{g_k}^H}{||\gbf_{g_k}||}$ and checking if
the absolute value of each of its elements is larger than or equal
to a new threshold $\alpha \defeq \alpha^\prime
\frac{||\hbf_{g_k}||}{||\gbf_{g_k}||}$, since
\begin{equation}
\frac{\gbf_{g_k}^H}{||\gbf_{g_k}||}=\frac{||\hbf_{g_k}||}{||\gbf_{g_k}||}\left[\frac{\hbf_{g_k}^H
\ubf_{g,1}}{||\hbf_{g_k}||}, \cdots, \frac{\hbf_{g_k}^H \ubf_{g,r_g^*}}{||\hbf_{g_k}||}\right].
\end{equation}
Note that $0 \le \alpha \le 1$ since each element of the
normalized vector ${\gbf_{g_k}^H}/{||\gbf_{g_k}||}$ is compared
with $\alpha$, and $\alpha$ is a system design parameter that
controls\footnote{Controlling $\alpha$ plays the same role as
controlling the thickness of the user-selection hyperslab in
SUS-ZFBF shown in the previous subsection.} the semi-orthogonality
of the selected user channels.
  If the
channel $\hbf_{g_k}$ of user $g_k$ is contained in
\begin{figure}
\begin{psfrags}
\small
                \psfrag{u1}{$\ubf_{g,1}$}
                \psfrag{u2}{$\ubf_{g,2}$}
                \psfrag{u3}{$\ubf_{g,r_g^*}$}
                \psfrag{h1}{$\hbf_{g_i}$}
                \psfrag{h2}{$\hbf_{g_j}$}
                \psfrag{h3}{$\hbf_{g_k}$}
                \psfrag{h4}{$\hbf_{g_m}$}
                \psfrag{th}{$\theta$}
                \centerline{ \scalefig{0.55} \epsfbox{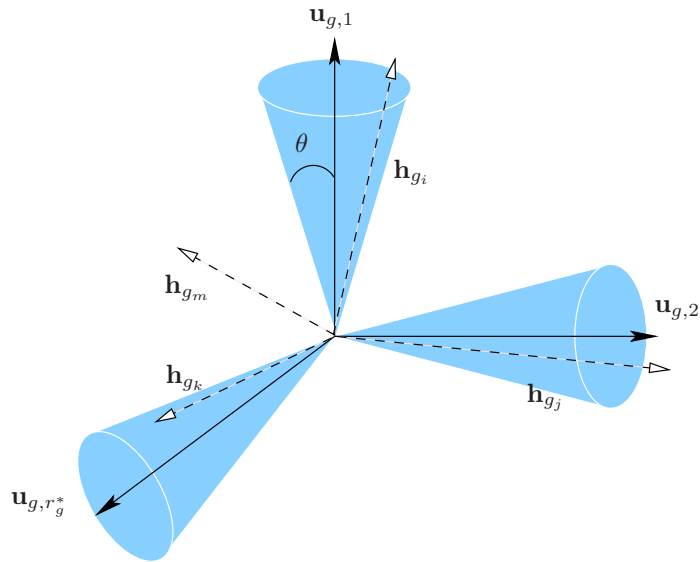} }
    \caption{User selection double cones (the other half of each double cone is not shown)}
    \label{fig:aus_region}
\end{psfrags}
\end{figure}
cone $i$, user $g_k$'s channel is well aligned with the reference
direction $i$ and user $g_k$ belongs to the $i$-th candidate set.
Now, we face the question ``which user in the candidate set $i$
should be selected?''  Since the semi-orthogonality of users to be
selected is already guaranteed by the user-selection cones, we
should choose the user in the candidate set $i$ that has the
maximum channel vector magnitude. This subsequently answers what
should be the CQI that should be feedbacked.  In the two-stage
beamforming setting with the assumption of the availability of the
effective channel $\gbf_{g_k}$, we just use $||\gbf_{g_k}||^2$
since
\begin{eqnarray}
||\gbf_{g_k}||^2&=&||\hbf_{g_k}^H\Ubf_g^*
||^2=||(\sum_{j=1}^{r_g}c_{g_k}^j\ubf_{g,j})^H\Ubf_g^*||^2= \sum_{j=1}^{r_g^*} |c_{g_k}^j|^2 \nonumber \\
&\stackrel{(a)}{\approx}& \sum_{j=1}^{r_g} |c_{g_k}^j|^2 = ||\hbf_{g_k}||^2,
\end{eqnarray}
where $\hbf_{g_k}=\sum_{j=1}^{r_g}c_{g_k}^j\ubf_{g,j}$ with complex linear combination coefficients $c_{g_k,j}$ by \eqref{eq:channel_linear_comb}, and  step (a) is valid because the most dominant $r_g^*$ eigenvectors are included.
  In this way, we can select a set of semi-orthogonal users with
large channel magnitude. Once the user selection is done, we do
not use $\sum_{i=1}^{r_g^*} d_{\kappa_i}\phibf_i$ as the transmit
signal as in RBF, but apply ZFBF with water-filling power allocation based on the effective CSI
obtained from the selected users. In general, the performance of
ZFBF degrades due to noise enhancement in the inversion process
and this degradation appears as the effective channel gain loss.
However, this effective channel gain loss is managed by the
semi-orthogonality of the selected users controlled by the
parameter $\alpha$, as we shall see in Section \ref{sec:AUS_opt}.
This post-user-selection beam refinement requires additional
effective CSI feedback only from the (a few) selected users.

For further improvement in the two-stage beamforming setting, we
can take the inter-group interference into consideration. When
$\Vbf_g = \Ubf_g^*$ for all $g$ and equal power $\rho
=\frac{P}{\sum_{g=1}^G r_g^*}$ for every scheduled user,   the norm of every column of the second-stage beamformer
$\Wbf_g$ is $\rho$ from \eqref{eq:power_formula}. From
\eqref{eq:rec_gk}, the average power of the inter-group
interference-plus-noise is upper bounded by
\begin{equation} \label{eq:noiseInterGintUB}
1+r_g^* \rho\sum_{g' \neq
g}\|\hbf_{g_k}^H\Vbf_{g'}\|^2
\end{equation}
 by norm's submultiplicativity and $\|\cdot\| \le \|\cdot\|_F$, where $\|\cdot\|_F$ is the Frobenius norm. ($\|\Wbf_{g'}\|_F = r_g^* \rho$.)
We define a {\em quasi-SINR} as
\begin{equation}  \label{eq:quasiSINRdef}
\Rc(g_k) := \frac{\|\gbf_{g_k}\|^2}{ \frac{1}{\rho}+r_g^* \sum_{g' \neq
g} \|\hbf_{g_k}^H\Vbf_{g'}\|^2}.
\end{equation}
In the definition, the intra-group interference does not exist
because ZFBF will be used in later post-user-selection beamforming. Without the
inter-group interference, the quasi-SINR is simply a scaled
version of the square of the effective channel vector magnitude.

\begin{remark}
The reason for the chosen definition of the quasi-SINR will become
clear in Section \ref{sec:AUS_opt}. This metric guarantees the
asymptotic optimality of the proposed method under the assumption
of the approximated BD in Condition \ref{cond:approxBD} for inter-group interference.
\end{remark}

\begin{remark} The effective CSI and the average inter-group-interference-plus-noise power can easily be estimated at UTs
during the downlink training period.   The received signal model
\eqref{eq:rec_gk} at user $g_k$ can be rewritten by combining all intra-group signals as
\begin{align} \label{eq:channelnoiseest}
y_{g_k} &= \underbrace{\hbf_{g_k}^H \Vbf_g}_{\gbf_{g_k}^H} \Wbf_g \dbf_g +
\sum_{g' \neq g} \hbf_{g_k}^H \Vbf_{g'} \Wbf_{g'}\dbf_{g'} + n_{g_k}.
\end{align}
First, the effective CSI $\gbf_{g_k}$ can easily be estimated at
UTs  during the downlink training period. Please see
\cite{Noh&Zoltowski&Sung&Love:14JSTSP} for this. Furthermore,
during the downlink training period,  the average inter-group
interference-plus-noise power  $1+\rho\sum_{g' \neq
g}\|\hbf_{g_k}^H\Vbf_{g'}\|^2$ can also be estimated easily based
on \eqref{eq:channelnoiseest}. That is, the training $\Wbf_g$ and
$\dbf_g$ are known to all UTs in group $g$. Once $\gbf_{g_k}$ is
estimated at user $g_k$, user $g_k$ constructs $\gbf_{g_k}^H\Wbf_g
\dbf_g $, computes $y_{g_k}-\gbf_{g_k}^H\Wbf_g \dbf_g = \sum_{g'
\neq g} \hbf_{g_k}^H \Vbf_{g'} \Wbf_{g'}\dbf_{g'} + n_{g_k}$,
squares $y_{g_k}-\gbf_{g_k}^H\Wbf_g \dbf_g$, and averages the
result over a few training symbol times to obtain the desired
value.  If the training $\Wbf_{g'}$ and the actual data-transmitting
$\Wbf_{g'}$ have similar norm, the estimated average inter-group
interference-plus-noise power will be valid for the
data-transmission period.  Thus, UTs can easily compute the
proposed quasi-SINR during the training period.
\end{remark}

\vspace{0.5em}

Now, we  present our proposed user-scheduling-and-beamforming
algorithm named {\em `REference-based Distributed
(semi-)Orthogonal user Selection with Post-selection Beam
Refinement (ReDOS-PBR)'}:

\vspace{0.5em}

\begin{algorithm}[The Proposed User-Scheduling-And-Beamforming Method: ReDOS-PBR] \label{algo:ReDOS}

~
\begin{enumerate}
\item[ 0)] $\alpha \in (0,~1)$ is a pre-determined  parameter and
is shared by the BS and all UTs. The BS initializes
\begin{align}
\Wc_{g,i} &= \emptyset, ~\text{for}~ i=1,\cdots,r_g^*  \label{eq:algorithm_Wgi} \\
\Sc_g &= \emptyset.
\end{align}
Every user $g_k$ estimates $\gbf_{g_k}$ and $1+\rho\sum_{g' \neq
g}\|\hbf_{g_k}^H\Vbf_{g'}\|^2$.

\item Each user $g_k$ independently computes the following set:
\begin{equation} \label{eq:align}
\Ic_{g_k}:= \left\{i: \left|(\ebf_{i}^{(g)})^T
\frac{\gbf_{g_k}}{\|\gbf_{g_k}\|}\right| \ge \alpha,
~i=1,\cdots,r_g^*\right\},
\end{equation}
where $\ebf_{i}^{(g)}$ is the $i$-th column of $\Ibf_{r_g^*}$.
($(\ebf_{i}^{(g)})^T \frac{\gbf_{g_k}}{\|\gbf_{g_k}\|}$ is simply
the $i$-th element of $\frac{\gbf_{g_k}}{\|\gbf_{g_k}\|}$.)

If user $g_k$ has $\Ic_{g_k} \neq \emptyset$, then user $g_k$
finds
\begin{equation} \label{eq:most_align}
i^*_{g_k}=\underset{i \in \Ic_{g_k}}{\arg\max}
\left|(\ebf_{i}^{(g)})^T\frac{\gbf_{g_k}}{\|\gbf_{g_k}\|}\right|
\end{equation}
and feedbacks the CQI pair $(i^*_{g_k},\Rc(g_k))$ to the BS. If
$\Ic_{g_k} = \emptyset$,
 user $g_k$ does not feedback.

After the feedback, the BS updates $\Wc_{g,i^*_{g_k}} \leftarrow
\Wc_{g, i^*_{g_k}} \cup \{k\}$ and stores $\Rc(g_k)$. %
\item For $i=1,\cdots,r_g^*$, the BS finds
\begin{equation} \label{eq:max_sinr}
\kappa_{g,i} = \underset{k \in \Wc_{g,i}}{\arg\max} ~ \Rc(g_k),
\end{equation}
and updates
\begin{equation}
\Sc_g \leftarrow \Sc_g \cup \{\kappa_{g,i}\}.
\end{equation}
 \item The BS transmits a paging
signal to notify that the users in $\Sc_g$ are scheduled and then,
only the corresponding scheduled UTs feedback their effective CSI to the BS.
Finally, the BS constructs the MU-MIMO ZFBF precoder with water-filling power allocation for each
group based on the signal model \eqref{eq:RXsigybfg} and the acquired
effective CSI from the scheduled users, and transmits data  to the
scheduled UTs.
\end{enumerate}
\end{algorithm}

\vspace{0.5em}

In step 1), each user checks if its channel vector is contained in
each of the user-selection cones. If  user $g_k$ has a non-empty
set $\Ic_{g_k}$, then user $g_k$ finds the reference direction
that has the largest channel component and feedbacks the
corresponding reference direction index $i^*_{g_k}$ and the
quasi-SINR $\Rc(g_k)$ to the BS. If $\Ic_{g_k}=\emptyset$, then
user $g_k$ does not feedback any information to the BS. After the
feedback period is over, the BS makes $r_g^*$ candidate sets
$\Wc_{g,1},\cdots,\Wc_{g,r_g^*}$ for the $r_g^*$ reference directions for group
$g$, based on the CQI feedback information. Here,
 $\Wc_{g,i}$ represents the set of users whose channels
are contained in the user-selection cone around the $i$-th
reference direction. In step 2), the BS chooses the user
$\kappa_{g,i}$ having the largest quasi-SINR $\Rc(g_k)$ in each
set $\Wc_{g,i}$, $i=1,\cdots,r_g^*$, to construct the set $\Sc_g$
of scheduled users for each group $g$. In step 3), ZFBF is used
for the scheduled users. Here, more sophisticated MU-MIMO BF like
MMSE BF can also be used for the post-user-selection beam
refinement to yield better performance, if additional inter-group
interference and noise variance information is available at the BS
for the signal model \eqref{eq:RXsigybfg}. In the case of such
advanced post-user-selection beam refinement,
$\{\Wbf_g,g=1,\cdots,G\}$ should be designed jointly. However, since the semi-orthogonality among the selected users for each group and the approximated BD condition for inter-group interference are satisfied, ZFBF should be sufficient.

\vspace{0.5em}

\begin{remark}[Amount of feedback] First note that in ReDOS-PBR, {\em  user selection is done based on only CQI
feedback from possibly all users}
 and {\em post-user-selection beam refinement is done based on the CSI feedback from only the scheduled users}.
 The feedback difference in CQI and CSI is significant in  MIMO
 systems.
 The amount of feedback
required for the proposed method for group $g$ for one scheduling
interval is $\sum_{i=1}^{r_g^*}|\Wc_{g,i}|$ integers for user beam
index feedback, $\sum_{i=1}^{r_g^*}|\Wc_{g,i}|$ real numbers for
quasi-SINR feedback, and $2(r_g^*)^2$ real numbers for later
effective CSI feedback  because only $r_g^*$ users per group need
to feedback their effective CSI $\gbf_{\kappa_{g,i}}$ of complex
dimension $r_g^*$  for $\Vbf_g=\Ubf_g^*$.  As shown in Lemma
\ref{lemma:alpha_min} in the below, when $\alpha \le
1/\sqrt{r_g^*}$, $\Ic_{g_k}$ is a non-empty set for all $g_k$ and
thus, every user feedbacks its quasi-SINR to the BS. Hence, in
this case, $\sum_{i=1}^{r_g^*}|\Wc_{g,i}|$ reduces to $K_g$. When
$\alpha > 1/\sqrt{r_g^*}$, on the other hand,
$\Ic_{g_k}=\emptyset$ for some users and thus in this case,
$\sum_{i=1}^{r_g^*}|\Wc_{g,i}|$ can be less than $K_g$. In Section
\ref{sec:num_res}, numerical results show that many users do not
feedback even CQI to the BS for optimally chosen $\alpha$ and the
feedback overhead is reduced drastically.
\end{remark}

\vspace{0.5em}

\begin{remark}[Feedback structure and delay]
ReDOS-PBR requires the above-mentioned two-step feedback: The CQI
feedback phase and the CSI feedback phase. In practical cellular
systems, time is segmented into contiguous radio frames and each
radio frame is one scheduling interval.  If both feedback phases
can be finished within one data transmission radio frame by using some control channel, there is
no additional delay in feedback.
\end{remark}

\begin{lemma}  \label{lemma:alpha_min}
When $\alpha \le \alpha_{min}:=1/\sqrt{r_g^*}$, ~$\Ic_{g_k}$ is a non-empty set for all $g_k$.
\end{lemma}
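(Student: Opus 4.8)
The plan is to reduce the claim to an elementary $\ell_2$-versus-$\ell_\infty$ norm inequality for the normalized effective channel vector. First I would introduce the unit vector $\bar{\gbf}_{g_k} := \gbf_{g_k}/\|\gbf_{g_k}\|$, which lives in ${\mathbb{C}}^{r_g^*}$ and satisfies $\|\bar{\gbf}_{g_k}\| = 1$ by construction. By definition \eqref{eq:align}, an index $i$ belongs to $\Ic_{g_k}$ precisely when $|(\ebf_{i}^{(g)})^T \bar{\gbf}_{g_k}| \ge \alpha$, where $(\ebf_{i}^{(g)})^T \bar{\gbf}_{g_k}$ is just the $i$-th entry of $\bar{\gbf}_{g_k}$. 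Hence proving $\Ic_{g_k} \neq \emptyset$ is equivalent to showing that the largest-magnitude entry of $\bar{\gbf}_{g_k}$ is at least $\alpha$, i.e. $\max_{1 \le i \le r_g^*} |(\ebf_{i}^{(g)})^T \bar{\gbf}_{g_k}| \ge \alpha$.

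The key step is the norm bound
\[
1 = \|\bar{\gbf}_{g_k}\|^2 = \sum_{i=1}^{r_g^*} \left|(\ebf_{i}^{(g)})^T \bar{\gbf}_{g_k}\right|^2 \le r_g^* \max_{1 \le i \le r_g^*} \left|(\ebf_{i}^{(g)})^T \bar{\gbf}_{g_k}\right|^2,
\]
which gives $\max_i |(\ebf_{i}^{(g)})^T \bar{\gbf}_{g_k}| \ge 1/\sqrt{r_g^*} = \alpha_{min}$. Combining this with the hypothesis $\alpha \le \alpha_{min}$ yields $\max_i |(\ebf_{i}^{(g)})^T \bar{\gbf}_{g_k}| \ge 1/\sqrt{r_g^*} \ge \alpha$, so the maximizing index (which is exactly the index $i^*_{g_k}$ of \eqref{eq:most_align}) lies in $\Ic_{g_k}$, and therefore $\Ic_{g_k} \neq \emptyset$. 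Equivalently, one may argue by contradiction: if every entry satisfied $|(\ebf_{i}^{(g)})^T \bar{\gbf}_{g_k}| < \alpha$, then summing squares would give $1 = \|\bar{\gbf}_{g_k}\|^2 < r_g^* \alpha^2 \le 1$, a contradiction.

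I do not expect a substantive obstacle here; the statement is genuinely elementary once the right object is identified. The only points requiring care are, first, that the threshold is applied to the \emph{normalized} vector $\gbf_{g_k}/\|\gbf_{g_k}\|$ rather than to $\hbf_{g_k}$ or $\gbf_{g_k}$ itself, which is what makes the unit-norm identity $\sum_i |\cdot|^2 = 1$ available and drives the whole argument; and second, that the constant $\alpha_{min} = 1/\sqrt{r_g^*}$ is the \emph{tight} threshold, since equality in the norm bound holds exactly when all entries of $\bar{\gbf}_{g_k}$ share the same magnitude, so no larger value of $\alpha$ could guarantee non-emptiness for every possible $\gbf_{g_k}$.
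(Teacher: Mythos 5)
Your proposal is correct and is essentially the paper's own argument: the paper proves the lemma by the contradiction version you mention at the end (assuming all entries of $\gbf_{g_k}/\|\gbf_{g_k}\|$ have magnitude below $1/\sqrt{r_g^*}$ and deriving $1<1$ from the unit-norm identity). Your direct $\ell_2$-versus-$\ell_\infty$ phrasing and the tightness remark are just a mild repackaging of the same one-line computation.
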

\begin{proof}
Suppose that there is a user $g_k$ such that $\Ic_{g_k} = \emptyset$.
Then,
$\left|(\ebf_i^{(g)})^T\frac{\gbf_{g_k}}{\|\gbf_{g_k}\|} \right| < 1/\sqrt{r_g^*}~$
for all $i=1,\cdots,r_g^*$.
Therefore, we have
\begin{equation}
1 = \left\|\frac{\gbf_{g_k}}{\|\gbf_{g_k}\|}\right\|^2
= \sum_{i=1}^{r_g^*}\left|(\ebf_i^{(g)})^T \frac{\gbf_{g_k}}{\|\gbf_{g_k}\|}\right|^2
<1,
\end{equation}
and have contradiction. Hence, the claim follows.
\end{proof}

\begin{remark} \label{remark:full_disjoint_cones}
Lemma \ref{lemma:alpha_min} implies that $ \bigcup_{i=1}^{r_g^*} \Cc_{g,i} \supset {\mathbb{C}}^{r_g*}$ for $\alpha \le \alpha_{min}=1/\sqrt{r_g^*}$.  On the other hand, when $\alpha > \frac{1}{\sqrt{2}}$, $\Cc_{g,i} \cap \Cc_{g,j} = \emptyset $ for $i \ne j$, because the angle $\theta$ in Fig.  \ref{fig:aus_region} is $\pi/4$ when $\alpha=1/\sqrt{2}$.
\end{remark}

Lemma \ref{lemma:alpha_min} and $\alpha_{min}$  will be useful in
Section \ref{sec:fairness}.

%%%%%%%%%%%%%%%%%%%%%%%%%%%%%%%%%%%%%%%%%%%%%%%%%%%%%%%%%%%%%%%%%%%%
\section{Optimality of The Proposed Method}
\label{sec:AUS_opt}
%%%%%%%%%%%%%%%%%%%%%%%%%%%%%%%%%%%%%%%%%%%%%%%%%%%%%%%%%%%%%%%%%%%%

In this section, we prove the asymptotical optimality of the proposed method
 as $K \to \infty$. We begin with the optimal sum capacity scaling law of
a $K$-user MIMO broadcast channel consisting of multiple
groups with each group's having the same channel covariance
matrix, provided in \cite{Adhikary&Caire:13arXiv}.

\vspace{0.5em}
\begin{theorem}\cite{Adhikary&Caire:13arXiv} \label{thm:DPC}
In a MU-MIMO downlink system composed of a BS with $M$ transmit antennas
and total power constraint $P$ and $K$ users each with a single receive
antenna divided into $G$ groups of equal size $K'=K/G=K_g$, where
the channel vector of each user in group $g$ is independent and
identically distributed (i.i.d.) from $\Cc\Nc({\bf 0}, \Rbf_g)$
for $g=1,\cdots,G$, the sum capacity (which is achieved by DPC)
scales as
\begin{equation}  \label{eq:R_DPC}
R_{DPC} = \beta \log\log (K') + \beta \log \frac{P}{\beta} + O(1)
\end{equation}
where $\beta = \min\{M, \sum_{g=1}^G r_g\}$ and $O(1)$ denotes a bounded constant
independent of $K'$, as $K' \to \infty$.
\end{theorem}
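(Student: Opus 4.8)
The plan is to establish matching achievability (lower) and converse (upper) bounds on the DPC sum capacity, both with leading term $\beta\log\log(K')$, and to show that all dependence on the covariance matrices $\{\Rbf_g\}$ and their eigenvalues enters only through the $O(1)$ remainder. The starting observation is a dimensionality reduction: since $\hbf_{g_k}=\Ubf_g\Lambdabf_g^{1/2}\etabf_{g_k}$ lies in $\mathrm{span}(\Ubf_g)$, every channel vector in the system lies in $\cup_{g}\mathrm{span}(\Ubf_g)$, a subspace of dimension at most $\sum_{g=1}^G r_g$. Together with the $M$ transmit antennas, this caps the number of useful spatial degrees of freedom at $\beta=\min\{M,\sum_g r_g\}$, which is precisely why $\beta$ (and not $M$) appears in the scaling law. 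I would then write the sum capacity through BC--MAC duality as
\[
R_{DPC}=\Ebb\left[\max_{\{p_k\ge 0\}:\sum_k p_k\le P}\log\det\Big(\Ibf_M+\sum_k p_k\hbf_k\hbf_k^H\Big)\right],
\]
where $k$ ranges over all $K=GK'$ users, reducing the problem to the extreme-value behaviour of the Gram structure of the channel vectors.

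For the achievability bound, I would exhibit an explicit scheme and lower-bound its rate. A convenient choice is to beamform along the orthonormal eigen-directions $\{\ubf_{g,i}\}$ and, for each direction, select the single user whose channel is both strongly aligned with $\ubf_{g,i}$ and large in magnitude -- exactly the semi-orthogonal selection used later in the paper. The squared projection $|\hbf_{g_k}^H\ubf_{g,i}|^2$ is a scaled exponential random variable, so its maximum over the $K'$ users in a group concentrates around $\lambda_{g,i}\log K'$ by standard extreme-value theory for maxima of i.i.d. exponentials. Allocating equal power $P/\beta$ to each of the $\beta$ selected, nearly-orthogonal directions and applying ZFBF gives an aggregate rate at least $\sum_{\mathrm{dir}}\log\big(1+\tfrac{P}{\beta}\lambda\log K'\big)+O(1)$, where the $O(1)$ absorbs the bounded ZFBF orthogonality penalty. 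Expanding $\log\big(\tfrac{P}{\beta}\lambda\log K'\big)=\log\log K'+\log\tfrac{P}{\beta}+\log\lambda$ and summing over the $\beta$ dominant directions yields $\beta\log\log K'+\beta\log\tfrac{P}{\beta}+O(1)$; the factors $\log\lambda$ are bounded constants and fall into the remainder.

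For the converse, I would bound the objective by its rank and its largest eigenvalue. Since $\sum_k p_k\hbf_k\hbf_k^H$ has rank at most $\beta$ and its trace is controlled by $P\max_k\|\hbf_k\|^2$, concavity of $\log\det$ (equivalently AM--GM applied to the nonzero eigenvalues) gives $\log\det(\cdot)\le\beta\log\big(1+\tfrac{P}{\beta}\max_k\|\hbf_k\|^2\big)$. The quantity $\max_k\|\hbf_k\|^2$ is a maximum of $K$ chi-square-type variables and concentrates around $(\max_g\lambda_{g,1})\log K'$; one must additionally verify that its upper tail is light enough that the rare event $\{\max_k\|\hbf_k\|^2\gg\log K'\}$ contributes only $O(1)$ to the expectation. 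This delivers the matching upper bound $\beta\log\log K'+\beta\log\tfrac{P}{\beta}+O(1)$, and combining with the achievability bound proves the claim.

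The main obstacle is the extreme-value bookkeeping that makes both bounds sharp to within $O(1)$. In the converse one must control the right tail of $\max_k\|\hbf_k\|^2$ so that it does not inflate $\Ebb[\log\det(\cdot)]$ beyond order $\log\log K'$; in the achievability one must show simultaneously that some user is both near-aligned to each reference direction \emph{and} near-maximal in magnitude, and that the resulting set is orthogonal enough to keep the ZFBF loss $O(1)$. Balancing the alignment threshold against the magnitude order statistic so that both events hold with overwhelming probability -- and confirming that the equal power split $P/\beta$ is optimal to leading order -- is the delicate step. Once the maxima are pinned at order $\log K'$, the leading $\beta\log\log K'$ term and the $\beta\log(P/\beta)$ constant follow directly.
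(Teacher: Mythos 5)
You should first be aware that the paper does not prove this statement at all: its ``proof'' is the single line ``See Theorem 1 in \cite{Adhikary&Caire:13arXiv}.'' Theorem \ref{thm:DPC} is imported as the benchmark against which Theorem \ref{theo:OptimalityProof} is measured, so there is no in-paper argument to compare against. Judged on its own merits, your sketch reconstructs the standard Sharif--Hassibi/Yoo--Goldsmith route that the cited reference itself adapts to the correlated multi-group setting: BC--MAC duality, a converse via the rank bound plus AM--GM ($\log\det(\Ibf+\Abf)\le \beta\log(1+\mathrm{tr}(\Abf)/\beta)$ for $\mathrm{rank}(\Abf)\le\beta$, using that $r\log(1+S/r)$ is increasing in $r$) and the extreme-value behaviour of $\max_k\|\hbf_k\|^2$, and an achievability half via semi-orthogonal user selection with ZFBF. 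This is the right architecture and the converse half is complete modulo the routine tail integration you already flag. Two caveats. First, a small but real slip: $\cup_g\mathrm{span}(\Ubf_g)$ is not a subspace; you need the sum $\sum_g\mathrm{span}(\Ubf_g)$, and the identity $\dim\bigl(\sum_g\mathrm{span}(\Ubf_g)\bigr)=\min\{M,\sum_g r_g\}=\beta$ holds only if the group eigenspaces are in general position (if two groups shared a covariance matrix the true rank, and hence the prelog, would drop below $\beta$); this assumption is implicit in the cited theorem and should be made explicit. Second, the step you defer as ``delicate''---showing that for each reference direction some user is simultaneously well aligned and has squared projection of order $\log K'$, while the selected set stays well conditioned so the ZFBF loss is $O(1)$---is precisely the content of this paper's own Theorem \ref{theo:OptimalityProof}: the cone construction with fixed $\bar\alpha$, Lemma \ref{lemma:ConeDistance} bounding cross-cone inner products by $2\alpha\sqrt{1-\alpha^2}$, the Gershgorin bound \eqref{eq:Gbfiiinv_lower_bound}, and the generalized chi-square extreme-value machinery of Appendices \ref{appen:extrem} and \ref{sec:Appen-proof}. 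So your outline is sound, but the hard half of it is exactly the part the paper spends Section \ref{sec:AUS_opt} and two appendices establishing; citing or reproducing that apparatus is what would turn your sketch into a proof.
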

\begin{proof}
See Theorem 1 in \cite{Adhikary&Caire:13arXiv}.
\end{proof}

\vspace{0.5em}

 The same scaling law is achieved by
ReDOS-PBR under the approximate BD condition in Condition  \ref{cond:approxBD}.

\vspace{0.5em}

\begin{theorem} \label{theo:OptimalityProof}
In the system described in Theorem \ref{thm:DPC}, the sum rate of
the scheduled sets $\{\Sc_g\}$ by ReDOS-PBR scales as
\begin{equation} \label{eq:optimality}
\mathbb{E}\left[\sum_{g=1}^GR_{ZF,g}(\Sc_g)\right] \sim R_{DPC},
\end{equation}
where $x \sim y$ indicates that $\underset{K' \to
\infty}{\lim}x/y=1$. Here, $R_{ZF,g}(\Sc_g)$ is the sum rate of
the users in $\Sc_g$ determined by the proposed user-selection method with ZFBF
second-stage precoding.
\end{theorem}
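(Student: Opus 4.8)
The plan is to establish the two-sided estimate $\liminf_{K'\to\infty} x/y \ge 1$ together with $\limsup_{K'\to\infty} x/y \le 1$, where $x = \mathbb{E}[\sum_{g=1}^G R_{ZF,g}(\Sc_g)]$ and $y = R_{DPC}$. The upper bound is immediate: since the sum capacity of the broadcast channel scales as $R_{DPC}$ by Theorem \ref{thm:DPC}, and ReDOS-PBR (two-stage beamforming, cone-based user selection, and ZFBF with water-filling while treating inter-group interference as noise) is one particular achievable transmission strategy $\xbf=\Vbf\Wbf\dbf$, its instantaneous sum rate is dominated by the sum capacity for every channel realization; taking expectations gives $x \le R_{DPC}(1+o(1))$ and hence $\limsup x/y\le 1$. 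The entire burden therefore falls on the matching lower bound.

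For the lower bound I would argue group-by-group and stream-by-stream. First I would fix a cone parameter $\alpha$ so that each reference cone $\Cc_{g,i}$ has strictly positive probability $p_i:=\Pr[\hbf_{g_k}\in\Cc_{g,i}]>0$, and show that every candidate set $\Wc_{g,i}$ is non-empty with probability tending to one: the number of users landing in $\Cc_{g,i}$ is $\mathrm{Binomial}(K',p_i)$ and concentrates around $K' p_i=\Theta(K')$, so all $r_g^*$ cones per group are occupied w.h.p. and the scheme schedules $\sum_g r_g^* = \beta$ streams. Conditioned on this, the selected user $\kappa_{g,i}$ attains the largest quasi-SINR $\Rc(g_k)$ among $\Theta(K')$ i.i.d. candidates, and the key probabilistic step is an extreme-value estimate: using $\|\gbf_{g_k}\|^2=\sum_{j=1}^{r_g^*}\lambda_{g,j}|\eta_{g_k,j}|^2$ and the boundedness of the interference term in the denominator of \eqref{eq:quasiSINRdef} (see below), the maximum quasi-SINR over the cone grows like $c_{g,i}\log K'$ for a positive constant $c_{g,i}$, so that the per-stream contribution is $\log(1+\rho\cdot\text{(effective gain)}) = \log\log K' + \log\rho + O(1)$.

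The two ingredients that legitimize this estimate are where the real work lies. (i) Inter-group interference: under the approximate BD Condition \ref{cond:approxBD}, I would show that $\sum_{g'\ne g}\|\hbf_{g_k}^H\Vbf_{g'}\|^2$ has mean bounded independently of $K'$ (it is driven only by the discarded weak eigen-directions), so the denominator $\tfrac1\rho + r_g^*\sum_{g'\ne g}\|\hbf_{g_k}^H\Vbf_{g'}\|^2$ of \eqref{eq:quasiSINRdef} stays $\Theta(1)$ and merely perturbs the $O(1)$ term. (ii) ZF effective-gain loss: because every selected channel satisfies the cone constraint $|\hbf_{\kappa_{g,i}}^H\ubf_{g,i}|/\|\hbf_{\kappa_{g,i}}\|\ge\alpha'$ while distinct reference directions $\ubf_{g,i},\ubf_{g,j}$ are orthonormal, the normalized Gram matrix of $\{\gbf_{\kappa_{g,i}}\}_{i=1}^{r_g^*}$ is diagonally dominant with off-diagonal entries bounded by an explicit $\delta(\alpha)<1$; hence its smallest eigenvalue is bounded below by a positive constant, which lower-bounds each scheduled user's ZFBF effective channel gain by a constant fraction of $\|\gbf_{\kappa_{g,i}}\|^2$. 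This preserves the $\log K'$ growth after ZF and makes the water-filling allocation essentially uniform power $P/\beta$ per active stream in the limit, recovering the $\beta\log(P/\beta)$ term.

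Summing the $\beta$ per-stream contributions yields $x \ge \beta\log\log K' + \beta\log(P/\beta) + O(1)$ on the high-probability good event; since the rates are non-negative and the complementary (empty-cone or ill-conditioned) event has vanishing probability, the same bound survives the expectation, giving $\liminf x/y\ge 1$, and combining with the upper bound completes the proof. I expect the main obstacle to be item (ii) coupled with the extreme-value step: one must verify that conditioning a user's channel to lie in the cone does not destroy the $\Theta(\log K')$ scaling of its magnitude while simultaneously guaranteeing the uniform semi-orthogonality that keeps the ZF loss at $O(1)$, and that these two requirements are jointly satisfiable for an admissible $\alpha$, with Lemma \ref{lemma:alpha_min} ensuring the cones remain non-degenerate.
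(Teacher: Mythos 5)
Your plan matches the paper's proof essentially step for step: the ZF loss is controlled exactly as you describe, via the cone-induced bound $|\tilde{\gbf}_{\kappa_{g,i}}^H\tilde{\gbf}_{\kappa_{g,j}}|\le 2\alpha\sqrt{1-\alpha^2}$ and Gershgorin's theorem applied to the normalized Gram matrix for one fixed admissible $\alpha$, the $\log K'$ growth of the selected users' quasi-SINR is obtained by extreme value theory applied to the generalized chi-square law of $\|\gbf_{g_k}\|^2$, and the inter-group interference is tamed under approximate BD by restricting to the (constant-fraction) sub-population of users whose interference is below $\epsilon/r_g^*$ --- an event independent of cone containment because the two depend on disjoint subsets of the coefficients $\eta_{g_k,m}$, which is the precise mechanism your ``bounded mean'' remark needs. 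The only inaccuracy is the appeal to Lemma \ref{lemma:alpha_min}: the admissible $\alpha$ lies in $(1/\sqrt{2},1)$ where that lemma does not apply; non-degeneracy of the cones instead follows from each cone having strictly positive measure for any fixed $\alpha<1$, and the conditioning obstacle you flag is resolved by showing $\mathrm{Pr}\{\|\gbf_{g_k}\|^2\ge z\mid\hbf_{g_k}\in\overline{\Wc}_{g,i}\}\ge\mathrm{Pr}\{\|\gbf_{g_k}\|^2\ge z\}$, so cone containment can only help the magnitude statistics.
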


\vspace{0.5em}

\begin{proof}
Similarly to the asymptotic optimality proof of SUS-ZFBF in
\cite{Yoo&Goldsmith:06JSAC}, our proof of the asymptotic
optimality of ReDOS-PBR is by showing first that the effective channel gain associated with ReDOS-PBR is bounded below away from zero for some {\em fixed}  $\alpha$ strictly less than one and then showing that the multi-user diversity
gain reduction associated with ReDOS-PBR for that fixed $\alpha$ become negligible as  $K'
\to \infty$.\footnote{We borrowed the flow of our proof  from
\cite{Yoo&Goldsmith:06JSAC}. However, different techniques and
ideas are used for our proof of the asymptotic optimality of
ReDOS-PBR.}

From  \eqref{eq:RXsigybfg}, we have the received signal model for the scheduled users in $\Sc_g$ as
\begin{equation}  \label{eq:RXsigybfgSch}
\ybf_g(\Sc_g) = \Gbf_g(\Sc_g)\Wbf_g(\Sc_g)\dbf_g (\Sc_g)
+ \sum_{g'\neq g} \Hbf_g (\Sc_g) \Vbf_{g'} \Wbf_{g'} \dbf_{g'}
+ \nbf_g(\Sc_g),
\end{equation}
where $\Wbf_g(\Sc_g)=[\{\wbf_{g_k}\}_{k \in
\Sc_g}]=[\wbf_{\kappa_{g,1}},\cdots,\wbf_{\kappa_{g,r_g^*}}]$ and
$\Gbf_g(\Sc_g)=[\{\gbf_{g_k}\}_{k \in
\Sc_g}]^H=[\gbf_{\kappa_{g,1}},\cdots,$
$\gbf_{\kappa_{g,r_g^*}}]^H$ are respectively the submatrices of
$\Wbf_g$ and $\Gbf_g$ corresponding to the users in $\Sc_g$
obtained by ReDOS-PBR.

\textit{i) Lower bound on the effective channel gain:}
Since ZFBF is assumed for the second-stage beamforming with the signal model \eqref{eq:RXsigybfgSch}, we have
\begin{align}
\Wbf_g &:=\Wbf_g(\Sc_g) = [\wbf_{\kappa_{g,1}},\cdots,\wbf_{\kappa_{g,r_g^*}}] \nonumber\\
&=\Gbf_g^H(\Sc_g)\left[\Gbf_g(\Sc_g)\Gbf_g^H(\Sc_g)\right]^{-1}\Pbf_g\nonumber\\
&=:\tilde{\Wbf}_g\Pbf_g=[\tilde{\wbf}_{\kappa_{g,1}},\cdots,\tilde{\wbf}_{\kappa_{g,r_g^*}}]\Pbf_g,
\end{align}
where  $\Pbf_g=\text{diag}(\sqrt{P_{\kappa_{g,1}}},
\cdots,\sqrt{P_{\kappa_{g,r_g^*}}})$, and  $P_{\kappa_{g,i}}$ is
the transmit power scaling factor\footnote{Since the
pseudo-inverse
$\Gbf_g^H(\Sc_g)[\Gbf_g(\Sc_g)\Gbf_g^H(\Sc_g)]^{-1}$ is fixed for
the given set of the scheduled users' effective channel vectors,
we need $\Pbf_g$ to control the user power.} for the scheduled
user $\kappa_{g,i} \in \Sc_g$.  Substituting the above ZF
$\wbf_{\kappa_{g,1}},\cdots,\wbf_{\kappa_{g,r_g^*}}$ into the
received signal model \eqref{eq:rec_gk} of user $\kappa_{g,i}$ yields
\begin{align}\label{eq:zfbf_resulting_ygk}
y_{\kappa_{g,i}} &= \sqrt{P_{\kappa_{g,i}}} d_{\kappa_{g,i}} +
\sum_{g' \neq g} \hbf_{g_k}^H \Vbf_{g'} \Wbf_{g'}\dbf_{g'} +
n_{\kappa_{g,i}}, ~~~i=1,\cdots,r_g^*,
\end{align}
since $\Gbf_g(\Sc_g)\Wbf_g=\Pbf_g=\text{diag}(\sqrt{P_{\kappa_{g,1}}},
\cdots,\sqrt{P_{\kappa_{g,r_g^*}}})$. From
\eqref{eq:zfbf_resulting_ygk}, the sum rate of the ZF MU-MIMO
broadcast channel consisting of users
$\{\kappa_{g,1},\cdots,\kappa_{g,r_g^*}\}$ with power scaling
$\{P_{\kappa_{g,1}},\cdots,P_{\kappa_{g,r_g^*}}\}$ is given by
\cite{Cover&Thomas:book}
\begin{align}
R_{ZF,g}(\Sc_g) =& \underset{\{P_{\kappa_{g,i}}\}}{\max}\sum_{i=1}^{r_g^*}
\log\left(1+\frac{P_{\kappa_{g,i}}}{1 + \sum_{g'\neq g}
\|\hbf_{g_{\kappa_{g,i}}}^H\Vbf_{g'}\Wbf_{g'}\|^2}\right) \nonumber\\
& \text{s.t.} ~~\sum_{i=1}^{r_g^*} \gamma_{\kappa_{g,i}}^{-1}
P_{\kappa_{g,i}}
\le r_g^*\rho,  \label{eq:zf_sumrate}
\end{align}
where $r_g^*  \rho$ is the total transmit power\footnote{ We
assume that the total transmit power assigned to group $g$ is
proportional to the number of the scheduled users in group $g$,
and hence, it is  $r_g^*  \rho$. } assigned to group $g$;  by
\eqref{eq:power_formula} the actual power assigned to user
$\kappa_{g,i}$ is given by
$P_{\kappa_{g,i}}^{actual}=\|\wbf_{\kappa_{g,i}}\|^2 =
\gamma_{\kappa_{g,i}}^{-1}P_{\kappa_{g,i}}$; and the effective
channel gain\footnote{Note in the constraint \eqref{eq:zf_sumrate}
that the ZF loss appears as the shrinkage of the feasible region
of $(P_{\kappa_{g,1}},\cdots,P_{\kappa_{g,1}})$.  If
$\gbf_{\kappa_{g,1}},\cdots,$ $\gbf_{\kappa_{g,r_g^*}}$ are
perfectly orthogonal, then
$\gamma_{{\kappa_{g,i}}}=||\gbf_{\kappa_{g,i}}||^2$ and there is
no ZF loss.} $\gamma_{{\kappa_{g,i}}}$ for user $\kappa_{g,i}$ is
given by \cite{Yoo&Goldsmith:06JSAC,Dimic&Sidiropoulos:05SP,Peel&Hochwald&Swindlehurst:05COM}
\begin{equation}\label{eq:eff_gain}
\gamma_{{\kappa_{g,i}}} =
\frac{1}{[(\Gbf_g(\Sc_g)\Gbf_g(\Sc_g)^H)^{-1}]_{i,i}}.
\end{equation}
This is because
$\|\wbf_{\kappa_{g,i}}\|^2=\|\tilde{\wbf}_{\kappa_{g,i}}\|^2P_{\kappa_{g,i}}$
and $\|\tilde{\wbf}_{\kappa_{g,i}}\|^2=[
\tilde{\Wbf}_g^H\tilde{\Wbf}_g]_{i,i}=[(\Gbf_g(\Sc_g)\Gbf_g(\Sc_g)^H)^{-1}]_{i,i}$.

Now consider the denominator term in the RHS of
\eqref{eq:eff_gain}. Since $[\Gbf_g(\Sc_g)\Gbf_g(\Sc_g)^H]_{i,j} =
\gbf_{\kappa_{g,i}}^H\gbf_{\kappa_{g,j}}$, $\forall i,j$, it can be decomposed
 as
\begin{equation}\label{eq:decom}
\Gbf_g(\Sc_g)\Gbf_g(\Sc_g)^H= \Dbf\tilde{\Gbf}\Dbf,
\end{equation} where
$\Dbf=\text{diag}(\|\gbf_{\kappa_{g,1}}\|,\cdots,\|\gbf_{\kappa_{g,r_g^*}}\|)$
and
\begin{equation}
\tilde{\Gbf}= \left[
\begin{matrix}
{1} & {\tilde{\gbf}_{\kappa_{g,1}}^H\tilde{\gbf}_{\kappa_{g,2}}} & {\cdots} & {\tilde{\gbf}_{\kappa_{g,1}}^H\tilde{\gbf}_{\kappa_{g,r_g^*}}} \\
{\tilde{\gbf}_{\kappa_{g,2}}^H\tilde{\gbf}_{\kappa_{g,1}}} & {1} & {} & {\vdots} \\
{\vdots} & {} & {\ddots} & {\tilde{\gbf}_{\kappa_{g,r_g^*-1}}}^H\tilde{\gbf}_{\kappa_{g,r_g^*}} \\
\tilde{\gbf}_{\kappa_{g,r_g^*}}^H\tilde{\gbf}_{\kappa_{g,1}} & \cdots & \tilde{\gbf}_{\kappa_{g,r_g^*}}^H\tilde{\gbf}_{\kappa_{g,r_g^*-1}} & {1}
\end{matrix}
\right]
\end{equation}
with $\tilde{\gbf}_{\kappa_{g,i}}=\frac{\gbf_{\kappa_{g,i}}}{\|\gbf_{\kappa_{g,i}}\|}, ~\forall i$. Substituting \eqref{eq:decom} into \eqref{eq:eff_gain}, we have
\begin{align}
\gamma_{\kappa_{g,i}} &=
\frac{1}{[(\Gbf_g(\Sc_g)\Gbf_g(\Sc_g)^H)^{-1}]_{i,i}} \nonumber \\
&= \frac{1}{[\Dbf^{-1}\tilde{\Gbf}^{-1}\Dbf^{-1}]_{i,i}} \nonumber \\
&= \frac{\|\gbf_{\kappa_{g,i}}\|^2}{[\tilde{\Gbf}^{-1}]_{i,i}}. \label{eq:gamma_kappa_gi}
\end{align}
Consider the term $[\tilde{\Gbf}^{-1}]_{i,i}$ in \eqref{eq:gamma_kappa_gi}.  By Lemma \ref{lemma:ConeDistance} in Appendix \ref{appen:cone_inner_prod}, we have
\begin{equation}\label{eq:inner}
|\tilde{\gbf}_{\kappa_{g,i}}^H \tilde{\gbf}_{\kappa_{g,j}}|
\le 2 \alpha \sqrt{1-\alpha^2}~~ \text{for} ~~ i \neq j
\end{equation}
when
$\alpha \ge 1/\sqrt{2}$.
 By the Gershgorin
circle theorem \cite{Horn&Johnson:book} and \eqref{eq:inner},
every eigenvalue of the Hermitian matrix $\tilde{\Gbf}$ is in a
Gershgorin disk,\footnote{All Gershgorin disks of $\tilde{\Gbf}$
have the same center of one and the same radius upper bound. So,
we can use any of the Gershgorin disks of $\tilde{\Gbf}$.}  i.e.,
\begin{align}
\lambda(\tilde{\Gbf}) &\in \{z \in \mathbb{R}^+:
|z - 1|\le (r_g^* - 1)2\alpha\sqrt{1-\alpha^2}\}, \nonumber\\
&=\{ z \in \mathbb{R}^+:  1 - (r_g^* - 1)2\alpha\sqrt{1-\alpha^2} \le z \le 1+(r_g^* - 1)2\alpha\sqrt{1-\alpha^2} \}\label{eq:g_gorin}
\end{align}
where $\lambda(\tilde{\Gbf})$ is the set of eigenvalues of $\tilde{\Gbf}$. When $(r_g^*-1)2\alpha\sqrt{1-\alpha^2}<1$, equivalently,
\begin{equation}\label{eq:alphagershlower}
\alpha > \sqrt{\frac{1+\sqrt{\frac{r_g^*-2}{r_g^*-1}}}{2}},
\end{equation}
 we have a non-trivial lower bound on $\lambda_{min}(\tilde{\Gbf})$ and
\begin{equation} \label{eq:Gbfiiinv_lower_bound}
[\tilde{\Gbf}^{-1}]_{i,i} \le [\lambda_{min}(\tilde{\Gbf})]^{-1}
\overset{(a)}{\le} \frac{1}{1-(r_g^*-1)2\alpha\sqrt{1-\alpha^2}},
\end{equation}
since $\tilde{\Gbf}$ is self-adjoint and  (a) follows from \eqref{eq:g_gorin}, where $\lambda_{min}(\tilde{\Gbf})$ is the minimum eigenvalue of $\tilde{\Gbf}$.
Thus, from \eqref{eq:gamma_kappa_gi} and \eqref{eq:Gbfiiinv_lower_bound}, the effective channel gain $\gamma_{\kappa_{g,i}}$ is lower bounded by
\begin{equation} \label{eq:eff_gain_low}
\gamma_{\kappa_{g,i}} \ge \frac{\|\gbf_{\kappa_{g,i}}\|^2}
{\frac{1}{1-(r_g^*-1)2\alpha\sqrt{1-\alpha^2}}}.
\end{equation}
Note that the derived lower bound  \eqref{eq:eff_gain_low} on the effective channel gain is valid for any {\em fixed} $\alpha$ satisfying
\begin{equation}
\alpha ~>~ \sqrt{\frac{1+\sqrt{\frac{r_g^*-2}{r_g^*-1}}}{2}} ~\stackrel{(a)}{\ge}~ \frac{1}{\sqrt{2}},
\end{equation}
where $(a)$ for the validity of \eqref{eq:inner} is valid for any $r_g^* \ge 2$. By making $\alpha \uparrow 1$, we can completely eliminate the ZFBF loss. However, $\alpha \uparrow 1$ will lose the multiuser diversity gain. So, we fix $\alpha$ to an arbitrary number $\bar{\alpha}$ strictly less than one, independent of $K'$ such that
\begin{equation}  \label{eq:baralpha_range}
\bar{\alpha} \in \left(\sqrt{\frac{1+\sqrt{\frac{r_g^*-2}{r_g^*-1}}}{2}}, ~~~1 \right).
\end{equation}

\textit{ii) Multi-user diversity gain:} There are several
difficult points in handling the multi-user diversity gain of
ReDOS-PBR with the multi-group setting of JSDM.  The first point
is that only users whose channel vectors are contained in one of
the user-selection cones report quasi-SINR and  the second point
is that we should handle the inter-group interference properly.
Despite such difficulty we were able to show that the multi-user
diversity gain is still preserved for ReDOS-PBR under the
approximate BD condition. The main insight is that with fixed
$\bar{\alpha}$ in \eqref{eq:baralpha_range} strictly less than
one, independent of $K'$, the number of users whose channel
vectors are contained in each user-selection cone tends to
infinity as $K' \rightarrow \infty$ since each user-selection cone
occupies certain fixed non-trivial measure (or volume) in
${\mathbb{C}}^{r_g^*}$.

As in \cite{Yoo&Goldsmith:06JSAC}, the first difficulty mentioned above can be handled by defining
\begin{equation} \label{eq:phigkdef}
\phi_{g_k}^i= \left\{
\begin{array}{ll}
\Rc(g_k),& k \in \Wc_{g,i}, \\
0,& \text{otherwise}
\end{array}
\right.
\end{equation}
for all users $k=1\cdots,K_g=K'$ in group $g$.  Then, for a given $i$, the random variable $\phi_{g_k}^i$ is i.i.d. across $k$ in the same group $g$ since $\hbf_{g_k} \stackrel{i.i.d.}{\sim}\Cc\Nc({\mathbf{0}},\Rbf_g)$. Note that
\begin{equation}
\kappa_{g,i} = \mathop{\arg \max}_{k\in \Wc_{g,i}} \Rc(g_k)= \mathop{\arg \max}_{k\in \{1,\cdots,K_g=K^\prime\}} ~ \phi_{g_k}^i.
\end{equation}
The multi-user diversity gain results from choosing the best user among all users with i.i.d. channel realizations.
However, with ReDOS-PBR, for each data stream, the best user within $\Wc_{g,i}$ is chosen, and thus there exists some loss in the multi-user diversity gain. However, based on extreme value theory we have that
 for each $i$
\begin{equation}  \label{eq:multiuserdiversitygain}
\text{Pr}\{\phi_{\kappa_{g,i}}^i > u_{g}^i\} \ge 1-O(1/K'),
\end{equation}
for ReDOS-PBR under the approximate BD condition in Condition \ref{cond:approxBD},
where
\begin{equation}  \label{eq:multiDivugi}
u_{g}^i=(\lambda_{g,1}\log K' - \lambda_{g,1}\log \log K'
+ a_i)/(1/\rho+\epsilon);
\end{equation}
$\lambda_{g,1}$ is the maximum eigenvalue of $\Rbf_g$ (see \eqref{eq:model_channel_eigenvalue});
and  $a_i$ and $\epsilon$ are constants independent of $K'$. Proof of (\ref{eq:multiuserdiversitygain}, \ref{eq:multiDivugi}) is in Appendix \ref{sec:Appen-proof} with some prerequisite on extreme value theory in Appendix
\ref{appen:extrem}.

\textit{iii)} Finally, we show the asymptotic optimality \eqref{eq:optimality} of ReDOS-PBR based on {\it i)} and {\it ii)}. Fix $\alpha$ as $\bar{\alpha}$ in \eqref{eq:baralpha_range}. Then, we have
\begin{align}
&\mathbb{E}\left[\sum_{g=1}^G R_{ZF,g}(\Sc_g)\right] \nonumber\\
&\stackrel{(a)}{\ge} {\mathbb E}
\left[\sum_{g=1}^G\sum_{i=1}^{r_g^*}\log
\left(1 + \frac{\rho \gamma_{\kappa_{g,i}}}{1+
\sum_{g'\neq g}\|\hbf_{\kappa_{g,i}}^H\Vbf_{g'}\Wbf_{g'}\|^2}
\right) \right] \nonumber\\
&\overset{(b)}{\ge} \mathbb{E}\left[\sum_{g=1}^G
\sum_{i=1}^{r_g^\star}
\log\left(1+ \frac{ \|\gbf_{\kappa_{g,i}}\|^2
[1-(r_g^\star-1)2\alpha\sqrt{1-\alpha^2}]}{\frac{1}{\rho} +
r_g^* \sum_{g'\neq g}
\|\hbf_{\kappa_{g,i}}^H\Vbf_{g'}\|^2} \right) \right] \nonumber\\
&\overset{(c)}{\ge} \sum_{g=1}^G\sum_{i=1}^{r_g^\star}
\text{Pr}\{\phi_{\kappa_{g,i}}^i > u_g^i\}\log \left(1+u_g^i
[1-(r_g^\star -1)2\alpha\sqrt{1-\alpha^2}] \right) \nonumber\\
&\overset{(d)}{\ge} \sum_{g=1}^G\sum_{i=1}^{r_g^\star}
\left[1- O\left(\frac{1}{K'} \right) \right]\log \left(1+u_g^i
[1-(r_g^\star -1)2\alpha\sqrt{1-\alpha^2}] \right) \nonumber\\
&\stackrel{(e)}{\sim} \sum_{g=1}^G\sum_{i=1}^{r_g^*}\log\left(
1+ \left(\frac{1-(r_g^*-1)2\alpha\sqrt{1-\alpha^2}}
{1/\rho+\epsilon}\right)\lambda_{g,1}\log K'
\right) \label{eq:proof_theo2_e}\\
&\overset{(f)}{\sim}  \sum_{g=1}^G r_g^* \log(1+\rho \lambda_{g,1}\log K')
\label{eq:proof_theo2_f}\\
&\sim \left(\sum_{g=1}^G r_g^* \right) \log \rho
+ \sum_{g=1}^G r_g^* \log \lambda_{g,1}
+ \left(\sum_{g=1}^G r_g^* \right) \log \log K' \label{eq:last_equation}
\end{align}
where (a) follows from the suboptimal equal power allocation $\rho=\frac{P}{\sum_{g=1}^G r_g^*}=
\|\wbf_{\kappa_{g,i}}\|^2  =\gamma_{\kappa_{g,i}}^{-1}P_{\kappa_{g,i}}, ~\forall g,i$; (b) is obtained by
\eqref{eq:eff_gain_low} and \eqref{eq:noiseInterGintUB} valid for $\bar{\alpha}$; (c) holds by the definition \eqref{eq:quasiSINRdef} of quasi-SINR $\Rc(g_k)$ and the definition \eqref{eq:phigkdef} of $\phi^i_{g_k}$, and $\Ebb f(X) =\int_0^\infty f(x) p(x)dx\ge \mbox{Pr}(X \ge u)f(u)$ for a monotone increasing function $f$ (here, $f=\log$); (d) holds by \eqref{eq:multiuserdiversitygain}; (e) follows from $(1-O(1/K')) \sim 1$ and
$u_g^i \sim (\lambda_{g,1} \log K')/(1/\rho + \epsilon)$ from \eqref{eq:multiDivugi};
and (f) follows since the difference between  the two logarithmic terms in \eqref{eq:proof_theo2_e} and \eqref{eq:proof_theo2_f} converges to a constant independent of $K'$, given by
\[
\sum_{g=1}^G r_g^* \log\left(\frac{1+\rho  \epsilon}{1-(r_g^*-1)2\alpha
\sqrt{1-\alpha^2}}\right).
\]
Finally, consider \eqref{eq:last_equation}.
In both cases of
$\sum_{g=1}^G r_g <M$ and $\sum_{g=1}^G r_g \ge M$, we can
choose $r_g^*$ such that $\sum_{g=1}^G r_g^* = \min\{M,
\sum_{g=1}^G r_g\}=\beta$. Then, \eqref{eq:last_equation} is the same as \eqref{eq:R_DPC} since $P/\beta = \rho$.
\end{proof}

\vspace{0.5em}

Note that fixed $\alpha$ in the range of \eqref{eq:baralpha_range} {\em guarantees} the asymptotic optimality of ReDOS-PBR. We do not know whether $\alpha$ outside this range yields asymptotic optimality or not. (This depends on the tightness of the bound given by the Gershgorin circle theorem used in \eqref{eq:g_gorin}.) For proof of asymptotic optimality, the existence of one $\alpha$ value, i.e., $\bar{\alpha}$, is sufficient.  In the practical case of {\em finite} users in the cell, optimal $\alpha$ may be smaller than $\sqrt{\frac{1+\sqrt{\frac{r_g^*-2}{r_g^*-1}}}{2}}$. Numerical results in Section \ref{sec:num_res} shows that the performance of ReDOS-PBR in the finite-user case is quite insensitive to $\alpha$.

%%%%%%%%%%%%%%%%%%%%%%%%%%%%%%%%%%%%%%%%%%%%%%%%%%%%%%%%%%%%%%%%%%%%
\section{Extension}
\label{sec:fairness}
%%%%%%%%%%%%%%%%%%%%%%%%%%%%%%%%%%%%%%%%%%%%%%%%%%%%%%%%%%%%%%%%%%%%

In the previous section, we only discussed  user selection and beamforming
for maximizing the sum rate. Now, consider fairness among users. If the channel statistics are the same across users and the channel realizations are i.i.d. across scheduling intervals, the fairness issue will be resolved automatically  \cite{Viswanath&Tse&Laroia:02IT}.
However,  in slow-fading environments or in practical
downlink systems with different large-scale fading for users at different locations, some scheme should be implemented to impose fairness among users. Among several well-known fairness-imposing schemes
\cite{Viswanath&Tse&Laroia:02IT,Yoo&Goldsmith:06JSAC,Huang&Rao:13WC}, we here consider the round-robin (RR) scheme and the proportional fairness (PF) scheme, and modify ReDOS-PBR in the previous section for RR and PF. During this modification, we exploit the degree-of-freedom associated with the parameter $\alpha$ of ReDOS-PBR (i.e., cone-containment checking is done at UTs and $\alpha$ can be adapted properly) and the fact that every UT reports CQI when $\alpha \le \alpha_{min}$ by Lemma \ref{lemma:alpha_min}.

\subsection{ReDOS-PBR for Round Robin}

There can be many modified versions ReDOS-PBR for RR
(ReDOS-PBR-RR). Here we consider the following modified scheme. In
RR, all users should be served in one round of scheduling.  For
this, we successively apply ReDOS-PBR to each scheduling interval
with controlling $\alpha$, until no unserved users are left.  For
the proposed ReDOS-PBR-RR, we assume that $\alpha$ is adapted at
the BS every scheduling interval and there exists a downlink
broadcasting control channel that informs every UT of the new
$\alpha$ value each scheduling interval.

Since  large $\alpha$ reduces the effective channel gain loss of the assumed ZFBF,
 large $\alpha$ is desired from the perspective of the effective channel gain. However, when $\alpha$ is too large (close to 1),
we would have $\Wc_{g,i} = \emptyset$ for some $i$, even though there are some users whose channels are
roughly aligned to the $i$-th reference direction. In this case, no user will be selected for the $i$-th reference direction and the spatial multiplexing gain will be reduced. Such an event can be avoided by reducing $\alpha$. In the proposed ReDOS-PBR-RR,  to detect such an event, every UT feeds back the most aligned reference direction index $i_{g_k}^*$ all the time, but feeds back $\Rc(g_k)$ only when the user's channel vector is contained in the cone $\Cc_{g,i_{g_k}^*}$. After the BS collects CQI from all UTs, the BS checks if there exists a reference direction index that has no associated $\Rc(g_k)$ feedback. Then, the BS knows whether the current $\alpha$ value is too high or not.

We now present the proposed ReDOS-PBR-RR trying to attain good
trade-off between the effective channel gain  and the spatial
multiplexing gain by exploiting the considered CQI feedback
strategy.

\vspace{0.5em}

\begin{algorithm}[ReDOS-PBR-RR]

~
\begin{enumerate}

\item[0)] Initialize $\alpha_{g}(1) \in [\alpha_{min},1)$,
$\Delta_\alpha>0$,
$\Kc_g = \{1,\cdots,K_g\}$,  and $t=1$.
\item At the scheduling interval $t$, choose the set $\Sc_g(t)$ of users among the users in $\Kc_g$ by ReDOS-PBR
with $\alpha_g(t)$. On the contrary to the original ReDOS-PBR,  every user with  $\Ic_{g_k} = \emptyset$ also feeds back its reference direction
index without the corresponding quasi-SINR in the CQI feedback phase for the modified version.

\item If $|\Sc_g(t)| < r_g^*$,
update $\alpha_g(t+1) \leftarrow \alpha_g(t) - \Delta_\alpha$. (That is, target more spatial multiplexing gain.)
If $|\Sc_g(t)| = r_g^*$,
 update $\alpha_g(t+1) \leftarrow \alpha_g(t)  +\Delta_\alpha$. (That is, target more effective channel gain.)
When $\alpha_g(t+1) \notin [\alpha_{min},1)$,
$\alpha_g(t+1) \leftarrow \alpha_g(t)$. The new $\alpha(t+1)$ is broadcast to all UTs.

\item Page the selected users $\Sc_g(t)$, obtain CSI from them, transmit data to them with ZFBF, and  update
$\Kc_g \leftarrow \Kc_g\backslash \Sc_g(t)$.

\item If $\Kc_g \neq \emptyset$,
update $t \leftarrow t+1$ and go to step 1).
Otherwise, stop.
\end{enumerate}
\end{algorithm}

\subsection{ReDOS-PBR for Proportional Fairness}

The proportionally fair (PF) scheduling  algorithm exploits
multiuser diversity gain with consideration of fairness
\cite{Viswanath&Tse&Laroia:02IT}. In the single-input
single-output (SISO) PF algorithm, the BS keeps track of the
average past served rate  $\mu_{g_k}$ for each user $g_k$ and
selects the user that has the maximum of the current supportable
rate $R_{g_k}(t)=\log(1+|h_{g_k}(t)|^2)$ (determined by the user's
current channel state) divided by the user's past average served
rate $\mu_{g_k}$. That is, the selection criterion is
$\frac{R_{g_k}(t)}{\mu_{g_k}}$ and the average served rate is
updated by a simple first-order autoregressive (AR) filter as
\begin{equation}
\mu_{g_k}(t+1) = \left(1-  \delta \right) \mu_{g_k}(t) + \delta
R_{g_k}(t) I_{\{g_k \in \Sc_g(t)\}}, \label{eq:mu_update2}
\end{equation}
where $I_A$ is the indicator function of event $A$, and $\Sc_g(t)$
is the set of scheduled users at time $t$. In
\cite{Yoo&Goldsmith:06JSAC}, the PF algorithm was extended to
incorporate MIMO situation and was applied to SUS-ZFBF. The main
difference between the SISO and MIMO cases is that the supportable
rate $R(g_k,\Sc_g(t))$ of each user $g_k$ cannot be computed
before user selection, because the rate itself depends on the user
selection in the MIMO case. However, this difficulty was intelligently circumvented
in \cite{Yoo&Goldsmith:06JSAC}, based on the semi-orthogonality of
the selected users. Since ReDOS-PBR also possesses the
semi-orthogonality among the selected users, we can apply the same
idea  as that in \cite{Yoo&Goldsmith:06JSAC} here. Since the
selected users are semi-orthogonal, we approximate the supportable
rate simply by
\begin{equation}
R(g_k,\Sc_g(t)) \approx \log(1+\Rc(g_k)) =: \hat{R}(g_k)(t).
\end{equation}
Thus, in the modified ReDOS-PBR for proportional fairness
(ReDOS-PBR-PF), for each reference direction, after the CQI
feedback phase, we select
\begin{equation}\label{eq:PFS_decom}
\kappa_{g,i}=\mathop{\arg\max}_{k \in \Wc_{g,i}} \frac{\hat{R}(g_k)(t)}{\mu_{g_k}(t)}
~~~\text{for}~ i =1,\cdots,r_g^*.
\end{equation}
Then, the BS collects CSI from the selected users, transmits data
after post-selection beam refinement, computes the exact served
rate for the scheduled users, and update $\mu_{g_k}$ by
\eqref{eq:mu_update2}.

One requirement for ReDOS-PBR-PF to compute \eqref{eq:PFS_decom}
for all users at each scheduling interval $t$ is that all users
should report their CQI (the reference beam index and quasi-SINR)
to the BS at every interval $t$.  This can be done simply by
setting $\alpha = \alpha_{min}$ for all users by Lemma
\ref{lemma:alpha_min}. However, CQI feedback can be reduced by
exploiting the property of PF itself and distributed and
individual control $\alpha$ at each UT. Note that once a user
$g_k$ is served, $\mu_{g_k}$ increases suddenly and the selection
criterion in \eqref{eq:PFS_decom} decreases suddenly. Hence, user
$g_k$ will not be selected in the next scheduling interval unless
user $g_k$'s channel vector at the next scheduling interval is
highly aligned with some reference beam direction with large
magnitude. Therefore, the served user can increase its own
$\alpha$ denoted by $\alpha_{g_k}(t)$ suddenly by some step
$\Delta_{\alpha,up}$, targeting bigger chance for good channel
realization. When the user is not served, $\alpha_{g_k}$ is
reduced by $\Delta_{\alpha,down}$ (say,
$\Delta_{\alpha,down}=\Delta_{\alpha,up}/T$ with $T > 1$). Then,
$\alpha_{g_k}(t)$ comes back to $\alpha_{min}$ in some time and
user $g_k$ surely reports CQI again. Here, $\Delta_{\alpha,up}$
and $\Delta_{\alpha,down}$ are system design parameters which
should be determined properly.  Note that $\Delta_{\alpha,up}$
and $\Delta_{\alpha,down}$ can be used not only for feedback reduction but also for fairness enhancement, since it is highly likely that a served user will not be served again successively.
Such an efficient
semi-orthogonality and feedback control is possible for ReDOS-PBR
because  cone-containment checking for semi-orthogonality is done
individually  at UTs.  Summarizing the above-mentioned idea, we
now present the proposed ReDOS-PBR-PF:

\vspace{0.5em}

\begin{algorithm}[ReDOS-PBR-PF]  \label{algo:ReDOS_PBR_PF}

~
\begin{enumerate}
\item[0)] Initialize $\alpha_{g_k}(1)=\alpha_{min}$, $\mu_{g_k}(1)
= \mu >0$, $\forall g,k$, and  $t=1$, and
$\Delta_{\alpha,up}>\Delta_{\alpha,down}>0$. (Now each user has
its own $\alpha_{g_k}(t)$.)

\item At time $t$, each user $g_k$ computes $\Ic_{g_k}$ in
\eqref{eq:align} based on  its own $\alpha_{g_k}(t)$. Then, follow
the remaining sub-steps in step 1) of original ReDOS-PBR.

\item The BS chooses the set of users $\Sc_g(t)$ by computing
\eqref{eq:PFS_decom} after the CQI feedback phase.

\item After the CSI feedback phase, the BS serves the scheduled
users in $\Sc_{g}(t)$ with ZFBF. Then, the BS updates
$\mu_{g_k}(t)$ according to \eqref{eq:mu_update2} with the
actually served rate $R(g_k,\Sc_g(t))$.

\item The users in $\Sc_g(t)$ update $\alpha_{g_k}(t+1) \leftarrow
\alpha_{g_k}(t) + \Delta_{\alpha,up}$ and other unserved  users
update $\alpha_{g_k}(t+1) \leftarrow \alpha_{g_k}(t) -
\Delta_{\alpha,down}$. (Users know whether they are served or not
during the scheduled user paging time.) When $\alpha_{g_k}(t+1)
\notin [\alpha_{min},1)$, $\alpha_{g_k}(t+1) \leftarrow
\alpha_{g_k}(t)$.

\item Update $t \leftarrow t +1$ and go to step 1).
\end{enumerate}

\end{algorithm}

\vspace{0.5em}

In the above algorithm, UTs exploit $\alpha$ for efficient CQI
feedback control, but UTs can exploit $\hat{R}(g_k)(t)$ in
addition to $\alpha_{g_k}(t)$ for the same purpose since each UT
can compute $\hat{R}(g_k)(t)=\log(1+\Rc(g_k))$ by itself. There
can be various ways to combine $(\alpha_{g_k}(t),
\hat{R}_{g_k}(t))$ for efficient distributed CQI feedback control.

\vspace{0.5em}

\begin{remark}[On extension to the case of UTs with multiple receive antennas]
ReDOS-PBR can be extended without difficulty to the case in which
UTs have multiple receive antennas. In this case, each antenna can
be regarded as a different user, and ReDOS-PBR for single-antenna
UTs can be applied \cite{Yoo&Goldsmith:06JSAC}. Here, a UT with
multiple receive antenna imposes a restriction that the candidate
set for one receive antenna and that of another receive antenna
are different.
\end{remark}

%%%%%%%%%%%%%%%%%%%%%%%%%%%%%%%%%%%%%%%%%%%%%%%%%%%%%%%%%%%%%%%%%%%%
\section{Numerical Results}
\label{sec:num_res}
%%%%%%%%%%%%%%%%%%%%%%%%%%%%%%%%%%%%%%%%%%%%%%%%%%%%%%%%%%%%%%%%%%%%

\begin{figure}[tp]
    \centerline{
\scalefig{0.5} \epsfbox{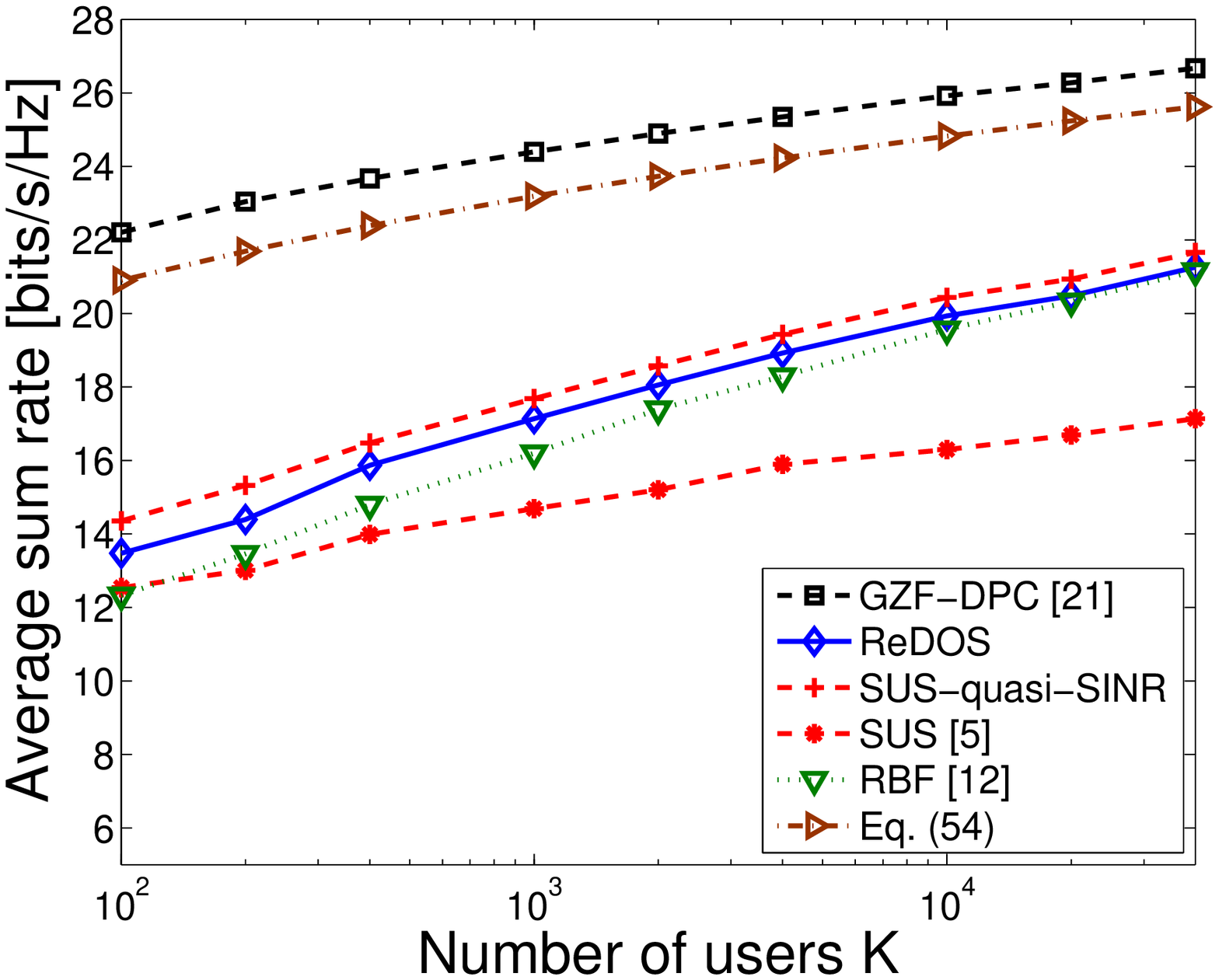}
\scalefig{0.51} \epsfbox{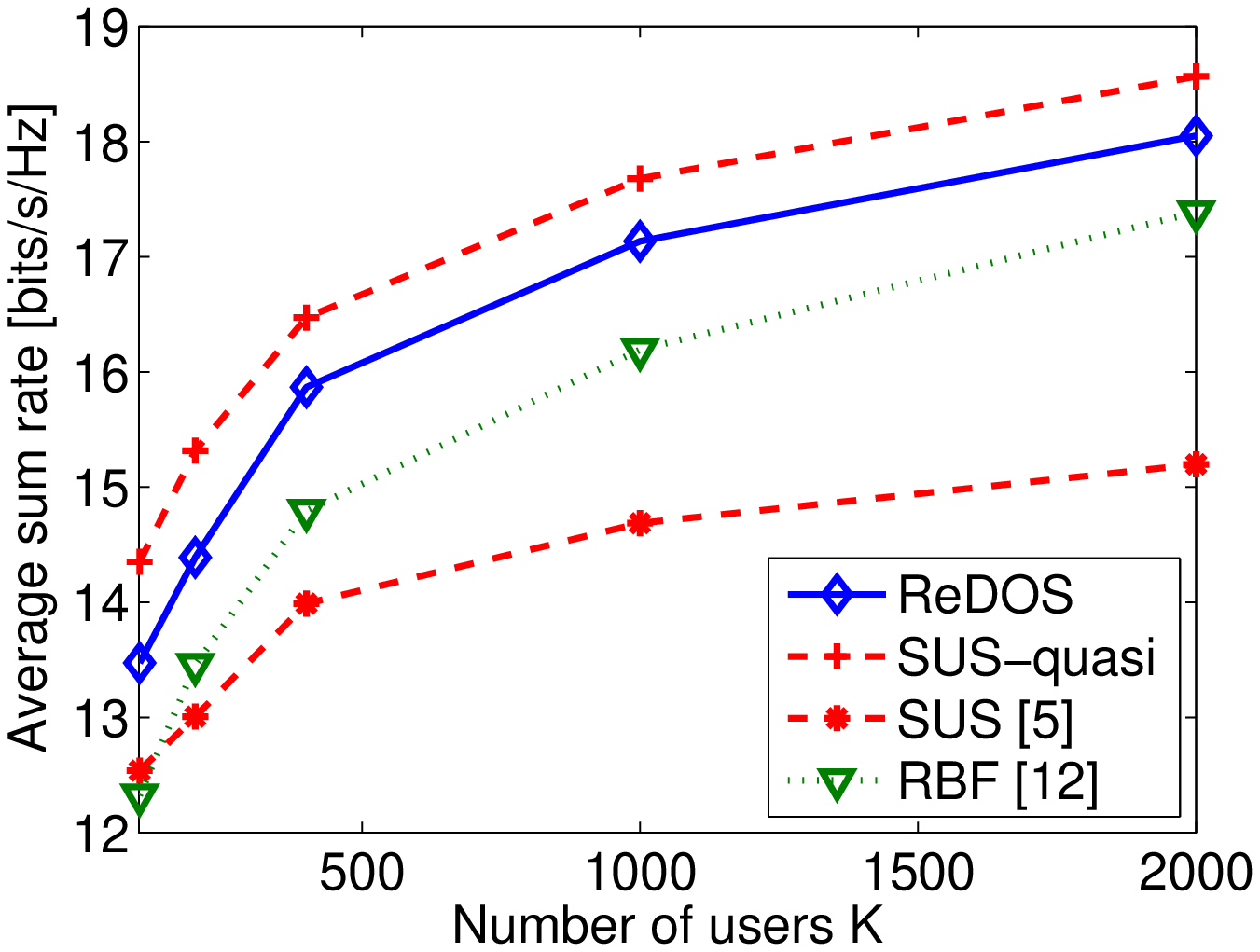}}
    \caption{Multi-group performance: (a) average sum rate w.r.t. the number of users and (b) the same figure as (a) with the range from $K=100$ to $K=2000$}
    \label{fig:multi_group_scaling_law}
\end{figure}

In this section, we provide some numerical results regarding the proposed user-scheduling-and-beamforming method. First, we verified the asymptotic analysis in Section \ref{sec:AUS_opt}. To verify the asymptotic analysis, we considered a small MISO downlink system (with two groups ($G=2$) and inter-group interference) to which DPC-based beamforming \cite{Dimic&Sidiropoulos:05SP} can be applied. The system consisted of a BS with four transmit antennas ($M=4$) and $P = 15$ dB and $K$ single-antenna UTs, and the channel vectors were independently generated according to the model \eqref{eq:channel_linear_comb} with $\Rbf_1=\Ubf_1 \Lambdabf_1 \Ubf_1^H$ and $\Rbf_2=\Ubf_2 \Lambdabf_2 \Ubf_2^H$, where $\Ubf_1=\Fbf_{DFT}^{(4)}(:,1:3)$, $\Ubf_2=\Fbf_{DFT}^{(4)}(:,3:4)$, $\Lambdabf_1 = \mbox{diag}(1,r,r^2)$, $\Lambdabf_2 = \mbox{diag}(1,r)$, $r=0.7$, and $\Fbf_{DFT}^{(4)}$ is the 4-point discrete Fourier transform (DFT) matrix. The pre-beamformer matrices were chosen as $\Vbf_1= \Ubf_1^*=\Ubf_1(:,1:2)$ and $\Vbf_2=\Ubf_2^*=\Ubf_2$ to satisfy the approximate BD condition in Condition  \ref{cond:approxBD}. Fig. \ref{fig:multi_group_scaling_law} shows the result. In the figure, the performance of the DPC-based beamforming in \cite{Dimic&Sidiropoulos:05SP} is shown as the performance reference. (In \cite{Dimic&Sidiropoulos:05SP}, the authors proposed a greedy user selection method based on QR decomposition and the assumption of the availability of DPC.) It is seen that the predicted asymptotic scaling behavior of ReDOS-PBR shown in eq. \eqref{eq:proof_theo2_f} has the same slope as the DPC-based user selection method in \cite{Dimic&Sidiropoulos:05SP}. The actual finite-user sum-rate behavior of several algorithms is also shown in Fig. \ref{fig:multi_group_scaling_law}. We considered ReDOS-PBR, RBF in \cite{Adhikary&Caire:13arXiv}, the original SUS-ZFBF in \cite{Yoo&Goldsmith:06JSAC}, and a modified SUS-ZFBF using quasi-SINR in \eqref{eq:quasiSINRdef}. (Since the original SUS-ZFBF with the channel norm criterion was proposed for the single-cell (or single-group) case, we considered SUS-ZFBF with quasi-SINR for the multi-group case for fair comparison.) It is seen that SUS-ZFBF with quasi-SINR, ReDOS-PBR and RBF all follow the slope of the DPC-based scheme as expected. It is also seen that SUS-ZFBF with the norm criterion does not handle inter-group interference properly. As expected, SUS-ZFBF (with quasi-SINR) performs best, RBF performs worst, and ReDOS-PBR is in-between. In the considered small system case, the performance difference between the three algorithms is not so significant.

With the asymptotic scaling behavior w.r.t. $K$ verified, we considered a more realistic scenario.
We considered a MISO downlink  system
where a BS with $P=15$ dB was equipped with a
ULA of $M=32$ antenna elements and each of $K$ UTs had  a single receive antenna. The UTs were grouped into eight groups ($G=8$), and the BS served four UTs ($r_g^*=4$) simultaneously for each group. The channel covariance matrix and the pre-beamformer matrix of each group were chosen as
\begin{align*}
\Ubf_1 &= \Fbf_{DFT}^{(32)}[:,1:5],  ~\Vbf_1=\Ubf_1^*=\Ubf_1[:,1:4](=\Fbf_{DFT}^{(32)}[:,1:4])\\
\Ubf_2 &= \Fbf_{DFT}^{(32)}[:,5:9],  ~\Vbf_2=\Ubf_2^*=\Ubf_2[:,1:4](=\Fbf_{DFT}^{(32)}[:,5:8])\\
\Ubf_3 &= \Fbf_{DFT}^{(32)}[:,9:13],  ~\Vbf_3=\Ubf_3^*=\Ubf_3[:,1:4](=\Fbf_{DFT}^{(32)}[:,9:12])\\
\Ubf_4 &= \Fbf_{DFT}^{(32)}[:,13:17],  ~\Vbf_4=\Ubf_4^*=\Ubf_4[:,1:4](=\Fbf_{DFT}^{(32)}[:,13:16])\\
\Ubf_5 &= \Fbf_{DFT}^{(32)}[:,17:21], ~\Vbf_5=\Ubf_5^*=\Ubf_5[:,1:4](=\Fbf_{DFT}^{(32)}[:,17:20])\\
\Ubf_6 &= \Fbf_{DFT}^{(32)}[:,21:25], ~\Vbf_6=\Ubf_6^*=\Ubf_6[:,1:4](=\Fbf_{DFT}^{(32)}[:,21:24])\\
\Ubf_7 &= \Fbf_{DFT}^{(32)}[:,25:29], ~\Vbf_7=\Ubf_7^*=\Ubf_7[:,1:4](=\Fbf_{DFT}^{(32)}[:,25:28])\\
\Ubf_8 &= \Fbf_{DFT}^{(32)}[:,29:32], \Vbf_8=\Ubf_8^*=\Ubf_8(=\Fbf_{DFT}^{(32)}[:,29:32]),
\end{align*}
where $\Fbf_{DFT}^{(32)}$ is the 32-point DFT matrix, and $\Lambdabf_i = \text{diag}(1,r,r^2,r^3,r^4)$ with $r=0.6$  for $i=1,\cdots,7$  and $\Lambdabf_8 = \text{diag}(1,r,r^2,r^3)$. This setting of channel covariance matrices and pre-beamformer matrices satisfies the approximate BD condition.  Fig. \ref{fig:multi_group_approximateBD_sumrate} (a) shows the sum-rate performance of the three schemes: SUS-ZFBF, RBF, and ReDOS-PBR. 200 independent channel realizations according to \eqref{eq:channel_linear_comb} were used  for each $K$ and the average sum rate is the average over the 200 channel realizations. (For the figure, the user-selection hyperslab thickness for SUS-ZFBF and the user-selection cone angle for ReDOS-PBR were optimally chosen for each $K$.)
\begin{figure}[tp]
    \centerline{
\scalefig{0.5} \epsfbox{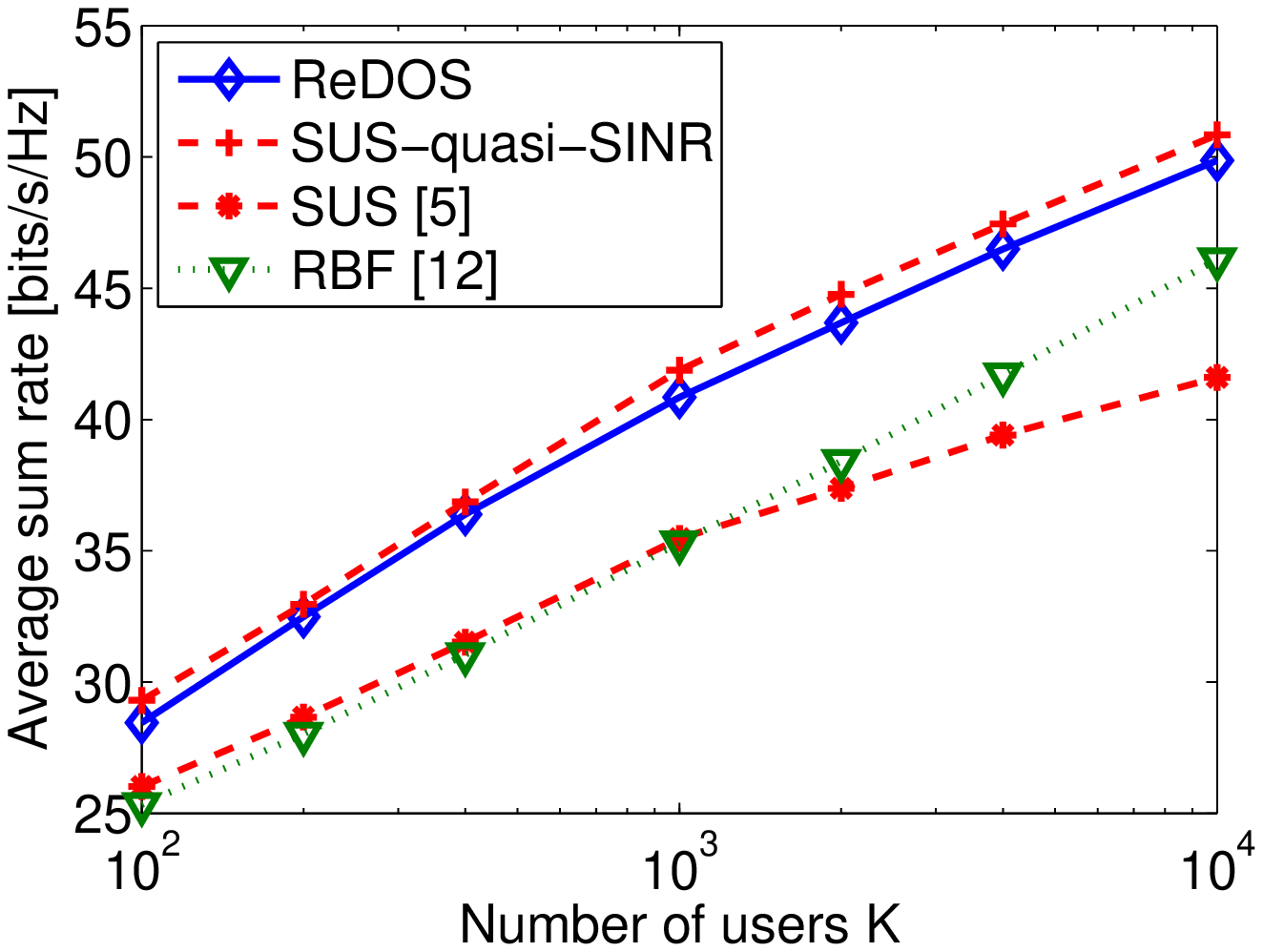}
\scalefig{0.5} \epsfbox{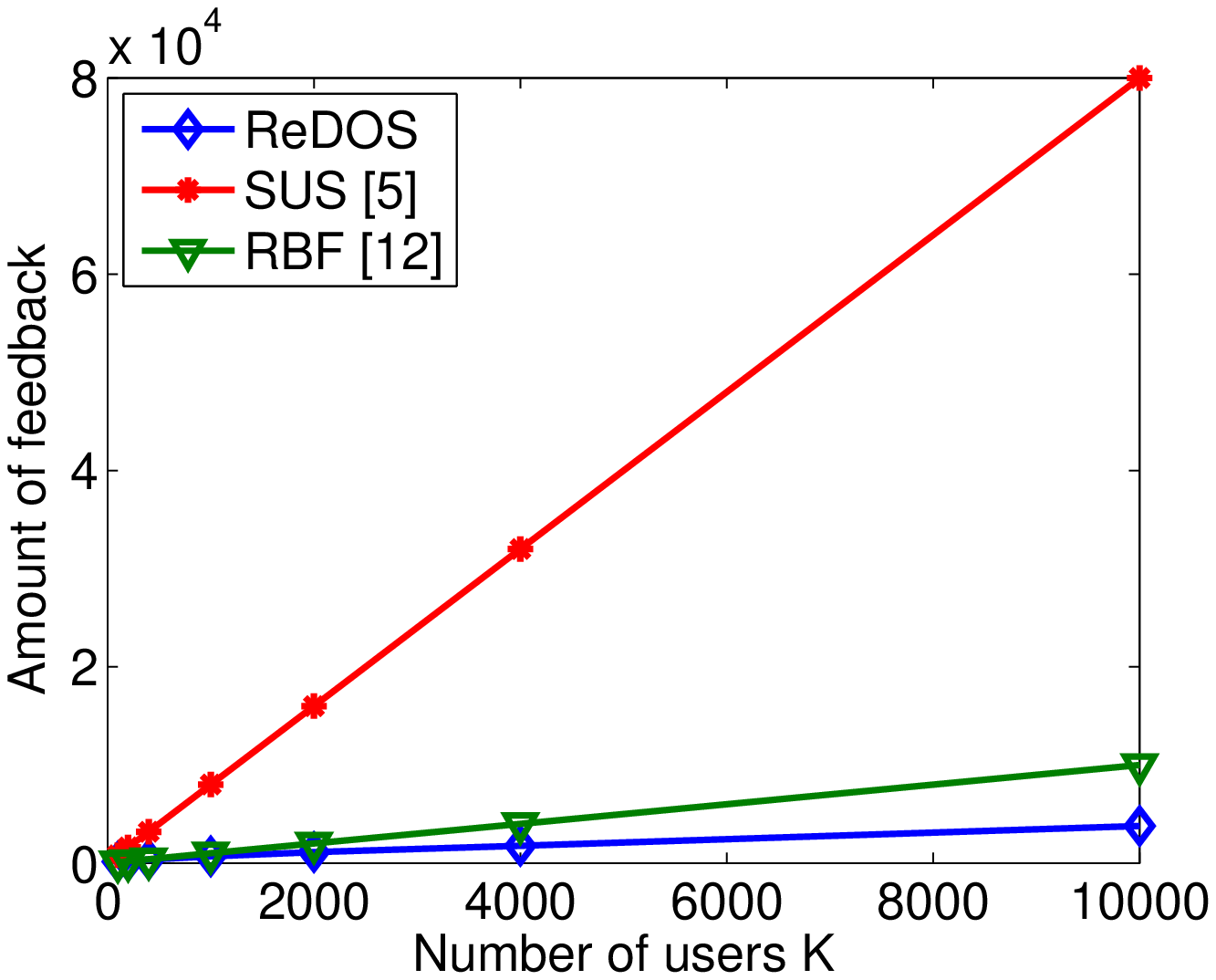}}
    \caption{Multi-group performance: (a) average sum rate performance and
    (b) amount of feedback (number of required real numbers)}
    \label{fig:multi_group_approximateBD_sumrate}
\end{figure}
\begin{figure}[tp]
     \centerline{
\scalefig{0.5} \epsfbox{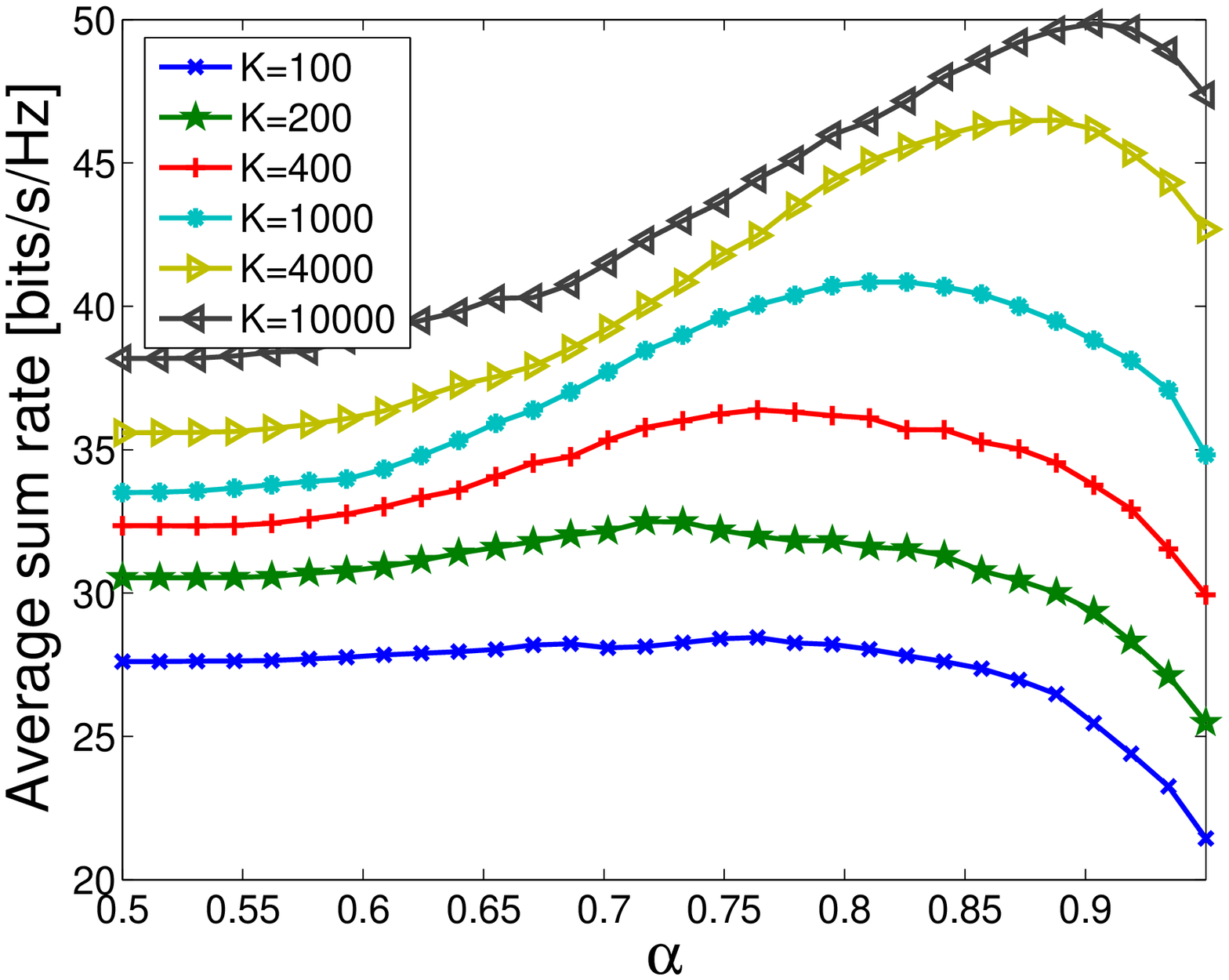}
\scalefig{0.51} \epsfbox{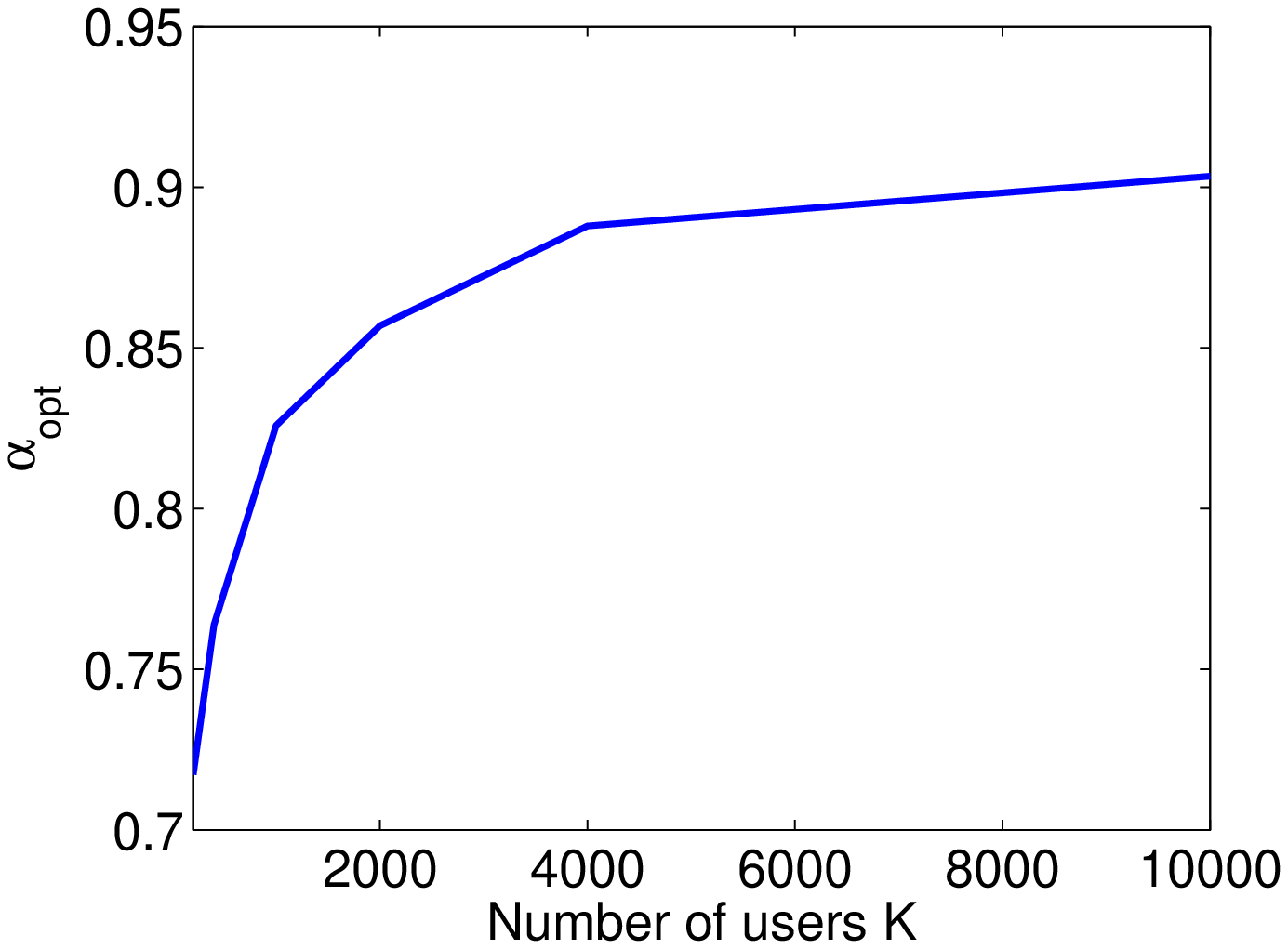}}
    \caption{Multi-group performance: (a) sum rate w.r.t. $\alpha$ and (b) optimal $\alpha$}
    \label{fig:multi_group_approximateBD_alpha}
\end{figure}
Now it is seen that the performance gap between SUS-ZFBF with quasi-SINR and RBF is significant. It is also seen that proposed ReDOS-PBR closely follows SUS-ZFBF with quasi-SINR.  Fig. \ref{fig:multi_group_approximateBD_sumrate} (b) shows the amount of feedback for the same setting as in \ref{fig:multi_group_approximateBD_sumrate} (a). As expected, SUS-ZFBF requires the largest amount of feedback. Note that the amount of feedback required for ReDOS-PBR is even less than RBF! We then investigated the performance sensitivity of ReDOS-PBR w.r.t. $\alpha$ in the same setting as in Fig. \ref{fig:multi_group_approximateBD_sumrate}, and the result is shown in Fig. \ref{fig:multi_group_approximateBD_alpha}. It is seen that optimal $\alpha$ increases as $K$ increases.   An observation of practical importance is that the performance of ReDOS-PBR is quite insensitive w.r.t. $\alpha$ for the practical range of the number of users.

Next, we considered a single-group case for which SUS-ZFBF is originally proposed. The considered system consists of a BS with $M=r_g^*=4$ and $P=10$ [dB], and $K$ UTs each with a single antenna. The channel vector for each user was generated i.i.d. according to the model \eqref{eq:channel_linear_comb}, where for the channel covariance matrix $\Rbf_1$, the exponential correlation model is used, i.e., \cite{Alnaffouri&Sharif&Hassibi:09COM}
\begin{equation}  \label{eq:numer_exp_correl_model}
[\Rbf_1]_{i,j} = \nu^{|i-j|}
\end{equation}
with $0 \le \nu \le 1$.
\begin{figure}[tp]
     \centerline{
\scalefig{0.5} \epsfbox{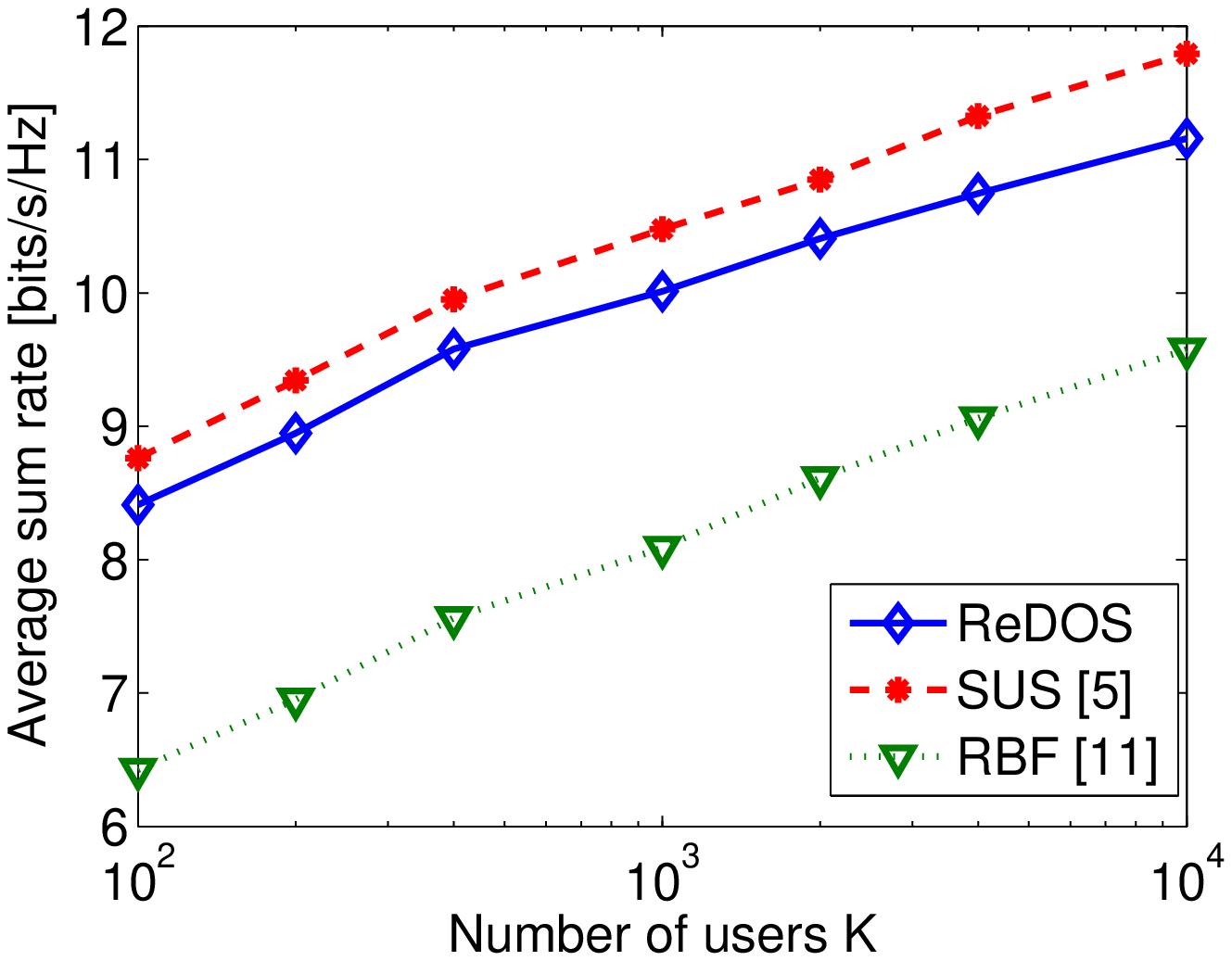}
\scalefig{0.5} \epsfbox{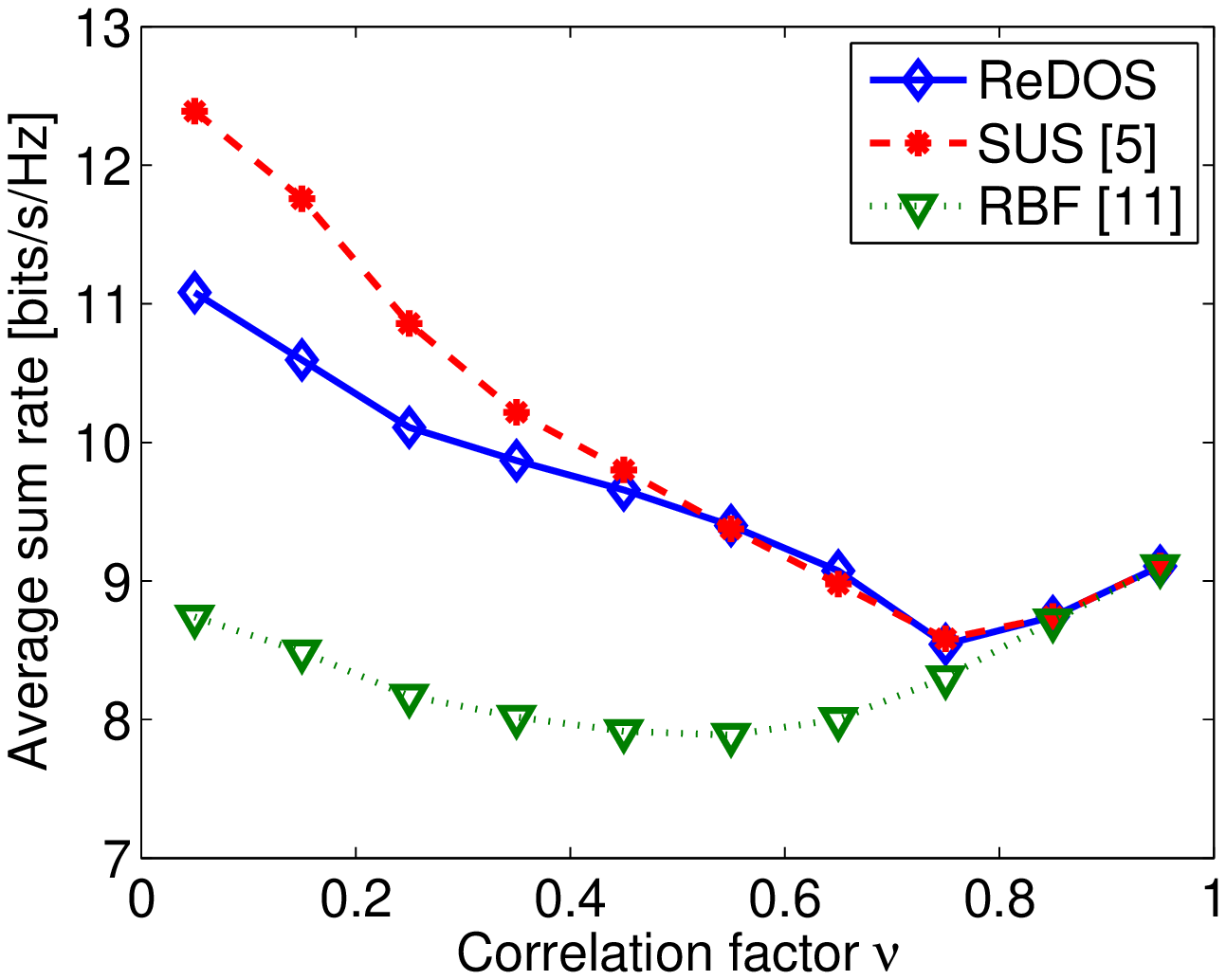}}
    \caption{Single-group performance: (a) average sum rate w.r.t. $K$ and (b) average sum rate w.r.t. the channel correlation factor $\nu$}
    \label{fig:single_group_performance}
\end{figure}
Fig. \ref{fig:single_group_performance} (a) shows the average sum-rate performance of SUS-ZFBF, RBF and ReDOS-PBR for $\nu=0.3$.  Again, there exists a significant performance gap between SUS-ZFBF and RBF, and ReDOS-PBR closely follows SUS-ZFBF. Fig. \ref{fig:single_group_performance} (b) shows the performance of the three schemes w.r.t. the channel correlation factor $\nu$ with $K$ fixed to 1000 for the same setting as in Fig. \ref{fig:single_group_performance} (a).  As expected, when $\nu=0$, i.e., the channel is isotropic, SUS-ZFBF performs best, and when $\nu=1$, i.e., the channel matrix has rank one and only one beam can be supported, all three algorithms perform equally. It is seen that the noticeable gap between SUS-ZFBF and ReDOS-PBR at $\nu=0$ decreases as $\nu$ increases towards one.
This is because when the channel becomes more correlated, there start to exist dominant eigen-directions of the channel and thus, it is enough to make these dominant eigen-directions of the channel the reference beam directions of ReDOS-PBR and to look around these reference directions.

\begin{figure}[tbp]
\centerline{ \SetLabels
\L(0.25*-0.1) (a) \\
\L(0.76*-0.1) (b) \\
\endSetLabels
\leavevmode
%\ShowGrid
\strut\AffixLabels{
\scalefig{0.5}\epsfbox{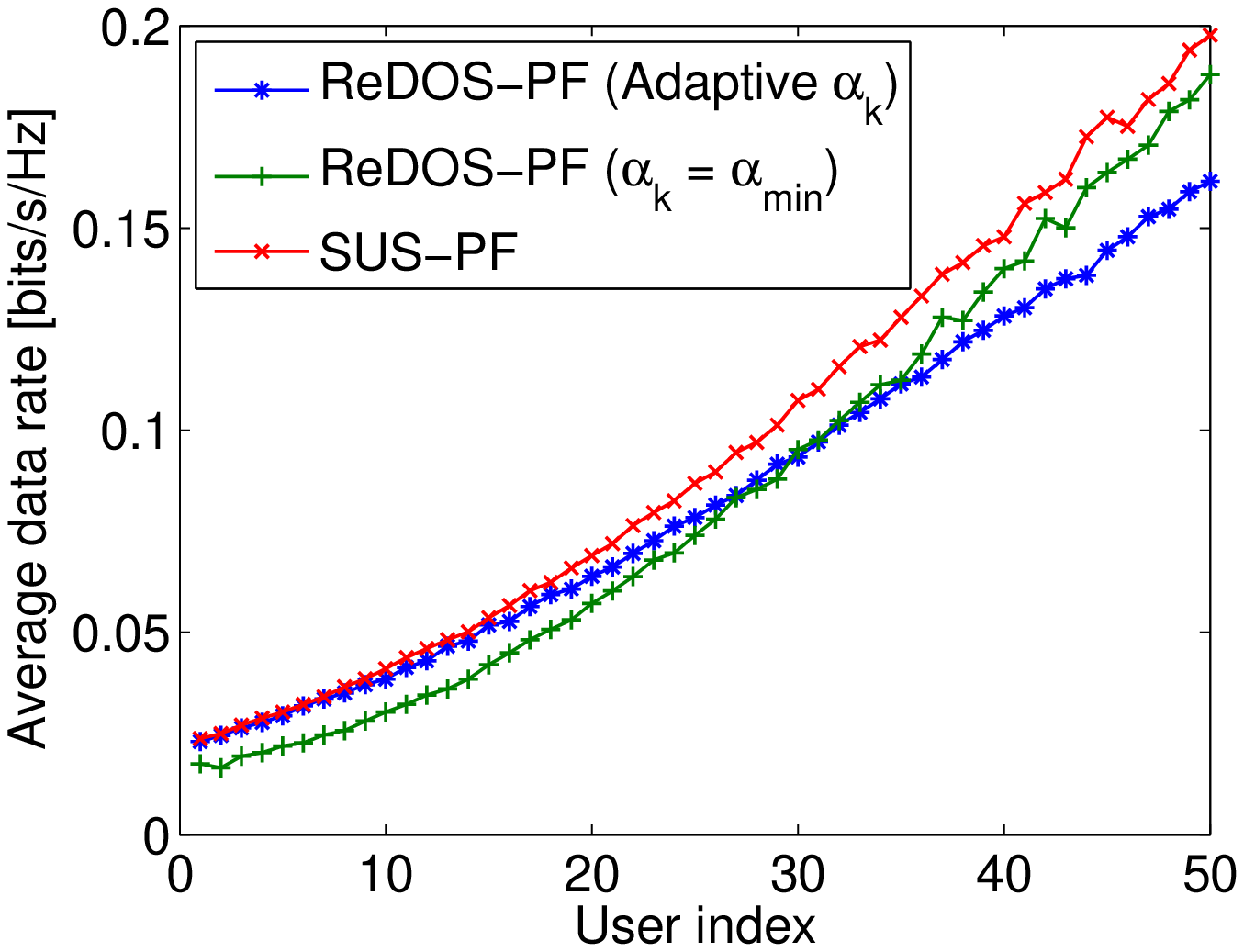}
\scalefig{0.5}\epsfbox{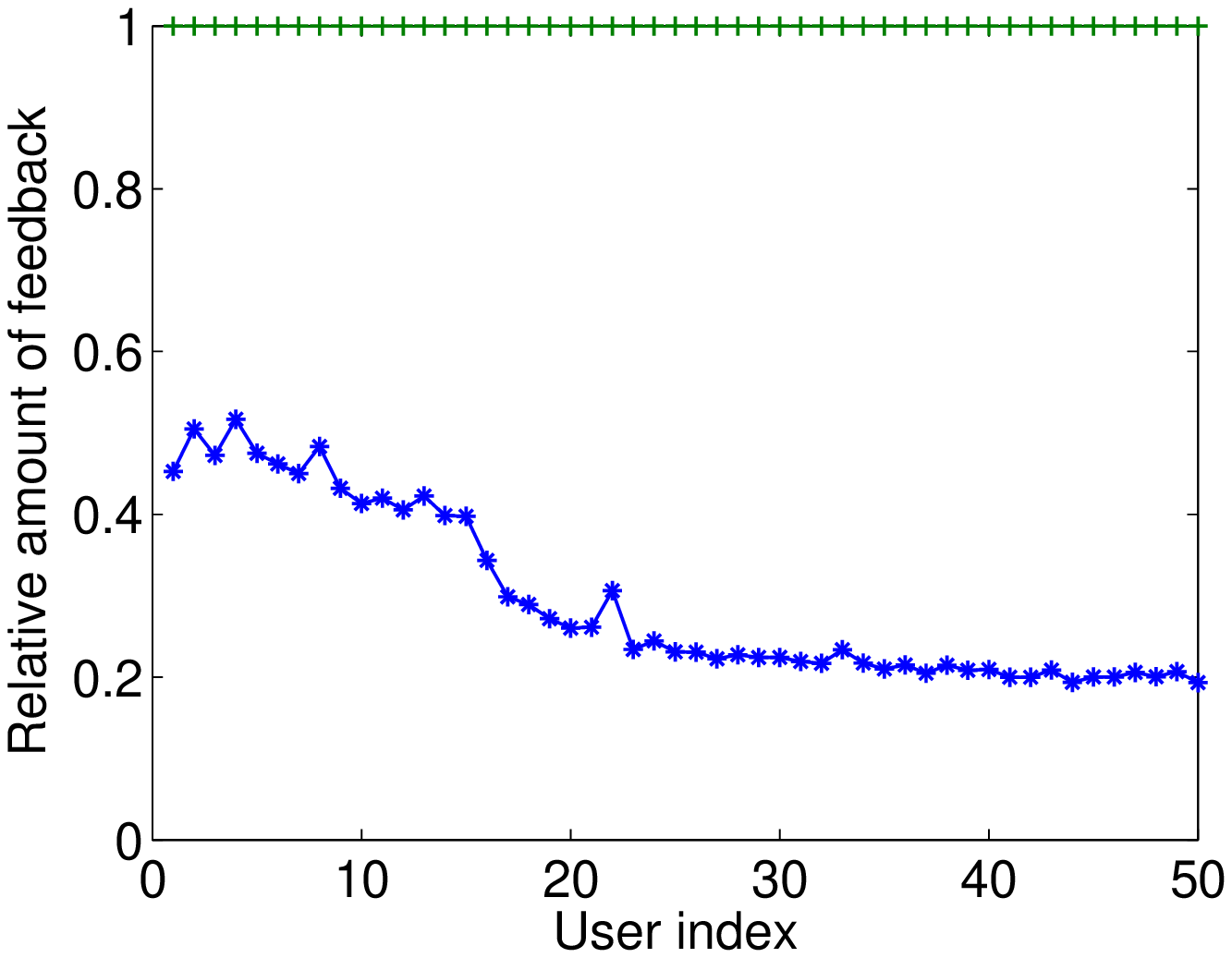} }
} \vspace{0.5cm} \centerline{ \SetLabels
\L(0.25*-0.1) (c) \\
\L(0.76*-0.1) (d) \\
\endSetLabels
\leavevmode
%\ShowGrid
\strut\AffixLabels{
\scalefig{0.5}\epsfbox{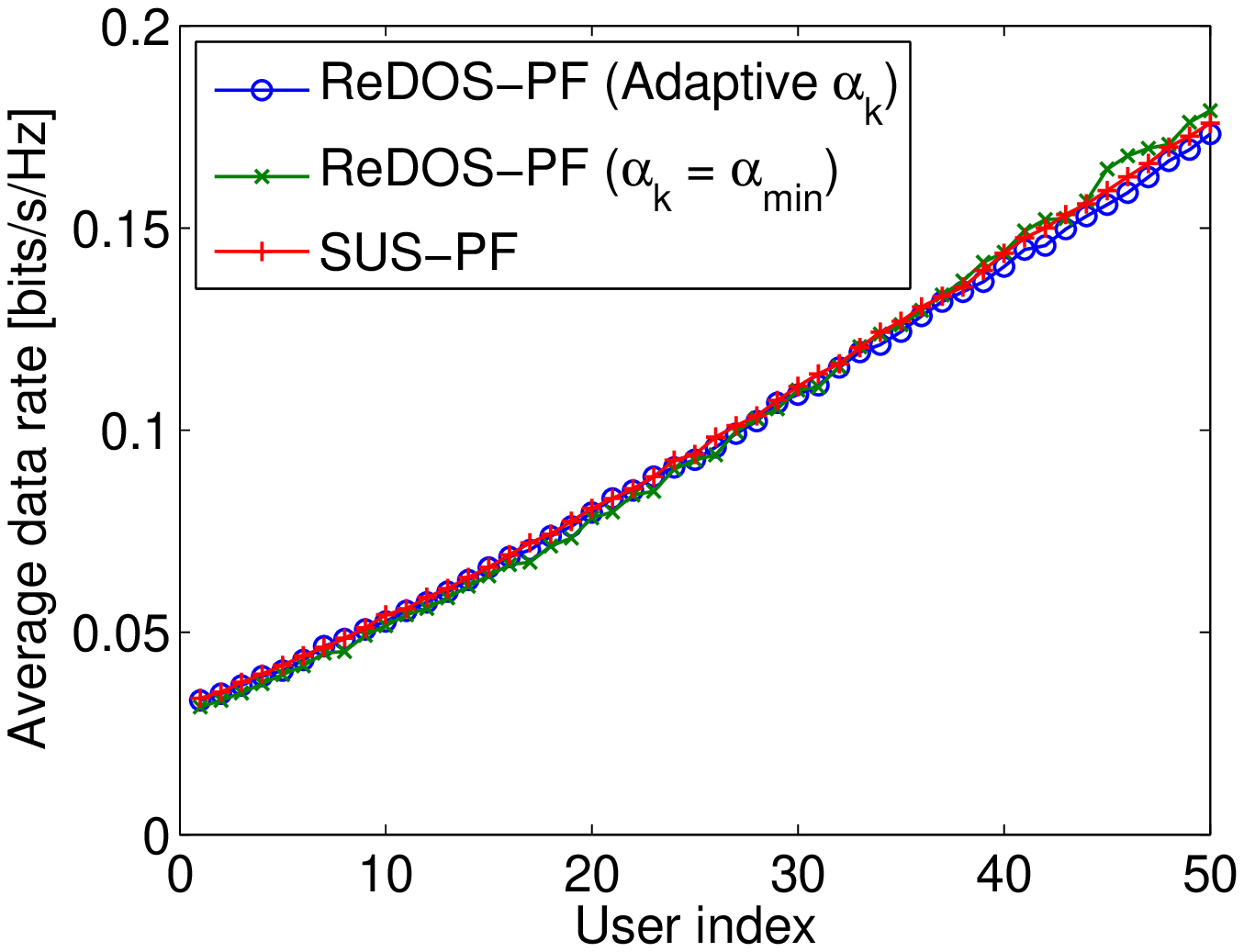}
\scalefig{0.5}\epsfbox{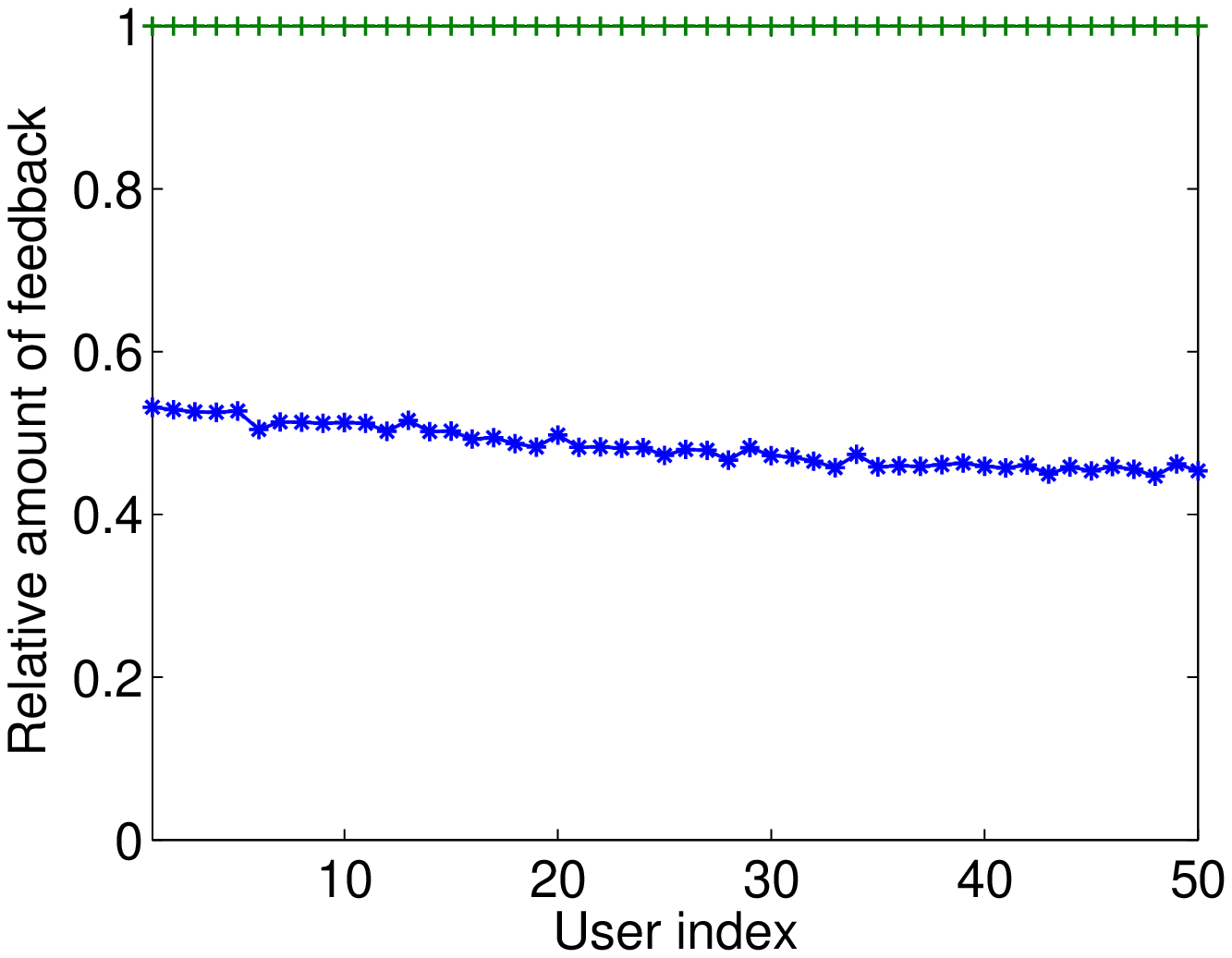}
} } \vspace{0.5cm} \caption{Performance of ReDOS-PBR-PF: (a) each user's served rate, (b) relative amount of feedback between two ReDOS-PBR-PF algorithms: one with fixed $\alpha$ and the other with adaptive $\alpha$, (c) each user's served rate, and (d) relative amount of feedback between two ReDOS-PBR-PF algorithms ((a) and (b): $\nu=0.1$ and $\Delta_{\alpha,up}=0.1$ and $\Delta_{\alpha,down}=\Delta_{\alpha,up}/50$, and  (c) and (d): $\nu=0.3$ and $\Delta_{\alpha,up}=0.2$ and $\Delta_{\alpha,down}=\Delta_{\alpha,up}/100$  )} \label{fig:ReDOSPBR_PF}
\end{figure}

Finally, we examined the performance of ReDOS-PBR-PF. We considered two ReDOS-PBR-PF algorithms: One with fixed $\alpha=\alpha_{min}$ for all users and the other with adaptive $\alpha$ for each user with steps $\Delta_{\alpha,up}$ and $\Delta_{\alpha,down}$ described in Algorithm \ref{algo:ReDOS_PBR_PF}. To simplify the simulation, we just considered the single-group system considered in Fig. \ref{fig:single_group_performance}. Here, we fixed $K=50$ and  BS $P=0$ [dB].  The channel vector for each user $k$ was generated as
\begin{equation}
\hbf_k \sim \sqrt{l_k}\Rbf_1^{1/2}\etabf_k,
\end{equation}
where $\Rbf_1$ is given in \eqref{eq:numer_exp_correl_model}; $\etabf_k \stackrel{i.i.d.}{\sim} \Cc\Nc({\mathbf{0}}, \Ibf_4)$; and  the large-scale fading effect is captured in $l_k$.  The large-scale fading factor $l_k$ for 50 users were designed such that the lowest power user has $l_k=1$ and the highest power user has $l_k=100$ (20 dB difference), and  other users' power is equally spaced in dB scale in the 20 dB power range. We ran 10,000 scheduling intervals. For each interval, the channel vector for each user was generated independently as described in the above. Fig. \ref{fig:ReDOSPBR_PF} (a) shows the average served rate for 50 users  (users are ordered in an ascending order of their $l_k$ values) over 10,000 scheduling intervals, when the channel is almost isotropic, i.e., $\nu =0.1$. It is seen in this case that there is some loss of ReDOS-PBR-PF compared to SUS-ZFBF-PF. Note that ReDOS-PBR-PF with fixed $\alpha=\alpha_{min}$ tracks SUS-ZFBF-PF for all users with equal gap, but ReDOS-PBR-PF with adaptive $\alpha$ sacrifices high-SNR users and gives more chances to low-SNR users. (This is evident in Fig. \ref{fig:ReDOSPBR_PF} (b).) This is because high-SNR users have more chances to be selected and thus, increase their $\alpha_k$ to reduce this increased chance.  Fig. \ref{fig:ReDOSPBR_PF} (b) shows the relative amount of feedback for ReDOS-PBR-PF with adaptive $\alpha$ to that of ReDOS-PBR-PF with fixed $\alpha$. It is seen that the amount of feedback is significantly reduced by adapting $\alpha$.  Figs. (c) and (d) show the performance and the relative amount of feedback in the case of $\nu=0.3$. It is seen that when the channel correlation increases, the performance difference between ReDOS-PBR-PF and SUS-ZFBF-PF is negligible.

\section{Conclusion}
\label{sec:conclusion}
%%%%%%%%%%%%%%%%%%%%%%%%%%%%%%%%%%%%%%%%%%%%%%%%%%%%%%%%%%%%%%%%%%%%
%\vspace{-0.5em}

In this paper, we have proposed a new efficient user-scheduling-and-beamforming
method for massive MU-MIMO broadcast channels. The proposed method takes advantage of two existing user-scheduling-and-beamforming methods, SUS-ZFBF and RBF, for MU-MIMO broadcast channels sitting on opposite sides on the scale of feedback overhead. The proposed scheduling-and-beamforming method is
asymptotically optimal as the number of users increases. The proposed method yields `nearly-optimal' user-selection-and-beamforming under the linear beamforming framework for MU-MIMO downlink, based  on CQI-only feedback from possibly all users and CSI feedback  from only the scheduled users.

%%%%%%%%%%%%%%%%%%%%%%%%%%%%%%%%%%%%%%%%%%%%%%%%%%%%%%%%%%%%%%%%%%%%
%%%%%%%%%%%%%%% APPENDIX   %%%%%%%%%%%%%%%%%%%%%%%%%%%%%%%%%%%%%%%%%
%%%%%%%%%%%%%%%%%%%%%%%%%%%%%%%%%%%%%%%%%%%%%%%%%%%%%%%%%%%%%%%%%%%%
\begin{appendices}

%%%%%%%%%%%%%%%%%%%%%%%%%%%%%%%%%%%%%%%%%%%%%%%%%%%%%%%%%%%%%%%%%%%%
\section{Inner Product Between Two  Vectors in Two Different Cones}
\label{appen:cone_inner_prod}
%%%%%%%%%%%%%%%%%%%%%%%%%%%%%%%%%%%%%%%%%%%%%%%%%%%%%%%%%%%%%%%%%%%%

\begin{lemma} \label{lemma:ConeDistance} For two channel vectors contained in two different user-selection cones, i.e.,
 ${\hbf}_{\kappa_{g,i}} \in \Cc_{g,i}$ and ${\hbf}_{\kappa_{g,j}} \in \Cc_{g,j}$, $i \ne j$,
 the inner product between the corresponding normalized effective channel vectors
  $\tilde{\gbf}_{\kappa_{g,i}}$ and  $\tilde{\gbf}_{\kappa_{g,j}}$  with norm one is bounded by
\begin{equation}
|\tilde{\gbf}_{\kappa_{g,i}}^H \tilde{\gbf}_{\kappa_{g,j}}|
\le 2 \alpha \sqrt{1-\alpha^2}~~ \text{for} ~~ i \neq j,
\end{equation}
when $\alpha \ge 1/\sqrt{2}$ (i.e., the angle $\theta \le \pi/4$
in Fig.  \ref{fig:aus_region}).
\end{lemma}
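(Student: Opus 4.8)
The plan is to translate the two cone-membership hypotheses into coordinate bounds on the two normalized effective channels and then reduce the whole estimate to elementary spherical geometry on the real $2$-sphere. Write $\tilde{\gbf}_{\kappa_{g,i}}=(g_1,\dots,g_{r_g^*})^T$ and $\tilde{\gbf}_{\kappa_{g,j}}=(h_1,\dots,h_{r_g^*})^T$, both unit vectors. Recalling from the cone-containment reformulation in Step~1) of ReDOS-PBR that $\hbf_{\kappa_{g,i}}\in\Cc_{g,i}$ is equivalent to $|(\ebf_{i}^{(g)})^T\tilde{\gbf}_{\kappa_{g,i}}|\ge\alpha$, the hypotheses become simply $|g_i|\ge\alpha$ and $|h_j|\ge\alpha$. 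Geometrically $\tilde{\gbf}_{\kappa_{g,i}}$ then lies within angle $\theta=\arccos\alpha$ of the reference axis $\ebf_{i}^{(g)}$, $\tilde{\gbf}_{\kappa_{g,j}}$ lies within angle $\theta$ of $\ebf_{j}^{(g)}$, and the two reference axes are orthogonal; the target quantity $2\alpha\sqrt{1-\alpha^2}=\sin 2\theta$ is exactly the cosine of the angular gap $\pi/2-2\theta$ that remains, which is the clue that a triangle-inequality-on-the-sphere argument should work.

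First I would separate the inner product into the two ``dominant'' coordinates and a remainder, and pass to magnitudes. Using the triangle inequality for complex numbers followed by Cauchy--Schwarz on the remaining $r_g^*-2$ coordinates,
\begin{equation}
\Big|\tilde{\gbf}_{\kappa_{g,i}}^H\tilde{\gbf}_{\kappa_{g,j}}\Big|
\le |g_i||h_i| + |g_j||h_j| + \sqrt{\textstyle\sum_{m\neq i,j}|g_m|^2}\;\sqrt{\textstyle\sum_{m\neq i,j}|h_m|^2}.
\end{equation}
This is the key device that disposes of all the complex phases. Setting $P=(\sum_{m\neq i,j}|g_m|^2)^{1/2}$ and $Q=(\sum_{m\neq i,j}|h_m|^2)^{1/2}$ and collecting terms, the right-hand side is precisely $\xbf^T\ybf$ for the real vectors $\xbf=(|g_i|,|g_j|,P)^T$ and $\ybf=(|h_i|,|h_j|,Q)^T$. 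Because $\tilde{\gbf}_{\kappa_{g,i}}$ and $\tilde{\gbf}_{\kappa_{g,j}}$ have unit norm, $\xbf$ and $\ybf$ are unit vectors in $\mathbb{R}^3$ with non-negative entries, with $\xbf$ having first entry $|g_i|\ge\alpha$ and $\ybf$ having second entry $|h_j|\ge\alpha$.

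The last step is pure spherical trigonometry on $S^2$. Since the geodesic distance $d(\cdot,\cdot)=\arccos\langle\cdot,\cdot\rangle$ is a metric on the unit sphere, applying the triangle inequality to $\xbf$, $\ybf$ and the first two standard basis vectors $\ebf_1,\ebf_2\in\mathbb{R}^3$ gives
\begin{equation}
d(\xbf,\ybf)\ \ge\ d(\ebf_1,\ebf_2)-d(\ebf_1,\xbf)-d(\ybf,\ebf_2)\ \ge\ \tfrac{\pi}{2}-2\theta,
\end{equation}
using $d(\ebf_1,\xbf)\le\theta$, $d(\ybf,\ebf_2)\le\theta$, and $d(\ebf_1,\ebf_2)=\pi/2$. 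The hypothesis $\alpha\ge 1/\sqrt{2}$ enters exactly here: it is equivalent to $\theta\le\pi/4$, hence $\pi/2-2\theta\ge 0$, so the lower bound on the angle is non-vacuous and both $d(\xbf,\ybf)$ and $\pi/2-2\theta$ lie in $[0,\pi/2]$, where $\cos$ is decreasing. Therefore $\xbf^T\ybf=\cos d(\xbf,\ybf)\le\cos(\pi/2-2\theta)=\sin 2\theta=2\alpha\sqrt{1-\alpha^2}$, which combined with the previous display yields the claim.

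I expect the only real obstacle to be the bookkeeping needed to pass from the complex vectors to the real $3$-vectors $\xbf,\ybf$ without losing tightness: one must group the $i$-th, $j$-th, and remaining coordinates so that Cauchy--Schwarz collapses the remainder into a single coordinate whose squared length is exactly $1-|g_i|^2-|g_j|^2$ (and likewise for $h$), which is what keeps $\xbf$ and $\ybf$ on the unit sphere. Everything after that reduction is standard, the role of the threshold $1/\sqrt{2}$ being simply to guarantee a non-negative angular gap; if $\alpha<1/\sqrt{2}$ the triangle-inequality bound degenerates and the inner product is no longer controlled, which matches the necessity of the stated hypothesis.
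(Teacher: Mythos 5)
Your proof is correct, and while it opens with the same reduction as the paper --- the termwise triangle inequality on $\sum_m \bar g_m h_m$ followed by Cauchy--Schwarz on the coordinates other than $i$ and $j$ --- the second half takes a genuinely different and cleaner route. The paper, after the same reduction, carries out an explicit constrained maximization: it freezes some coordinates, solves a KKT system to collapse the remainder term, and then reads off the maximum $2\alpha\sqrt{1-\alpha^2}$ from a planar picture of the two admissible arcs (Fig.~\ref{fig:inner_prod_cone_lemma}). You instead observe that the reduced bound is exactly $\xbf^T\ybf$ for two unit vectors $\xbf=(|g_i|,|g_j|,P)^T$, $\ybf=(|h_i|,|h_j|,Q)^T$ in $\Rbb^3$ and invoke the metric property of geodesic distance on the sphere, $d(\xbf,\ybf)\ge d(\ebf_1,\ebf_2)-d(\ebf_1,\xbf)-d(\ybf,\ebf_2)\ge \pi/2-2\theta$, with monotonicity of cosine on $[0,\pi]$ finishing the job. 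This avoids the optimization entirely, makes the role of the hypothesis $\alpha\ge 1/\sqrt2$ transparent (it is precisely what keeps the angular gap $\pi/2-2\theta$ nonnegative so that taking cosines preserves the inequality), and generalizes at no cost to reference directions separated by an arbitrary angle $\psi$, yielding $\cos(\psi-2\theta)$. What the paper's route buys in exchange is an explicit pair of extremal points, $(\alpha,\sqrt{1-\alpha^2})$ and $(\sqrt{1-\alpha^2},\alpha)$, which exhibits tightness of the bound; your argument attains the same constant (take $\xbf,\ybf$ on the great circle through $\ebf_1,\ebf_2$), but you would need to say so explicitly if tightness mattered. All steps check out, including the unit-norm bookkeeping $|g_i|^2+|g_j|^2+P^2=1$ that keeps $\xbf$ and $\ybf$ on $S^2$.
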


\begin{proof}
Let $\tilde{\gbf}_{\kappa_{g,i}}=\sum_{m=1}^{r_g^*}
c_{\kappa_{g,i}}^m \ebf_{m}^{(g)}$ and
$\tilde{\gbf}_{\kappa_{g,j}}=\sum_{m=1}^{r_g^*} c_{\kappa_{g,j}}^m
\ebf_{m}^{(g)}$, where $\ebf_{m}^{(g)}$ is the $m$-th column of
$\Ibf_{r_g^*}$. Then, we have $\sum_m
|c_{\kappa_{g,i}}^m|^2=\sum_m |c_{\kappa_{g,j}}^m|^2=1$ and
\begin{align}
|\tilde{\gbf}_{\kappa_{g,i}}^H \tilde{\gbf}_{\kappa_{g,j}}|
&= \left|\sum_{m=1}^{r_g^*} \bar{c}_{\kappa_{g,i}}^m c_{\kappa_{g,j}}^m\right| \nonumber \\
&\le \sum_{m=1}^{r_g^*}|\bar{c}_{\kappa_{g,i}}^m|\cdot |c_{\kappa_{g,j}}^m|  \nonumber\\
&= |\bar{c}_{\kappa_{g,i}}^i| \cdot |c_{\kappa_{g,j}}^i| + |\bar{c}_{\kappa_{g,i}}^j|\cdot |c_{\kappa_{g,j}}^j|
+ \sum_{m=1, m \neq i,j}^{r_g^*}|\bar{c}_{\kappa_{g,i}}^m| |c_{\kappa_{g,j}}^m|  \nonumber\\
&{\le} |\bar{c}_{\kappa_{g,i}}^i| \cdot |c_{\kappa_{g,j}}^i| +
|\bar{c}_{\kappa_{g,i}}^j|\cdot |c_{\kappa_{g,j}}^j| +
\left(\sum_{m=1, m \neq
i,j}^{r_g^*}|\bar{c}_{\kappa_{g,i}}^m|^2\right)^{\frac{1}{2}}
\left(\sum_{m=1, m \neq
i,j}^{r_g^*}|c_{\kappa_{g,j}}^m|^2\right)^{\frac{1}{2}},
\label{eq:cone_inner_upper}
\end{align}
where $\bar{c}$ is the complex conjugate of $c$, and the last step
follows from the Cauchy-Schwarz inequality.

Now consider the RHS in \eqref{eq:cone_inner_upper}. First, fix
$\{c_{\kappa_{g,j}}^m\}_{m=1}^{r_g^*}$ and
$|\bar{c}_{\kappa_{g,i}}^i|$, and view the RHS in
\eqref{eq:cone_inner_upper} as a function of
$\{c_{\kappa_{g,i}}^m, m=1,\cdots,r_g^* ~\mbox{and}~ m \ne i ~|~\sum_{m=1,m\ne i}^{r_g^*}|c_{\kappa_{g,i}}^m|^2=1-|c_{\kappa_{g,i}}^i|^2 \}$.
Then, the RHS in \eqref{eq:cone_inner_upper} is in the form of
$a+bx+cy$, where the constants $a,b,c \ge 0$ are given by
$a=|\bar{c}_{\kappa_{g,i}}^i| \cdot |c_{\kappa_{g,j}}^i|$,
$b=|c_{\kappa_{g,j}}^j|$, and $c=\left(\sum_{m=1, m \neq
i,j}^{r_g^*}|c_{\kappa_{g,j}}^m|^2\right)^{\frac{1}{2}}$, and the
variables $x,y\ge 0$ are given by $x=|\bar{c}_{\kappa_{g,i}}^j|$
and $y=\sum_{m=1, m \neq
i,j}^{r_g^*}|\bar{c}_{\kappa_{g,i}}^m|^2$, with a constraint
$x^2+y^2=1-|\bar{c}_{\kappa_{g,i}}^i|^2$.
 This convex optimization is solved by using the Karush-Kuhn-Tucker conditions \cite{Boyd&Vandenberghe:book}, and the solution is given by
 \begin{equation}
 x = b\sqrt{\frac{1-|c_{\kappa_{g,i}}^i |^2}{1-|c_{\kappa_{g,j}}^i|^2}} ~~\mbox{and}~~ y= c\sqrt{\frac{1-|c_{\kappa_{g,i}}^i |^2}{1-|c_{\kappa_{g,j}}^i|^2}}.
 \end{equation}
Substituting this $x,y$ into the RHS of \eqref{eq:cone_inner_upper}, we have
\begin{align}
|\tilde{\gbf}_{\kappa_{g,i}}^H \tilde{\gbf}_{\kappa_{g,j}}|
&\le a + bx +cy =a + (b^2+c^2)\sqrt{\frac{1-|c_{\kappa_{g,i}}^i |^2}{1-|c_{\kappa_{g,j}}^i|^2}}  \nonumber\\
&\stackrel{(a)}{=} |{c}_{\kappa_{g,i}}^i| \cdot |c_{\kappa_{g,j}}^i| + \sqrt{1-|{c}_{\kappa_{g,i}}^i|^2}\cdot \sqrt{1-|{c}_{\kappa_{g,j}}^i|^2}
\label{eq:cone_upper2}
\end{align}
where (a) follows from $b^2+c^2=1-|c_{\kappa_{g,j}}^i|^2$. Now, the RHS in \eqref{eq:cone_upper2} is expressed in terms of  $|{c}_{\kappa_{g,i}}^i|$ and $|c_{\kappa_{g,j}}^i|$ which comprised the constant term $a$ in the previous optimization of $a+bx+cy$. Here,
we have the following conditions for the terms in the RHS in \eqref{eq:cone_upper2}:
\begin{align}
|c_{\kappa_{g,i}}^i| &\ge \alpha  \label{eq:cone_inner_prod_fig_cond1}\\
\sqrt{1-|{c}_{\kappa_{g,i}}^i|^2} &\le  \sqrt{1-\alpha^2}\\
|c_{\kappa_{g,j}}^i| &= \sqrt{1-\sum_{m=1,m\ne i}^{r_g^*} |c_{\kappa_{g,j}}^m|^2} \le \sqrt{1-|c_{\kappa_{g,j}}^j|^2} \le \sqrt{1-\alpha^2}\\
\sqrt{1-|{c}_{\kappa_{g,j}}^i|^2} &\ge \alpha, \label{eq:cone_inner_prod_fig_cond4}
\end{align}
where \eqref{eq:cone_inner_prod_fig_cond1} is valid by the cone-containment condition.
\begin{figure}[t]
\begin{psfrags}
\small
                \psfrag{bac}[l]{$(|c_{\kappa_{g,i}}^i|,\sqrt{1-|{c}_{\kappa_{g,i}}^i|^2})$}
                \psfrag{a}[l]{$(|c_{\kappa_{g,j}}^i|,\sqrt{1-|{c}_{\kappa_{g,j}}^i|^2})$}
                \psfrag{1}[c]{$1$}
                \centerline{ \scalefig{0.4} \epsfbox{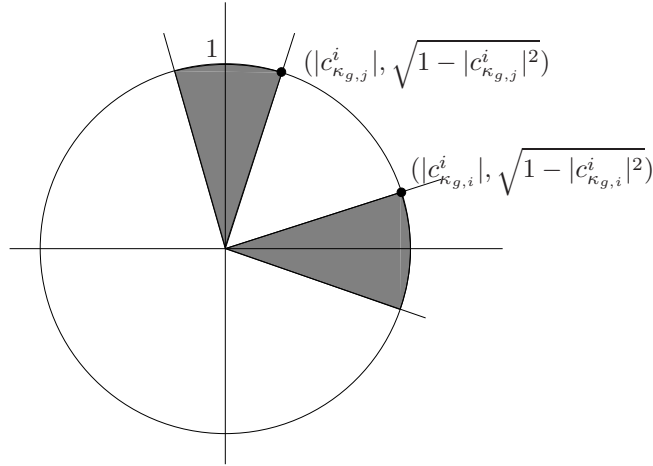} }
    \caption{Maximum inner product between two cones}
    \label{fig:inner_prod_cone_lemma}
\end{psfrags}
\end{figure}
The RHS in \eqref{eq:cone_upper2} is the inner product between two points $(|c_{\kappa_{g,i}}^i|,\sqrt{1-|{c}_{\kappa_{g,i}}^i|^2})$ and  $(|c_{\kappa_{g,j}}^i|,\sqrt{1-|{c}_{\kappa_{g,j}}^i|^2})$ with constraints \eqref{eq:cone_inner_prod_fig_cond1} to \eqref{eq:cone_inner_prod_fig_cond4}. The situation is depicted in Fig.  \ref{fig:inner_prod_cone_lemma}. The maximum inner product occurs between $(\alpha,\sqrt{1-\alpha^2})$ and $(\sqrt{1-\alpha^2},\alpha)$ and is given by $2\alpha\sqrt{1-\alpha^2}$. Therefore, we have
\begin{equation}
|\tilde{\gbf}_{\kappa_{g,i}}^H \tilde{\gbf}_{\kappa_{g,j}}|
\le 2 \alpha \sqrt{1-\alpha^2}~~ \text{for} ~~ i \neq j.
\end{equation}
Without the condition $\alpha \ge 1/\sqrt{2}$, the two shaded regions in Fig.  \ref{fig:inner_prod_cone_lemma} overlap, and we have a trivial upper bound of one.
\end{proof}

%%%%%%%%%%%%%%%%%%%%%%%%%%%%%%%%%%%%%%%%%%%%%%%%%%%%%%%%%%%%%%%%%%%%
\section{Basic Extreme Value Theory}
\label{appen:extrem}
%%%%%%%%%%%%%%%%%%%%%%%%%%%%%%%%%%%%%%%%%%%%%%%%%%%%%%%%%%%%%%%%%%%%
First,  we present two required theorems regarding the asymptotic
behavior of the maximum of $K$ i.i.d. random variables when
$K$ increases without bound.

\begin{theorem} \label{thm:elt}
(\cite{David:03book,Smirnov:49tr,Castillo:book}) Let
$Z_1,\cdots,Z_K$ be i.i.d. random variables with a common
cumulative density function (CDF) $F(\cdot)$. Suppose that there
exist  sequences $\{a_i>0\}_{i=1}^K$ and $\{b_i\}_{i=1}^K$ of
normalizing constants
 such that
\begin{equation}
\underset{K \to \infty}{\lim}F^K(a_Kz + b_K) = G(z),
\end{equation}
where $F^K(\cdot)$ is $F(\cdot)$ to the power of $K$.
Then, $G(z)$ must be one of the following three types of functions:
\begin{align}
(i)~~& G_1(z) = \left\{
\begin{array}{ll}
0,& z\le 0 \\
e^{-z^{-\alpha}},& z>0, ~\alpha>0
\end{array}
 \right. \\
(ii)~~ & G_2(z)=\left\{
\begin{array}{ll}
e^{-(-z)^\alpha},& z \le 0, ~\alpha>0 \\
1,& z>0
\end{array}
\right. \\
(iii) ~~& G_3(z) = e^{-e^{-z}}.
\end{align}
\end{theorem}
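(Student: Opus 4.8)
The plan is to prove the extremal types theorem by the classical route through \emph{max-stability} combined with Khinchin's convergence-of-types lemma. First I would show that the limiting CDF $G$ must be max-stable: for each fixed integer $m \ge 1$, raising the hypothesis $F^{K}(a_K z + b_K) \to G(z)$ to the $m$-th power gives $F^{Km}(a_K z + b_K) \to G^m(z)$, while replacing $K$ by $Km$ in the hypothesis gives $F^{Km}(a_{Km} z + b_{Km}) \to G(z)$. Thus the single sequence of distributions $F^{Km}$ is attracted, under two affine normalizations, to the two non-degenerate limits $G^m$ and $G$. The convergence-of-types theorem then forces these limits to be of the same type, so that $a_K/a_{Km}$ and $(b_K-b_{Km})/a_{Km}$ converge and there exist $A_m>0$, $B_m$ with
\begin{equation}
G^m(z) = G(A_m z + B_m), \qquad m = 1,2,\ldots
\end{equation}

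Next I would upgrade this integer relation to a continuous one. Running the same argument along $\lfloor Kt \rfloor$ for real $t>0$ yields $G^t(z) = G(A(t) z + B(t))$ for all $t>0$, with $A(\cdot)>0$ and $B(\cdot)$ monotone and continuous (inherited from $G$). Exploiting the associativity $G^{st}=(G^s)^t$ and matching the affine normalizations gives the functional equations
\begin{align}
A(st) &= A(s)A(t), \\
B(st) &= B(t) + A(t) B(s).
\end{align}
Since $A$ is multiplicative and monotone it must be a power, $A(t) = t^{-\theta}$, and the entire classification reduces to a case analysis on the sign of $\theta$.

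I would then solve each case. If $\theta = 0$ (so $A\equiv 1$), the second equation becomes additive, $B(st)=B(s)+B(t)$, whose only monotone solution is $B(t)=c\log t$; writing $\psi(z)=-\log G(z)$ turns $G^t(z)=G(z+c\log t)$ into $\psi(z+c\log t)=t\,\psi(z)$, whose monotone solution is a scaled double exponential, giving the Gumbel form $G_3$. If $\theta \ne 0$, the commuting family $z\mapsto A(t)z+B(t)$ is a one-parameter group of non-identity affine maps and hence has a common fixed point $z_0$; shifting the origin to $z_0$ removes the translation and leaves $G^t(z)=G(t^{-\theta}z)$, so that $\psi$ is a pure power of $z$ on the appropriate half-line. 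This produces the Fréchet form $G_1$ for $\theta>0$ and the Weibull form $G_2$ for $\theta<0$, once the supports are matched and the shape parameter is normalized.

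The main obstacle will be step two: establishing the continuous-parameter relation and pinning down $A,B$ rigorously. Two technical points need care. First, the convergence-of-types theorem must be invoked so that \emph{both} normalizing ratios converge, which is exactly where the non-degeneracy of $G$ is essential. Second, the multiplicative equation for $A$ and the Cauchy-type equation for $B$ admit pathological non-measurable solutions in general, and these must be excluded using the monotonicity of $G$ to force the power-law and logarithmic solutions. Since the result is classical, I would alternatively simply cite \cite{David:03book,Castillo:book}; the outline above is the self-contained route.
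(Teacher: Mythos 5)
The paper does not prove Theorem \ref{thm:elt} at all: it is quoted as the classical extremal types (Fisher--Tippett--Gnedenko) theorem and discharged by citation to \cite{David:03book,Smirnov:49tr,Castillo:book}. Your outline is therefore not competing with an in-paper argument; it is the standard self-contained route, and it is essentially correct. The max-stability step via the two normalizations of $F^{Km}$ together with Khinchin's convergence-of-types lemma, the passage to the continuous relation $G^t(z)=G(A(t)z+B(t))$ along $\lfloor Kt\rfloor$, the functional equations $A(st)=A(s)A(t)$ and $B(st)=A(s)B(t)+B(s)$ (your symmetric form is equivalent), and the three-way case split on the exponent of $A$ are exactly how the cited references establish the result. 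You also correctly flag the two places where rigor is genuinely needed: convergence of types requires $G$ to be non-degenerate --- a hypothesis the paper's statement actually omits, without which a point mass would be an admissible limit belonging to none of the three types --- and the Cauchy--Hamel equations for $A$ and $B$ must be tamed by the monotonicity inherited from $G$ to exclude pathological solutions. The one detail worth making explicit in the $\theta\neq 0$ case is the common-fixed-point claim: commutativity of the one-parameter family forces every $T_s$ to map the unique fixed point of any given $T_t$ with $A(t)\neq 1$ to itself, which is what licenses the uniform translation of the origin. Given that the paper treats this theorem as a black box, simply citing \cite{David:03book,Castillo:book} as you suggest at the end would also be fully consistent with its usage; your outline buys a self-contained exposition at the cost of the functional-equation technicalities you have already identified.
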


\vspace{1em}

\begin{theorem} \label{thm:smirnov}
 (\cite{Smirnov:49tr,Madda&Sadrabadi&Khandani:08IT})
 For
distribution function $F^K$ and $G_l(z)$, we have
\begin{equation}
\underset{K \to \infty}{\lim} F^K(a_Kz + b_K) = G_l(z)
\end{equation}
if and only if
\begin{equation}  \label{eq:theo_theo4}
\underset{K \to \infty}{\lim}K[1- F(a_Kz+b_K)] = -\log[G_l(z)],
\end{equation}
where $l \in \{ 1,2,3 \}$,  for two sequences $\{a_i>0\}_{i=1}^K$ and $\{b_i\}_{i=1}^K$.
\end{theorem}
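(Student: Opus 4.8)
The plan is to fix $z$ and set $u_K = a_K z + b_K$, then reduce everything to the single scalar quantity $w_K := 1 - F(u_K) \in [0,1]$ and its rescaling $t_K := K\,w_K = K[1 - F(u_K)]$. The assertion of Theorem~\ref{thm:smirnov} then becomes the purely deterministic calculus equivalence ``$(1-w_K)^K \to G_l(z)$ if and only if $t_K \to -\log G_l(z)$,'' which I would establish through the identity $F^K(u_K) = (1-w_K)^K = \exp\big(K\log(1-w_K)\big)$ together with two elementary facts: the asymptotic $-\log(1-w) \sim w$ as $w \downarrow 0$, and the one-sided bound $\log x \le x-1$ for all $x>0$. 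No probabilistic structure beyond $0 \le F \le 1$ enters, which is exactly why the same argument covers all three limit types $G_l$ from Theorem~\ref{thm:elt} simultaneously. Throughout I write $t := -\log G_l(z) \in [0,\infty]$ and split the analysis according to whether $G_l(z)$ lies in $(0,1)$ or hits a boundary value in $\{0,1\}$.

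First I would treat the reverse implication, that convergence of $t_K$ forces convergence of $F^K(u_K)$. If $t<\infty$, then $t_K \to t$ forces $w_K = t_K/K \to 0$, and the standard limit $(1 - t_K/K)^K \to e^{-t}$ yields $F^K(u_K) \to e^{-t} = G_l(z)$. If $t = \infty$, i.e.\ $G_l(z)=0$, then $t_K \to \infty$, and using $\log(1-w_K) \le -w_K$ I get $K\log(1-w_K) \le -t_K \to -\infty$, hence $F^K(u_K) \to 0 = G_l(z)$.

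Next the forward implication, that convergence of $F^K(u_K)$ forces convergence of $t_K$. Taking logarithms gives $K\log(1-w_K) \to \log G_l(z)$. When $0 < G_l(z) < 1$ the limit is a finite negative number, which forces $w_K \to 0$; writing $t_K = -K\log(1-w_K)\cdot\frac{w_K}{-\log(1-w_K)}$ and using $\frac{-\log(1-w_K)}{w_K}\to 1$ gives $t_K \to -\log G_l(z)$. The boundary case $G_l(z)=1$ is handled by the squeeze $0 \le t_K \le -K\log(1-w_K) \to 0$, again via $\log x \le x-1$. The boundary case $G_l(z)=0$ is handled by contradiction: were $t_K$ bounded by some $M$ along a subsequence, then $w_K \to 0$ on that subsequence, so $-K\log(1-w_K) \sim t_K \le M$ would remain bounded, forcing $F^K(u_K)$ to stay bounded away from $0$ and contradicting $F^K(u_K) \to 0$.

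The main obstacle is not the generic regime $0<G_l(z)<1$, where the symmetric asymptotic $-\log(1-w)\sim w$ does all the work and both directions are immediate, but rather the two degenerate regimes $G_l(z)\in\{0,1\}$, for which $-\log G_l(z)\in\{+\infty,0\}$ and that clean asymptotic is unavailable. There I must fall back on the one-sided inequality $\log x \le x-1$ and a subsequence/contradiction argument to pin the limit down. Since every admissible $z$ falls into exactly one of these three regimes for each of the three extreme-value types of Theorem~\ref{thm:elt}, verifying the equivalence on all three regimes establishes it uniformly over $l\in\{1,2,3\}$ and completes the proof.
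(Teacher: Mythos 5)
Your argument is correct and complete. Note, however, that the paper itself gives no proof of this statement: Theorem~\ref{thm:smirnov} is imported verbatim from the cited references (Smirnov; Maddah-Ali \emph{et al.}) and used as a black box, so there is no in-paper proof to compare against. What you have written is the standard elementary proof of the classical ``Poisson approximation'' equivalence from extreme value theory (it is essentially Theorem~1.5.1 of Leadbetter--Lindgren--Rootz\'en), and your reduction to the deterministic statement about $w_K = 1-F(u_K)\in[0,1]$ is exactly the right way to see that no probabilistic structure is involved. All three regimes are handled soundly: the generic case $0<G_l(z)<1$ via $-\log(1-w)\sim w$ after first deducing $w_K\to 0$ from the finiteness of the limit of $K\log(1-w_K)$; the case $G_l(z)=1$ by the squeeze $0\le Kw_K\le -K\log(1-w_K)\to 0$; and the case $G_l(z)=0$ by the subsequence contradiction, where boundedness of $Kw_K$ along a subsequence would keep $(1-w_K)^K\ge(1-M/K)^K\to e^{-M}>0$. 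The one thing your write-up buys over the paper's citation is self-containedness; the one thing it does not need to (and correctly does not) engage with is the classification of the limit laws $G_l$ from Theorem~\ref{thm:elt}, since the equivalence holds pointwise in $z$ for any candidate limit value in $[0,1]$.
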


\vspace{1em}

\begin{definition}
[Generalized chi-square distribution
\cite{Hammarwall&Bengtsson&Ottersten:08SP}]\label{def:gen_chi} If
$X_i \stackrel{i.i.d.}{\sim} \mathcal{CN}(0,1)$ for
$i=1,\cdots,L$, then the variable
$\chi_{\text{Gen}}^2(\lambda_1,\cdots,\lambda_L) :=
\sum\limits_{i=1}^L \lambda_i|X_i|^2$ with  $\lambda_1 > \lambda_2
> \cdots > \lambda_L >0$ is called a generalized chi-square random
variable  with order $L$ and parameters $\lambda_1,  \lambda_2,
 \cdots, \lambda_L$. Then,
$\chi_{\text{Gen}}^2(\lambda_1,\cdots,\lambda_L)$ has the pdf
\begin{equation}
f_{\chi^2_{\text{Gen}}}(z) = \sum\limits_{i=1}^L
\frac{e^{-z/\lambda_i}} {\lambda_i \prod_{j=1,j\neq i}^{L}
(1-\frac{\lambda_j}{\lambda_i})}, ~~\text{for}~~z\ge0.
\end{equation}
Furthermore, its CDF is given by
\begin{equation}
F_{\chi^2_{\text{Gen}}}(z) = \sum\limits_{i=1}^L
\frac{1-e^{-z/\lambda_i}}{\xi_i} ~~\text{and}~~
\sum\limits_{i=1}^L \frac{1}{\xi_i} = 1,
\end{equation}
where
\begin{equation} \label{eq:GenChicm}
\xi_i = \prod_{j=1, j \neq i}^L (1-\frac{\lambda_j}{\lambda_i}).
\end{equation}
\end{definition}

\vspace{1em}

Now, in a way similar to the technique used in \cite{Madda&Sadrabadi&Khandani:08IT}, we further generalize the generalized chi-square
distribution,  and  define a generalized CDF\footnote{In extreme value theory, typically the maximum of i.i.d. random variables is considered and thus, only the upper tail behavior of the CDF matters \cite{David:03book,Smirnov:49tr,Castillo:book}. } from the CDF of
$\chi_{\text{Gen}}^2(\lambda_1,\cdots,\lambda_L)$:
\begin{equation} \label{eq:gen_cdf}
F(z) = \left\{
\begin{array}{ll}
1 - \zeta\sum_{i=1}^L \frac{e^{-z/\lambda_i}}{\xi_i}, &
 z   \ge z_\tau,  \\
\tilde{F}(z), &  z < z_\tau,
\end{array}
\right.
\end{equation}
for  $0< \zeta < 1$ is a fixed constant, $z_\tau (< \infty)$ is a fixed finite threshold, and $\tilde{F}(z)$ is an arbitrary monotone-increasing continuous function satisfying $\tilde{F}(0)=0$ and $\tilde{F}(z_\tau)=  1 - \zeta\sum_{i=1}^L \frac{e^{-z_\tau/\lambda_i}}{\xi_i}$.  Then, this is a valid CDF over $z \ge 0$ since $F(0)=0$, $F(\infty)=1$, and $F(z)$ is continuous and monotone increasing.
Based on the
two theorems in the above, we derive the following lemma regarding
the newly defined CDF in  \eqref{eq:gen_cdf}, necessary for proof
of (\ref{eq:multiuserdiversitygain}, \ref{eq:multiDivugi}).

\vspace{1em}

\begin{lemma} \label{lem:ext}
Let $Z_1,Z_2,\cdots,Z_K$ be $K$ i.i.d. random variables with the CDF in \eqref{eq:gen_cdf} with $\lambda_1 > \lambda_2 > \cdots > \lambda_L$.
 Then, the limiting
behavior of $F^K$ belongs to  type $(iii)$ in  Theorem
\ref{thm:elt} with normalizing sequences
\begin{equation}  \label{eq:append_extrem_akbk}
a_K = \lambda_1, ~~ b_K = \lambda_1 (\log K +
\log(\zeta/\xi_1)),
\end{equation}
and therefore, we have
\begin{equation}
\text{Pr}\{Z_{\max} > \lambda_1 \log K - \lambda_1 \log\log K +
\lambda_1\log (\zeta/\xi_1)\} \ge 1- O\left(\frac{1}{K}\right),
\end{equation}
where $Z_{\max}$ denotes the maximum of $\{Z_i\}_{i=1}^K$.
\end{lemma}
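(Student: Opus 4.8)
The plan is to invoke Theorem~\ref{thm:smirnov} with the Gumbel-type limit $G_3(z)=e^{-e^{-z}}$, so that it suffices to verify the tail condition \eqref{eq:theo_theo4}, namely $\lim_{K\to\infty}K[1-F(a_Kz+b_K)]=-\log G_3(z)=e^{-z}$, for the normalizing sequences in \eqref{eq:append_extrem_akbk}. The first observation I would record is that for any \emph{fixed} $z$ we have $a_Kz+b_K=\lambda_1 z+\lambda_1(\log K+\log(\zeta/\xi_1))\to\infty$, so that $a_Kz+b_K\ge z_\tau$ for all sufficiently large $K$. Consequently only the upper-tail branch of the piecewise CDF \eqref{eq:gen_cdf} is active, and the arbitrary continuous part $\tilde F$ plays no role in the extremal asymptotics; this is exactly why the lemma is insensitive to the choice of $\tilde F$.

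On that branch, $1-F(a_Kz+b_K)=\zeta\sum_{i=1}^{L}\xi_i^{-1}e^{-(a_Kz+b_K)/\lambda_i}$. Substituting \eqref{eq:append_extrem_akbk}, the $i$-th term carries the factor $K^{-\lambda_1/\lambda_i}$. For $i=1$ the exponent is exactly $1$, and the constants are arranged so that $\zeta\,\xi_1^{-1}e^{-(a_Kz+b_K)/\lambda_1}=e^{-z}/K$; hence this term contributes exactly $e^{-z}$ after multiplication by $K$. For $i\ge 2$, the ordering $\lambda_1>\lambda_i$ gives $\lambda_1/\lambda_i>1$, so each term is $O(K^{-\lambda_1/\lambda_i})$ and, after multiplication by $K$, is $O(K^{1-\lambda_1/\lambda_i})\to 0$. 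I would also note here that $\xi_1=\prod_{j\ne1}(1-\lambda_j/\lambda_1)>0$ by \eqref{eq:GenChicm} since $\lambda_1$ is the largest eigenvalue, so the dominant term is genuinely positive and the argument of $G_3$ is well posed. This establishes the type-$(iii)$ convergence.

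For the probability statement I would specialize the same direct tail computation at the $K$-dependent point $z=-\log\log K$, for which $a_Kz+b_K$ equals the threshold $t:=\lambda_1\log K-\lambda_1\log\log K+\lambda_1\log(\zeta/\xi_1)$ in the lemma. The $i=1$ term now gives $1-F(t)=\frac{\log K}{K}(1+o(1))$, where the $o(1)$ collects the $i\ge 2$ terms, each of which is $O\big(K^{-\lambda_1/\lambda_i}(\log K)^{\lambda_1/\lambda_i}\big)=o(\log K/K)$. Writing $p:=1-F(t)$, I would then expand $F^K(t)=(1-p)^K=\exp\!\big(K\log(1-p)\big)$. Since $Kp=\log K\,(1+o(1))$ and $Kp^2=O((\log K)^2/K)\to0$, and since the subdominant correction $r_K:=p-\tfrac{\log K}{K}$ satisfies $Kr_K=O(K^{1-\lambda_1/\lambda_2}(\log K)^{\lambda_1/\lambda_2})\to0$, the expansion yields $K\log(1-p)=-\log K+o(1)$ and therefore $F^K(t)=\tfrac1K(1+o(1))$. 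Hence $\mathrm{Pr}\{Z_{\max}>t\}=1-F^K(t)=1-O(1/K)$, as claimed.

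The main obstacle I anticipate is the rate, not the mode of convergence. Theorem~\ref{thm:smirnov} only gives the limiting Gumbel shape and says nothing about the $O(1/K)$ speed, so the precise bound must come from the explicit tail expansion above; the delicate point is that the coefficients $\xi_i$ in \eqref{eq:GenChicm} are sign-indefinite for $i\ge2$ (indeed $\xi_i<0$ whenever the product in \eqref{eq:GenChicm} contains an odd number of negative factors), so one cannot simply bound the sum term by term with positive summands. The argument survives because the single positive leading term ($i=1$) dominates all the others by a polynomial factor in $K$, which is exactly what forces both $Kr_K\to0$ and the clean $1-O(1/K)$ conclusion; I would make sure to state this domination carefully rather than treat the sum as if all its terms were positive.
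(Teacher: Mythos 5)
Your proposal is correct, and the first half --- verifying Smirnov's condition $\lim_K K[1-F(a_Kz+b_K)]=e^{-z}$ by noting that $a_Kz+b_K$ eventually exceeds $z_\tau$, isolating the $i=1$ term, and killing the $i\ge 2$ terms via $\lambda_1/\lambda_i>1$ --- is essentially identical to the paper's computation. Where you genuinely diverge is in the second half. The paper obtains the probability bound by taking the limit relation $\lim_K \Pr\{Z_{\max}>a_Kz+b_K\}=1-e^{-e^{-z}}$, substituting the $K$-dependent value $z=-\log\log K$, and ``removing the limit operator''; this is a shortcut, since a pointwise-in-$z$ limit theorem carries no rate information and does not directly license plugging in a sequence $z_K\to-\infty$. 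You instead evaluate the tail $p=1-F(t)$ directly at the threshold $t$, show $Kp=\log K+o(1)$ with the subdominant terms controlled by $K^{1-\lambda_1/\lambda_2}(\log K)^{\lambda_1/\lambda_2}\to 0$, and expand $F^K(t)=\exp(K\log(1-p))=K^{-1}(1+o(1))$; this is a fully non-asymptotic argument that actually delivers the $O(1/K)$ rate the lemma asserts, which is what the downstream sum-rate bound in Theorem~\ref{theo:OptimalityProof} consumes. Your observation that the $\xi_i$ for $i\ge 2$ are sign-indefinite, so the tail sum cannot be bounded term-by-term by positive quantities and must instead be handled by polynomial domination of the $i=1$ term, is also a point the paper's proof passes over silently (harmlessly for the limit computation, but worth making explicit for the finite-$K$ bound). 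In short: same skeleton for the type-(iii) identification, but your treatment of the rate is more careful than, and effectively repairs, the paper's final step.
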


\vspace{0.5em}

\begin{proof}  Compute $K[1-F(a_Kz+b_K)]$ with $a_K$ and $b_K$ in \eqref{eq:append_extrem_akbk} when $a_K z+b_K \ge z_\tau$:
\begin{align}
\underset{K \to \infty}{\lim} K[1-F(a_Kz+b_K)]
&= \underset{K \to \infty}{\lim} K \zeta
\left[\sum\limits_{i=1}^L\frac{e^{-(a_Kz+b_K)/\lambda_i}}{\xi_i} \right] \\
&= \underset{K \to \infty}{\lim} K\zeta
\left[\sum\limits_{i=1}^L\frac{e^{-[z+\log K +
\log(\zeta/\xi_1)]\frac{\lambda_1}{\lambda_i}}}{\xi_i} \right] \\
&=\underset{K \to \infty}{\lim} K  \label{eq:lem2_pr}
\left[\frac{e^{-z}}{K} +
\zeta\sum\limits_{i=2}^L\frac{[e^{-z}(\xi_1/\zeta)]^{\frac{\lambda_1}{\lambda_i}}}{\xi_i K^{\frac{\lambda_1}{\lambda_i}}} \right] \\
&\overset{(a)}{=} e^{-z} = - \log [G_3(z)],
\end{align}
where $(a)$ follows from the fact that
$\frac{\lambda_1}{\lambda_i}>1$ for $i=2,\cdots,L$,  and
 the second term in the
RHS of \eqref{eq:lem2_pr}  vanishes as $K \to \infty$. By
Theorems  \ref{thm:elt} and \ref{thm:smirnov}, the limiting behavior of $F^K$ belongs to  type $(iii)$ in  Theorem \ref{thm:elt}  with the normalizing sequences $a_K$ and $b_K$ in \eqref{eq:append_extrem_akbk}, when  $a_Kz + b_K \ge z_\tau$ for sufficiently large $K$. Hence, we have
\begin{align}
\underset{K \to \infty}{\lim} F^K(\lambda_1 z + \lambda_1 \log K +
\lambda_1 \log (\zeta/\xi_1)) = e^{-e^{-z}},
\end{align}
when  $a_Kz + b_K \ge z_\tau$ for sufficiently large $K$. This implies
\begin{equation} \label{eq:append_extreme_event}
\underset{K \to \infty}{\lim} \text{Pr}\{Z_{\max} > \lambda_1 z +
\lambda_1 \log K + \lambda_1 \log (\zeta/\xi_1)\} =1-
e^{-e^{-z}},
\end{equation}
because $F^K$ is the CDF of $Z_{\max}=\max\{Z_1,\cdots,Z_K\}$.
By substituting $z=-\log\log K$ and removing the limit operator, we get
\begin{equation}
\text{Pr}\{Z_{\max} > \lambda_1 \log K - \lambda_1 \log\log K +
\lambda_1\log (\zeta/\xi_1)\} \ge 1- O\left(\frac{1}{K}\right)
\end{equation}
since $a_Kz + b_K \ge z_\tau$ for sufficiently large $K$ with
$z=-\log\log K$ and $(a_K, b_K)$ in \eqref{eq:append_extrem_akbk}
due to the term ``$\log K$'' in $b_K$.
\end{proof}

%%%%%%%%%%%%%%%%%%%%%%%%%%%%%%%%%%%%%%%%%%%%%%%%%%%%%%%%%%%%%%%%%%%%
\section{Proof of (\ref{eq:multiuserdiversitygain}, \ref{eq:multiDivugi})}
\label{sec:Appen-proof}
%%%%%%%%%%%%%%%%%%%%%%%%%%%%%%%%%%%%%%%%%%%%%%%%%%%%%%%%%%%%%%%%%%%

Now we prove (\ref{eq:multiuserdiversitygain},
\ref{eq:multiDivugi}) in the proof of Theorem
\ref{theo:OptimalityProof} under the conditions of Theorem
\ref{theo:OptimalityProof}. The impact of no quasi-SINR feedback
by the users whose channel vectors are not contained in the
user-selection cones, is incorporated by defining $\phi_{g_k}^i$
in \eqref{eq:phigkdef}. To handle the inter-group interference, we
here define new random variables. For each $i \in
\{1,\cdots,r_g^*\}$, we define random variables
$\bar{\phi}_{g_k}^i$,  $k=1,\cdots,K'$, as
\begin{align}
\bar{\phi}_{g_k}^i
= \left\{
\begin{array}{ll}
\phi_{g_k}^i, & k \in \Vc_g(\epsilon) \\
0, & \text{otherwise}
\end{array}
\right.
\end{align}
where
\begin{equation}
\Vc_g(\epsilon) := \left\{k: \sum_{g' \neq g} \|\hbf_{g_k}^H\Vbf_{g'}\|^2 \le \frac{\epsilon}{r_g^*} \right\}
\end{equation}
for some constant $\epsilon >0$.
Let us define the following sets:
\begin{align}
\overline{\Wc}_{g,i}(\alpha) &:= \{\hbf_{g_k}:
k \in \Wc_{g,i}(\alpha) \}, ~~~i \in \{1,\cdots,r_g^\star\}\\
\overline{\Vc}_g(\epsilon) &:= \{\hbf_{g_k}:
k \in \Vc_g(\epsilon) \},  \label{eq:overlineVggGamma}
\end{align}
where $\Wc_{g,i}(\alpha)$ is defined in Algorithm \ref{algo:ReDOS}. (The dependence of $\Wc_{g,i}$ on $\alpha$ is explicitly shown here. $\overline{\Wc}_{g,i}(\alpha)$ and  $\overline{\Vc}_g(\epsilon)$ are simply denoted by $\overline{\Wc}_{g,i}$ and $\overline{\Vc}_g$, respectively, in case of no confusion.)
Note that the fixed and chosen $\bar{\alpha}$ satisfies
\begin{equation}
\bar{\alpha} > \sqrt{\frac{1+\sqrt{\frac{r_g^*-2}{r_g^*-1}}}{2}},
\end{equation}
and this implies $\alpha > \frac{1}{\sqrt{2}}$ for any $r_g^* \ge
2$. Then, the user-selection cones are disjoint (see Remark
\ref{remark:full_disjoint_cones}) and hence,  we can rewrite
$\overline{\Wc}_{g,i}$ as
\begin{align}
\overline{\Wc}_{g,i}
&= \left\{\hbf_{g_k} : \frac{|(\hbf_{g_k}^H \Vbf_g)\ebf_{i}^{(g)}|^2}{
\|\hbf_{g_k}^H \Vbf_g\|^2} \ge \alpha^2 \right\}, ~~\because \gbf_{g_k}^H = \hbf_{g_k}^H \Vbf_g = \hbf_{g_k}^H \Ubf_g^* \\
&= \left\{\hbf_{g_k} : \frac{\lambda_{g,i}|\eta_{g_k,i}|^2}{
\sum_{m=1}^{r_g^\star}\lambda_{g,m} |\eta_{g_k,m}|^2 } \ge \alpha^2 \right\}, ~~~i=1,\cdots,r_g^*,
\label{eq:W_gi}
\end{align}
where $\lambda_{g,i}$ is the $i$-th largest eigenvalue of
$\Rbf_g$, and $\eta_{g_k,m}$ is the $m$-th element of
$\etabf_{g_k}$ given in the channel model
(\ref{eq:channel_linear_comb}$\sim$\ref{eq:model_channel_component}). This is because from (\ref{eq:channel_linear_comb}$\sim$\ref{eq:model_channel_component})
\begin{align}
\hbf_{g_k} &= \Ubf_{g} \Lambdabf_{g}^{1/2} \etabf_{g_k}=\sum_{i=1}^{r_g} \eta_{g_k,i} \sqrt{\lambda_{g,i}} \ubf_{g,i},\nonumber\\
\gbf_{g_k}^H &= \hbf_{g_k}^H \Vbf_g = \hbf_{g_k}^H \Ubf_g^*,  \nonumber\\
&= [\eta_{g_k,1}^*\sqrt{\lambda_{g,1}},
\eta_{g_k,2}^*\sqrt{\lambda_{g,2}}, \cdots,
\eta_{g_k,r_g^*}^*\sqrt{\lambda_{g,r_g^*}}].
\label{eq:append_ggbf_explicit}
\end{align}
Now consider
$\overline{\Vc}_g$ in \eqref{eq:overlineVggGamma}. This set can be rewritten as
\begin{align}
\overline{\Vc}_g &= \left\{
\hbf_{g_k} : \sum_{g' \neq g} \|\hbf_{g_k}^H \Vbf_{g'}\|^2 \le \frac{\epsilon}{r_g^*}
\right\} \\
&=\left\{
\hbf_{g_k} : \sum_{g' \neq g} \|\etabf_{g_k}^H\Lambdabf^{1/2}\underbrace{\Ubf_g^H \Vbf_{g'}}_{\mbox{\small see \eqref{eq:appendapproxBDUV}}}\|^2 \le \frac{\epsilon}{r_g^*}
\right\}, ~~~~(\Vbf_{g'}=\Ubf_{g'}^*) \\
&\stackrel{(a)}{=}\left\{
\hbf_{g_k} : \sum_{g' \neq g} \| \sum_{m=r_g^\star + 1}^{r_g} \eta_{g_k,m} \sqrt{\lambda_{g,m}}\xbf_{g,g'}^{(m)}  \|^2 \le \frac{\epsilon}{r_g^*}
\right\}.  \label{eq:append_overVgg_last}
\end{align}
Step (a) is by the approximate BD condition in Condition \ref{cond:approxBD} assumed for Theorem \ref{theo:OptimalityProof}, i.e., \cite{Adhikary&Caire:13arXiv}
\begin{equation}  \label{eq:appendapproxBDUV}
\Ubf_g^H \Vbf_{g'}=\Ubf_g^H \Ubf_{g'}^*
= \left[
\begin{array}{c}
{\bf 0}_{r_g^\star \times r_{g'}^\star} \\
\Xbf_{g,g'}
\end{array}
\right],
\end{equation}
where $\Xbf_{g,g'}$ is some matrix of size $(r_g - r_g^\star)
\times r_{g'}^\star$ which can be a non-zero matrix, and
$\xbf_{g,g',m}$  in \eqref{eq:append_overVgg_last} is the $m$-th
row vector of $\Ubf_g^H\Vbf_{g'}$. One key observation regarding
$\overline{\Wc}_{g,i}$ and $\overline{\Vc}_g$ is that  the event
of $\hbf_{g_k} \in \overline{\Wc}_{g,i}$ and the event of
$\hbf_{g_k} \in \overline{\Vc}_g$ are independent under the
approximate BD condition, because the former event depends only on
$\{\eta_{g_k,1},\cdots,\eta_{g_k,r_g^*}\}$,  the latter event
depends only on $\{\eta_{g_k,r_g^*+1},\cdots,\eta_{g_k,r_g}\}$,
and the random variables
$\eta_{g_k,1},\eta_{g_k,2},\cdots,\eta_{g_k,r_g}$ are i.i.d.
(Please see \eqref{eq:model_channel_component}.)

Now, we obtain a lower bound on  the complementary CDF (CCDF) of $\phi_{g_k}^i$ of user $g_k$:
\begin{align}
\text{Pr}\{\phi_{g_k}^i \ge z \} &\ge \text{Pr}\{\bar{\phi}_{g_k}^i \ge z\} \nonumber\\
&\stackrel{(a)}{=} \text{Pr}\{\bar{\phi}_{g_k}^i \ge z, \hbf_{g_k}
\in \overline{\Wc}_{g,i}, \hbf_{g_k} \in \overline{\Vc}_g\} \nonumber \\
&= \text{Pr}\{\hbf_{g_k}
\in \overline{\Wc}_{g,i}, \hbf_{g_k} \in \overline{\Vc}_g\}
 \cdot \text{Pr}\left\{\bar{\phi}_{g_k}^i \ge z | \hbf_{g_k}
\in \overline{\Wc}_{g,i}, \hbf_{g_k} \in \overline{\Vc}_g\right\} \nonumber \\
&\stackrel{(b)}{=}\text{Pr}\{\hbf_{g_k} \in \overline{\Wc}_{g,i},
\hbf_{g_k} \in \overline{\Vc}_g\}
 \text{Pr}\left\{\frac{\|\gbf_{g_k}\|^2}{
\frac{1}{\rho} + r_g^*\sum_{g' \neq g}
\|\hbf_{g_k}^H\Vbf_{g'}\|^2} \ge z \bigg| \hbf_{g_k}
\in \overline{\Wc}_{g,i}, \hbf_{g_k} \in \overline{\Vc}_g
\right\} \nonumber\\
&\stackrel{(c)}{\ge}\text{Pr}\{\hbf_{g_k} \in
\overline{\Wc}_{g,i}, \hbf_{g_k} \in \overline{\Vc}_g\}
 \text{Pr}\left\{\frac{\|\gbf_{g_k}\|^2}{
\frac{1}{\rho} + \epsilon} \ge z \bigg| \hbf_{g_k}
\in \overline{\Wc}_{g,i}, \hbf_{g_k} \in \overline{\Vc}_g
\right\} \nonumber\\
&\stackrel{(d)}{=}\text{Pr}\{\hbf_{g_k} \in \overline{\Wc}_{g,i},
\hbf_{g_k} \in \overline{\Vc}_g\}
 \text{Pr}\left\{\frac{\|\gbf_{g_k}\|^2}{
\frac{1}{\rho} + \epsilon} \ge z \bigg| \hbf_{g_k}
\in \overline{\Wc}_{g,i} \right\} \nonumber\\
&\stackrel{(e)}{=}\text{Pr}\{\hbf_{g_k} \in
\overline{\Wc}_{g,i}\}\text{Pr}\{\hbf_{g_k} \in \overline{\Vc}_g\}
 \text{Pr}\left\{\frac{\|\gbf_{g_k}\|^2}{
\frac{1}{\rho} + \epsilon} \ge z \bigg| \hbf_{g_k}
\in \overline{\Wc}_{g,i} \right\} \nonumber\\
&\overset{(f)}{\ge}\text{Pr}\{\hbf_{g_k} \in
\overline{\Wc}_{g,i}\}\text{Pr}\{\hbf_{g_k} \in \overline{\Vc}_g\}
\text{Pr}\left\{\|\gbf_{g_k}\|^2
 \ge z'\right\}, ~~z' = z(1/\rho + \epsilon). \label{eq:gen}
\end{align}
Here, (a) is because the events $\{\bar{\phi}_{g_k}^i \ge z\}$ and
$\{\bar{\phi}_{g_k}^i \ge z, \hbf_{g_k} \in \overline{\Wc}_{g,i},
\hbf_{g_k} \in \overline{\Vc}_g\}$ are the same for $z>0$ due to
the definition of $\bar{\phi}_{g_k}^i$; (b) follows because
conditioned on $\{\hbf_{g_k} \in \overline{\Wc}_{g,i}, \hbf_{g_k}
\in \overline{\Vc}_g\}$, $\bar{\phi}_{g_k}^i = \Rc(g_k)$; (c) is
valid because conditioned on $\{\hbf_{g_k} \in
\overline{\Vc}_g\}$, $\Rc(g_k) \ge \|\gbf_{g_k}\|^2/(1/\rho +
\epsilon)$; (d) is valid because the events $\{\hbf_{g_k} \in
\overline{\Wc}_{g,i}\}$ and $\{ \hbf_{g_k} \in \overline{\Vc}_g\}$
are independent, and the event $\frac{\|\gbf_{g_k}\|^2}{
\frac{1}{\rho} + \epsilon} \ge z$ is independent of $\{ \hbf_{g_k}
\in \overline{\Vc}_g\}$; (e) is valid because the events
$\{\hbf_{g_k} \in \overline{\Wc}_{g,i}\}$ and $\{ \hbf_{g_k} \in
\overline{\Vc}_g\}$ are independent; and finally (f) follows from
Lemma \ref{lem:multi_proof}.

For given $\alpha < 1$ and $\epsilon>0$,  define
$\zeta_{g,i}(\alpha,\epsilon)$  as
\begin{equation}
\zeta_{g,i}(\alpha,\epsilon):=\text{Pr}\{\hbf_{g_k} \in
\overline{\Wc}_{g,i}(\alpha)\}\text{Pr}\{\hbf_{g_k} \in
\overline{\Vc}_g(\epsilon)\} > 0.
\end{equation}
Note that $\zeta_{g,i}(\alpha,\epsilon) \in (0,1)$ is a positive
constant, when $\alpha < 1$ and $\epsilon>0$ are given, since we
have a strictly positive probability for the event $\{\hbf_{g_k}
\in \overline{\Wc}_{g,i}(\alpha)\}$ and a strictly positive
probability for $\{\hbf_{g_k} \in \overline{\Vc}_g(\epsilon)\}$.
Now, we define new i.i.d. random variables $\Psi_{g_k}$ for
$k=1,\cdots,K'$ that have the common complementary CDF (CCDF)
constructed as
\begin{equation}  \label{eq:append_mud_new_pdf}
\text{Pr}\{\Psi_{g_k} \ge z\} = \left\{
\begin{array}{ll}
\zeta_{g,i}(\alpha,\epsilon) \cdot \text{Pr}\{\|\gbf_{g_k}\|^2 \ge
z\}, &
 z   \ge z_\tau,  \\
\tilde{F}_C(z), &  z < z_\tau,
\end{array}
\right.
\end{equation}
where $\tilde{F}_C(z)$ is constructed arbitrarily such that
\eqref{eq:append_mud_new_pdf} is a CCDF. Then, the corresponding
CDF of \eqref{eq:append_mud_new_pdf} is given by
\begin{equation}  \label{eq:append_constructed_Fz}
F(z) = \left\{
\begin{array}{ll}
1 - \zeta_{g,i}(\alpha,\epsilon) \sum_{j=1}^{r_g^*} \frac{e^{-z/\lambda_{g,j}}}{\xi_{g,j}}, & z \ge z_\tau \\
1-\tilde{F}_C(z), &  z < z_\tau,
\end{array}
\right.
\end{equation}
since $\|\gbf_{g_k}\|^2$ is
$\chi_{\text{Gen}}^2(\lambda_{g,1},\cdots,\lambda_{g,r_g^*})$
defined in Definition \ref{def:gen_chi} (see
\eqref{eq:append_ggbf_explicit}), where the parameters
$\lambda_{g,1},\cdots, \lambda_{g,r_g^*}$ are the eigenvalues of
the channel covariance matrix $\Rbf_g$ in the channel model
(\ref{eq:channel_linear_comb} $\sim$
\ref{eq:model_channel_component}). The CDF
\eqref{eq:append_constructed_Fz} falls into the CDF class of
\eqref{eq:gen_cdf} and hence, we can apply Lemma \ref{lem:ext}.
Applying
 Lemma \ref{lem:ext}, we have
\begin{equation}
\text{Pr}\{\Psi_{\tilde{\kappa}_{g,i}} > u' \} \ge 1 -
O\left(\frac{1}{K'} \right)
\end{equation}
where $\Psi_{\tilde{\kappa}_{g,i}} :=
\max\{\Psi_{g_1},\cdots,\Psi_{g_{K'}}\}$ and $u' =
\lambda_{g,1}\log K' - \lambda_{g,1} \log \log K' +
\lambda_{g,1}\log \frac{\zeta_{g,i}(\alpha,\epsilon)}{\xi_1}$.
Therefore,  we obtain
\begin{align}
1 - O\left(\frac{1}{K'} \right) ~\le~
\text{Pr}\{\Psi_{\tilde{\kappa}_{g,i}} > u'\}
~&\overset{(a)}{\le}~
\text{Pr}\left\{\phi_{\tilde{\kappa}_{g,i}}^i > \frac{u'}{1/\rho +
\epsilon }
\right\}  \\
&\overset{(b)}{\le} \text{Pr}\{\phi_{\kappa_{g,i}}^i > u_g^i\},
\end{align}
where $u_g^i=\frac{u'}{1/\rho + \epsilon }$ (see
\eqref{eq:multiDivugi}). Here,  (a) follows from  the definition
of $\Psi_{g_k}$ and the  inequality \eqref{eq:gen}, and (b)
follows from the fact that $\kappa_{g,i} = \arg\max \phi_{g_k}^i$.
This concludes the proof. \hfill{$\blacksquare$}

\vspace{1em}

%%%%%%%%%%%%%%%%%%%%%%%%%%%%%%%%%%%%%%%%%%%%%%%%%%%%%%%%%%%%%%%%%%%%%%%%%%%
\begin{lemma} \label{lem:multi_proof}
$\text{Pr}\left\{\|\gbf_{g_k}\|^2 \ge z
\big| \hbf_{g_k} \in \overline{\Wc}_{g,i}\right\}
> \text{Pr}\{\|\gbf_{g_k}\|^2 \ge z\}$.
\end{lemma}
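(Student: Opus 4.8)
The plan is to separate the \emph{scale} of $\gbf_{g_k}$ from its \emph{direction}, since the conditioning event $\overline{\Wc}_{g,i}$ constrains only the direction. Writing $X_m := |\eta_{g_k,m}|^2$, the variables $X_1,\dots,X_{r_g^*}$ are i.i.d.\ unit-mean exponentials by \eqref{eq:model_channel_component}, and from \eqref{eq:append_ggbf_explicit} we have $\|\gbf_{g_k}\|^2 = \sum_{m=1}^{r_g^*}\lambda_{g,m}X_m$. I would introduce the total $T:=\sum_{m=1}^{r_g^*}X_m$ and the proportions $p_m:=X_m/T$. A standard property of i.i.d.\ exponentials is that $T$ (Gamma-distributed with shape $r_g^*$) is independent of the proportion vector $(p_1,\dots,p_{r_g^*})$, which is uniform on the probability simplex. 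Hence $\|\gbf_{g_k}\|^2 = T\,W$ with $W:=\sum_{m=1}^{r_g^*}\lambda_{g,m}p_m$, and $W$, being a function of the proportions alone, is independent of $T$.

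The first key observation is that, by \eqref{eq:W_gi}, the event $\overline{\Wc}_{g,i}=\{\lambda_{g,i}X_i\ge\alpha^2\sum_m\lambda_{g,m}X_m\}=\{\lambda_{g,i}p_i\ge\alpha^2 W\}$ depends on the proportions only, and is therefore independent of the scale $T$. Let $\bar G(\cdot)$ denote the (decreasing) complementary CDF of $T$. Conditioning first on the proportions and using $T\perp W$, the two sides of the claim become
\begin{equation}
\text{Pr}\{\|\gbf_{g_k}\|^2\ge z\mid\overline{\Wc}_{g,i}\}=\mathbb{E}\!\left[\bar G(z/W)\,\big|\,\overline{\Wc}_{g,i}\right],\qquad \text{Pr}\{\|\gbf_{g_k}\|^2\ge z\}=\mathbb{E}\!\left[\bar G(z/W)\right].
\end{equation}
Since $w\mapsto\bar G(z/w)$ is nondecreasing for every fixed $z\ge0$, the claim reduces to the purely directional statement that the conditional law of $W$ given $\overline{\Wc}_{g,i}$ stochastically dominates its unconditional law: for every increasing test function $\psi$, $\mathbb{E}[\psi(W)\mid\overline{\Wc}_{g,i}]\ge\mathbb{E}[\psi(W)]$, with the strict inequality in the lemma coming from the strict eigenvalue gaps in \eqref{eq:model_channel_eigenvalue}.

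The heart of the matter—and the step I expect to be the main obstacle—is establishing this directional stochastic dominance. The intuition is that $\overline{\Wc}_{g,i}$ forces a large proportion $p_i$ (with $\alpha\ge1/\sqrt2$ the constraint $\lambda_{g,i}p_i\ge\alpha^2 W$ is demanding), and since $W$ is a convex combination of the $\lambda_{g,m}$'s, concentrating mass on coordinate $i$ shifts this weighted average toward $\lambda_{g,i}$. To make this rigorous I would work on the simplex with the uniform density and, as in the extremization used for Lemma \ref{lemma:ConeDistance}, fix all but two proportions and transport mass into $p_i$ subject to the constraint, tracking the induced change in $W$ via a monotone coupling or Karush--Kuhn--Tucker argument. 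The delicate point is that loading $p_i$ simultaneously depletes the remaining coordinates, which carry the \emph{distinct} weights $\lambda_{g,m}$; verifying that the net effect on $W$ is in the correct direction is exactly where the ordering of $\{\lambda_{g,m}\}$ in \eqref{eq:model_channel_eigenvalue} must enter essentially, and where the argument is most sensitive—so this trade-off, rather than the scale--direction reduction above, is the genuine crux of the proof.
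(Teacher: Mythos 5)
Your scale--direction decomposition is correct and is a genuinely different, cleaner framing than the paper's: writing $\|\gbf_{g_k}\|^2 = T\,W$ with $T=\sum_m |\eta_{g_k,m}|^2$ Gamma-distributed and independent of the proportion vector, noting that $\overline{\Wc}_{g,i}$ is a direction-only event, and reducing the claim to $\mathbb{E}[\bar G(z/W)\mid \overline{\Wc}_{g,i}]\ge \mathbb{E}[\bar G(z/W)]$ are all valid steps. The gap is exactly where you located it, but it is worse than a missing argument: the stochastic dominance you reduce to is \emph{false} whenever $\lambda_{g,i}<\alpha^2\lambda_{g,1}$. On the event $\overline{\Wc}_{g,i}=\{\lambda_{g,i}p_i\ge\alpha^2 W\}$ one has $W\le \lambda_{g,i}p_i/\alpha^2\le \lambda_{g,i}/\alpha^2$, so the conditional law of $W$ is supported on $[\lambda_{g,r_g^*},\,\lambda_{g,i}/\alpha^2]$, while the unconditional $W$ exceeds $\lambda_{g,i}/\alpha^2$ with positive probability as soon as $\lambda_{g,i}/\alpha^2<\lambda_{g,1}$. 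Since Theorem \ref{theo:OptimalityProof} needs the lemma for every $i$, with $\alpha^2>1/2$ and strictly decreasing eigenvalues, this case is unavoidable: conditioning on a cone around a \emph{weak} eigendirection pushes the weighted average $W$ \emph{down} toward $\lambda_{g,i}$, not up, and no mass-transport or KKT argument will reverse that.

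Because your reduction is an identity rather than a bound, the same observation yields an outright counterexample to the inequality itself for weak directions: take $r_g^*=2$, $i=2$, $\lambda_{g,1}=1$, $\lambda_{g,2}=\epsilon$; on $\overline{\Wc}_{g,2}$ one has $\|\gbf_{g_k}\|^2\le \lambda_{g,2}|\eta_{g_k,2}|^2/\alpha^2 = \epsilon|\eta_{g_k,2}|^2/\alpha^2$, so $\text{Pr}\{\|\gbf_{g_k}\|^2\ge 1\mid \overline{\Wc}_{g,2}\}\to 0$ as $\epsilon\to 0$ while $\text{Pr}\{\|\gbf_{g_k}\|^2\ge 1\}\ge e^{-1}$. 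The paper's own proof takes a different route --- it sets $X=\lambda_{g,i}|\eta_{g_k,i}|^2$ and $Y=\sum_{m\ne i}\lambda_{g,m}|\eta_{g_k,m}|^2$, rewrites the cone event as $\{X+Y\ge \frac{1}{1-\alpha^2}Y\}$, and splits on $\{Y\ge z(1-\alpha^2)\}$ --- but the branch with small $Y$ is handled by identifying the conditioning event with $\{X+Y>z-\delta\}$, and that identification fails precisely in the regime your decomposition exposes (small $Y$ imposes essentially no lower bound on $X+Y$). So the obstacle you ran into is real rather than an artifact of your approach; a repaired statement would have to either restrict to the dominant direction $i=1$ or replace the unconditional right-hand side by something adapted to the cone, with corresponding changes propagated through Appendix \ref{sec:Appen-proof}.
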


\begin{proof}
Let $X:=\lambda_{g,i}|\eta_{g_k,i}|^2$ and $Y:=\sum_{m=1,m\neq
i}^{r_g^\star}
 \lambda_{g,m}|\eta_{g_k,m}|^2$. First, we represent the two
 events $\{\hbf_{g_k} \in \overline{\Wc}_{g,i}\}$ and $\{\|\gbf_{g_k}\|^2 \ge
 z\}$ in terms of $X$ and $Y$.
From \eqref{eq:W_gi}, we have
\begin{align}
\{\hbf_{g_k} \in \overline{\Wc}_{g,i}\}
&= \left\{\frac{X}{X+Y} \ge \alpha^2\right\} \\
&= \left\{X \ge \frac{\alpha^2}{1-\alpha^2}Y\right\} \\
&= \left\{X+Y \ge \frac{1}{1-\alpha^2}Y\right\},
\end{align}
and  $\{\|\gbf_{g_k}\|^2 \ge z\} = \{X+Y \ge z\}$. Thus, we have
{\small
\begin{align*}
&\text{Pr}\left\{\|\gbf_{g_k}\|^2 \ge z
\big| \hbf_{g_k} \in \overline{\Wc}_{g,i}\right\} \\
&= \text{Pr}\left\{X+Y \ge z \Big| X+Y \ge \frac{1}{1-\alpha^2}Y \right\} \\
&\overset{(a)}{=} \text{Pr}\left\{X+Y \ge z, Y \ge z(1-\alpha^2) \Big| X+Y
> \frac{1}{1-\alpha^2}Y\right\} + \text{Pr}\left\{X+Y \ge z, Y < z(1-\alpha^2) \Big| X+Y
> \frac{1}{1-\alpha^2}Y\right\}  \\
&\overset{(b)}{=} \text{Pr}\left\{Y \ge z(1-\alpha^2) \Big| X+Y > \frac{1}{1-\alpha^2}Y \right\} \cdot \text{Pr}\left\{X+Y \ge z \Big| X+Y > \frac{1}{1-\alpha^2}Y,
Y \ge z(1-\alpha^2)\right\}  \\
&~~~+ \text{Pr}\left\{Y < z(1-\alpha^2) \Big| X+Y > \frac{1}{1-\alpha^2}Y \right\}\cdot
\text{Pr}\left\{X+Y \ge z \Big| X+Y > \frac{1}{1-\alpha^2}Y,Y<z(1-\alpha^2) \right\} \\
&\overset{(c)}{\ge} \text{Pr}\left\{Y \ge z(1-\alpha^2) \Big| X+Y > \frac{1}{1-\alpha^2}Y \right\} + \text{Pr}\left\{Y < z(1-\alpha^2) \Big| X+Y > \frac{1}{1-\alpha^2}Y \right\}\text{Pr}\left\{X+Y \ge z\right\}  \\
&\overset{(d)}{\ge}  \left[\text{Pr}\left\{Y \ge z(1-\alpha^2) \Big| X+Y > \frac{1}{1-\alpha^2}Y \right\} + \text{Pr}\left\{Y < z(1-\alpha^2) \Big| X+Y > \frac{1}{1-\alpha^2}Y \right\}\right]\text{Pr}\left\{X+Y \ge z\right\}  \\
&= \text{Pr}\{X+Y \ge z\} = \text{Pr}\{\|\gbf_{g_k}\|^2 \ge z\}.
\end{align*}
} Here, (a) follows from the law of total probability:
\begin{equation}
\text{Pr}\{A|C\} = \text{Pr}\{A, B|C\} + \text{Pr}\{A, B^c|C\};
\end{equation}
(b) holds by Bayes' rule; (c) follows from the fact that {\small
\begin{equation}
\text{Pr}\left\{X+Y \ge z \Big| X+Y > \frac{1}{1-\alpha^2}Y,Y\ge z(1-\alpha^2)\right\}
 =\text{Pr}\left\{X+Y \ge z |X+Y >z \right\}=1
\end{equation}
}
and
{\small
\begin{align}
\text{Pr}\left\{X+Y \ge z \Big| X+Y > \frac{1}{1-\alpha^2}Y,Y<z(1-\alpha^2) \right\}
&= \text{Pr}\left\{X+Y \ge z \Big| X+Y > z-\delta \right\} \\
&= \frac{\text{Pr}\{X+Y \ge z, X+Y > z - \delta\}}{\text{Pr}\{X+Y > z - \delta\}} \\
&= \frac{\text{Pr}\{X+Y \ge z\}}{\text{Pr}\{X+Y > z - \delta\}} \\
&\ge \text{Pr}\{X+Y \ge z\}
\end{align}
} for some $\delta>0$; and (d) is valid because the first term in
the RHS is multiplied by $\text{Pr}\{X+Y \ge z\} \le 1$ from the
previous step.
\end{proof}

\end{appendices}

%%%%%%%%%%%%%%%%%%%%%%%%%%%%%%%%%%%%%%%%%%%%%%%%%%%

%%%%%%%%%%%%%%%%%%%%%%%%%%%%%%%%%%%%%%%%%%%%%%%%%%%

 \end{document}